\documentclass[journal, twocolumn]{IEEEtran}

\usepackage{graphicx,amssymb,amstext,amsmath,cite}
\usepackage{tikz}
\usepackage{amsthm}
\usepackage{kbordermatrix}



%
%


\newcounter{actr}
{\begin{list}{(\alph{actr})}{\usecounter{actr}}}{\end{list}}

\newcounter{ictr}
{\begin{list}{(\roman{ictr})}{\usecounter{ictr}}}{\end{list}}

\newtheorem{remark}{Remark}
\newtheorem{thm}{Theorem}
\newtheorem{lemma}{Lemma}
\newtheorem{claim}{Claim}
\newtheorem{corol}{Corollary}
\newtheorem{prop}{Proposition}
\newtheorem{defn}{Definition}

\newenvironment{new-proof}[1]
{{\em Proof }:\\}%
{ \noindent\qed }
%




\newcommand{\defeq}{\stackrel{\Delta}{=}}

\hyphenation{or-tho-nor-mal}
\hyphenation{wave-let wave-lets}






\newcommand{\mrm}{\mathrm}




\newcommand{\bA}{{\mathbf{A}}}

\newcommand{\bb}{{\mathbf{b}}}

\newcommand{\cB}{{\mathcal{B}}}

\newcommand{\bB}{{\mathbf{B}}}

\newcommand{\bc}{{\mathbf{c}}}

\newcommand{\cC}{{\mathcal{C}}}

\newcommand{\bd}{{\mathbf{d}}}
\newcommand{\bD}{{\mathbf{D}}}



\newcommand{\cF}{{\mathcal{F}}}

\newcommand{\cG}{{\mathcal{G}}}

\newcommand{\bI}{{\mathbf{I}}}

\newcommand{\cL}{{\mathcal{L}}}

\newcommand{\bM}{{\mathbf{M}}}

\newcommand{\cM}{{\mathcal{M}}}


\newcommand{\bR}{{\mathbf{R}}}

\newcommand{\bs}{{\mathbf{s}}}

\newcommand{\cS}{{\mathcal{S}}}

\newcommand{\bV}{{\mathbf{V}}}

\newcommand{\bx}{{\mathbf{x}}}

\newcommand{\bX}{{\mathbf{X}}}


\newcommand{\bt}{\boldsymbol{t}}

\newcommand{\eps}{\varepsilon}




\DeclareMathAlphabet{\mathbsf}{OT1}{cmss}{bx}{n}
\DeclareMathAlphabet{\mathssf}{OT1}{cmss}{m}{sl}

\DeclareSymbolFont{bsfletters}{OT1}{cmss}{bx}{n}
\DeclareSymbolFont{ssfletters}{OT1}{cmss}{m}{n}
\DeclareMathSymbol{\bsfGamma}{0}{bsfletters}{'000}
\DeclareMathSymbol{\ssfGamma}{0}{ssfletters}{'000}
\DeclareMathSymbol{\bsfDelta}{0}{bsfletters}{'001}
\DeclareMathSymbol{\ssfDelta}{0}{ssfletters}{'001}
\DeclareMathSymbol{\bsfTheta}{0}{bsfletters}{'002}
\DeclareMathSymbol{\ssfTheta}{0}{ssfletters}{'002}
\DeclareMathSymbol{\bsfLambda}{0}{bsfletters}{'003}
\DeclareMathSymbol{\ssfLambda}{0}{ssfletters}{'003}
\DeclareMathSymbol{\bsfXi}{0}{bsfletters}{'004}
\DeclareMathSymbol{\ssfXi}{0}{ssfletters}{'004}
\DeclareMathSymbol{\bsfPi}{0}{bsfletters}{'005}
\DeclareMathSymbol{\ssfPi}{0}{ssfletters}{'005}
\DeclareMathSymbol{\bsfSigma}{0}{bsfletters}{'006}
\DeclareMathSymbol{\ssfSigma}{0}{ssfletters}{'006}
\DeclareMathSymbol{\bsfUpsilon}{0}{bsfletters}{'007}
\DeclareMathSymbol{\ssfUpsilon}{0}{ssfletters}{'007}
\DeclareMathSymbol{\bsfPhi}{0}{bsfletters}{'010}
\DeclareMathSymbol{\ssfPhi}{0}{ssfletters}{'010}
\DeclareMathSymbol{\bsfPsi}{0}{bsfletters}{'011}
\DeclareMathSymbol{\ssfPsi}{0}{ssfletters}{'011}
\DeclareMathSymbol{\bsfOmega}{0}{bsfletters}{'012}
\DeclareMathSymbol{\ssfOmega}{0}{ssfletters}{'012}

\renewcommand{\defeq}{\triangleq}










\newcommand{\rvb}{{\mathssf{b}}}    



\newcommand{\rvbd}{{\mathbsf{d}}}


\newcommand{\rvf}{{\mathssf{f}}}    





\newcommand{\rvm}{{\mathssf{m}}}    





\newcommand{\rvs}{{\mathssf{s}}}    
\newcommand{\rvbs}{{\mathbsf{s}}}

\newcommand{\rvt}{{\mathssf{t}}}    
\newcommand{\rvbt}{{\mathbsf{t}}}







\newcommand{\rvz}{{\mathssf{z}}}    




\author{{Farrokh~Etezadi,  Ashish~Khisti and Mitchell~Trott}
\thanks{Farrokh Etezadi (fetezadi@comm.utoronto.ca) and Ashish Khist ({akhisti@comm.utoronto.ca}) are with the University of Toronto, Toronto, ON, Canada, Mitchell Trott is with HP Labs, Palo Alto, USA. 
The authors are listed alphabetically. This work was supported by an NSERC Discovery Research Grant and a Hewlett-Packard Innovation Research Program award.}}

\title{Sequential Coding of  Markov Sources over Burst Erasure Channels}

\begin{document}

\maketitle

\begin{abstract}
We  study  sequential coding of Markov sources under an error propagation constraint.
An encoder sequentially compresses a sequence of vector-sources that are spatially i.i.d.\ but temporally correlated according to a first-order Markov process. 
The channel erases up to $B$ packets in a single burst, but reveals all other packets to the destination. 
The destination is required to reproduce all the source-vectors   instantaneously and in a lossless manner, except those sequences that occur in an {\em error propagation window}  of length $B+W$ following the start of the erasure burst. We define  the rate-recovery function $R(B,W)$ --- the minimum achievable compression rate per source sample in this framework --- and develop  upper and lower bounds on this function. Our upper bound is obtained using a random binning technique, whereas our lower bound  is obtained by drawing  connections to multi-terminal source coding. Our upper and lower bounds coincide, yielding $R(B,W)$, in some special cases. More generally, both the upper and lower bounds equal the rate for predictive coding plus a term that decreases as $\frac{1}{W+1},$ thus establishing a scaling behaviour of the rate-recovery function.

For a special class of {\em semi-deterministic} Markov sources we propose a new optimal coding scheme:  {\em prospicient} coding. An extension of this coding technique to Gaussian sources is also developed. For the class of symmetric Markov sources and memoryless encoders, we establish the optimality of random binning. When the destination is required to reproduce each source sequence with a fixed delay and when $W=0$ we also establish the optimality of binning.
\end{abstract}

\begin{keywords}
Streaming source coding, Rate-distortion Theory, Sequential coding, Source coding, Video coding.
\end{keywords}

\section{Introduction}
\IEEEPARstart{T}rade-off between compression efficiency and error resilience is fundamental to any video compression system. In live video streaming, an encoder observes a sequence of correlated video frames and produces a compressed bit-stream that is transmitted to the destination. If the underlying channel is an ideal bit-pipe, it is well known that predictive coding~\cite{berger71} achieves the optimum compression rate. Unfortunately  packet losses are unavoidable in many emerging video distribution systems with stringent delay constraints. Predictive coding  is highly sensitive to such packet losses and can lead to a significant amount of error propagation. In practice various mechanisms have been engineered   to prevent such losses. For example video codecs use a group of picture (GOP) architecture, where intra-frames are periodically inserted to limit the effect of error propagation. Forward error correction codes can also be applied to compressed bit-streams to recover any  missing packets~\cite{tan,li:10}.  Modifications to predictive coding, such as leaky-DPCM~\cite{Huang:08,Connor:73}, have been proposed in the literature to deal with packet losses. The robustness of distributed video coding techniques in such situations has been studied in e.g.,~\cite{Sehgal,pradhanRamchandran:03}. 

Information theoretic analysis of video coding has received significant attention in recent times, see e.g.,~\cite{berger,ishwar,yang} and the references therein. These works focus primarily on the source coding aspects of video. The {\em source process} is modeled as a sequence of vectors, each of which is  spatially i.i.d.\ and temporally correlated.  The encoder is generally restricted to be either causal or having a limited look-ahead. The destination is required to output each source vectors in a sequential manner.  However all of these works assume an ideal channel with no packet losses. To our knowledge, even the effect of a single isolated packet loss is not fully understood~\cite{Wang:06}.

In this work, we study a fundamental trade-off between error propagation and compression rate in sequential source coding when the channel introduces packet losses. The encoder compresses the source-vector sequence in a causal manner and the receiver is required to recover each source sequence in an instantaneous and lossless manner. The channel introduces a burst of $B$ erasures and the destination is not required to recover $B+W$ source sequences following the start of the erasure burst. We introduce the {\em rate-recovery function} $R(B,W)$ --- the minimum achievable compression rate in this framework. Upper and lower bounds on this function are developed. The upper bound is obtained using a binning based scheme. The lower bound is obtained by drawing connections to a multi-terminal source coding problem. Conditions under which the upper and lower bounds coincide are discussed. In particular we establish that the rate-recovery function equals the predictive coding rate plus  a term that decreases as $\frac{1}{W+1}$, where $W$ is the length of the error propagation window. 

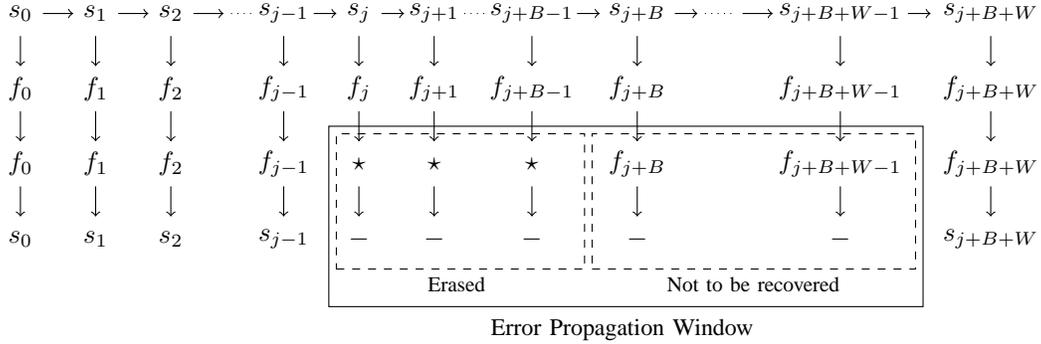
\begin{figure*}
\begin{center}
\vspace{1em}
\begin{tikzpicture}

\draw [white] (0,0) -- (0,0) node {$\color{black}s_0$};
\draw [->](0,-.3) -- (0,-.7);
\draw [white] (0,-1) -- (0,-1) node {$\color{black}f_0$};
\draw [->](0,-1.3) -- (0,-1.7);
\draw [white] (0,-2) -- (0,-2) node {$\color{black}f_0$};
\draw [->](0,-2.3) -- (0,-2.7);
\draw [white] (0,-3) -- (0,-3) node {$\color{black}s_0$};
\draw [->](0.3,0) -- (0.7,0);

\draw [white] (1,0) -- (1,0) node {$\color{black}s_1$};
\draw [->](1,-.3) -- (1,-.7);
\draw [white] (1,-1) -- (1,-1) node {$\color{black}f_1$};
\draw [->](1,-1.3) -- (1,-1.7);
\draw [white] (1,-2) -- (1,-2) node {$\color{black}f_1$};
\draw [->](1,-2.3) -- (1,-2.7);
\draw [white] (1,-3) -- (1,-3) node {$\color{black}s_1$};
\draw [->](1.3,0) -- (1.7,0);

\draw [white] (2,0) -- (2,0) node {$\color{black}s_{2}$};
\draw [->](2,-.3) -- (2,-.7);
\draw [white] (2,-1) -- (2,-1) node {$\color{black}f_2$};
\draw [->](2,-1.3) -- (2,-1.7);
\draw [white] (2,-2) -- (2,-2) node {$\color{black}f_2$};
\draw [->](2,-2.3) -- (2,-2.7);
\draw [white] (2,-3) -- (2,-3) node {$\color{black}s_2$};
\draw [->](2.3,0) -- (2.7,0);

\draw [dotted](2.8,0) -- (3.1,0);

\draw [white] (3.5,0) -- (3.5,0) node {$\color{black}s_{j-1}$};
\draw [->](3.5,-.3) -- (3.5,-.7);
\draw [white] (3.5,-1) -- (3.5,-1) node {$\color{black}f_{j-1}$};
\draw [->](3.5,-1.3) -- (3.5,-1.7);
\draw [white] (3.5,-2) -- (3.5,-2) node {$\color{black}f_{j-1}$};
\draw [->](3.5,-2.3) -- (3.5,-2.7);
\draw [white] (3.5,-3) -- (3.5,-3) node {$\color{black}s_{j-1}$};
\draw [->](3.9,0) -- (4.2,0);

\draw [white] (4.5,0) -- (4.5,0) node {$\color{black}s_{j}$};
\draw [->](4.5,-.3) -- (4.5,-.7);
\draw [white] (4.5,-1) -- (4.5,-1) node {$\color{black}f_{j}$};
\draw [->](4.5,-1.3) -- (4.5,-1.7);
\draw [white] (4.5,-2) -- (4.5,-2) node {$\color{black}\star$};
\draw [->](4.5,-2.3) -- (4.5,-2.7);
\draw [white] (4.5,-3) -- (4.5,-3) node {$\color{black}-$};
\draw [->](4.8,0) -- (5.1,0);

\draw [white] (5.5,0) -- (5.5,0) node {$\color{black}s_{j+1}$};
\draw [->](5.5,-.3) -- (5.5,-.7);
\draw [white] (5.5,-1) -- (5.5,-1) node {$\color{black}f_{j+1}$};
\draw [->](5.5,-1.3) -- (5.5,-1.7);
\draw [white] (5.5,-2) -- (5.5,-2) node {$\color{black}\star$};
\draw [->](5.5,-2.3) -- (5.5,-2.7);
\draw [white] (5.5,-3) -- (5.5,-3) node {$\color{black}-$};

\draw [dotted](5.9,0) -- (6.2,0);

\draw [white] (6.8,0) -- (6.8,0) node {$\color{black}s_{j+B-1}$};
\draw [->](6.8,-.3) -- (6.8,-.7);
\draw [white] (6.8,-1) -- (6.8,-1) node {$\color{black}f_{j+B-1}$};
\draw [->](6.8,-1.3) -- (6.8,-1.7);
\draw [white] (6.8,-2) -- (6.8,-2) node {$\color{black}\star$};
\draw [->](6.8,-2.3) -- (6.8,-2.7);
\draw [white] (6.8,-3) -- (6.8,-3) node {$\color{black}-$};
\draw [->](7.4,0) -- (7.7,0);

\draw [white] (8.2,0) -- (8.2,0) node {$\color{black}s_{j+B}$};
\draw [->](8.2,-.3) -- (8.2,-.7);
\draw [white] (8.2,-1) -- (8.2,-1) node {$\color{black}f_{j+B}$};
\draw [->](8.2,-1.3) -- (8.2,-1.7);
\draw [white] (8.2,-2) -- (8.2,-2) node {$\color{black}f_{j+B}$};
\draw [->](8.2,-2.3) -- (8.2,-2.7);
\draw [white] (8.2,-3) -- (8.2,-3) node {$\color{black}-$};
\draw [->](8.7,0) -- (9,0);

\draw [dotted](9.1,0) -- (9.5,0);

\draw [->](9.6,0) -- (10,0);
\draw [white] (10.9,0) -- (10.9,0) node {$\color{black}s_{j+B+W-1}$};
\draw [->](10.9,-.3) -- (10.9,-.7);
\draw [white] (10.9,-1) -- (10.9,-1) node {$\color{black}f_{j+B+W-1}$};
\draw [->](10.9,-1.3) -- (10.9,-1.7);
\draw [white] (10.9,-2) -- (10.9,-2) node {$\color{black}f_{j+B+W-1}$};
\draw [->](10.9,-2.3) -- (10.9,-2.7);
\draw [white] (10.9,-3) -- (10.9,-3) node {$\color{black}-$};

\draw [->](11.8,0) -- (12.1,0);
\draw [white] (12.9,0) -- (12.9,0) node {$\color{black}s_{j+B+W}$};
\draw [->](12.9,-.3) -- (12.9,-.7);
\draw [white] (12.9,-1) -- (12.9,-1) node {$\color{black}f_{j+B+W}$};
\draw [->](12.9,-1.3) -- (12.9,-1.7);
\draw [white] (12.9,-2) -- (12.9,-2) node {$\color{black}f_{j+B+W}$};
\draw [->](12.9,-2.3) -- (12.9,-2.7);
\draw [white] (12.9,-3) -- (12.9,-3) node {$\color{black}s_{j+B+W}$};

\draw [white] (5.8,-3.6) -- (5.8,-3.6) node {$\color{black}\textrm{\footnotesize{Erased}}$};
\draw [white] (9.75,-3.6) -- (9.75,-3.6) node {$\color{black}\textrm{\footnotesize{Not to be recovered}}$};


\draw [dashed](4.2,-3.4) rectangle (7.5,-1.6);
\draw [dashed](7.6,-3.4) rectangle (11.9,-1.6);
\draw (4.1,-3.9) rectangle (12,-1.5);
\draw [white] (8,-4.2) -- (8,-4.2) node {$\color{black}\textrm{\small{Error Propagation Window}}$};
\end{tikzpicture}
\caption{ Problem Setup: The encoder output $\rvf_j$ is a  function of  the source sequences up to time $j$ i.e., ${\rvs_0^n, \rvs_1^n,\ldots, \rvs_j^n}$. 
The channel introduces an erasure burst of length $B$.
The decoder produces $\hat{s}_j^n $ upon observing the sequence $g_0^j$.  As indicated decoder
is not required to produce those source sequences that fall in a window of length $B+W$ following the start of an erasure burst.}
\label{fig:setup}
\end{center}
\end{figure*}

We study special class of sources for which the binning based upper bound can be improved by exploiting the underlying structure. First we consider the linear semi-deterministic Markov source  and develop a new coding technique --- {\em prospicient coding} that meets the lower bound for all values of $B$ and $W$. In our proposed scheme, we first transform the linear semi-deterministic source into a simpler diagonally-correlated source.  For the latter class, we provide an explicit coding scheme that meets the lower bound.  We also extend the proposed coding technique to an i.i.d.\ Gaussian source process, where the receiver is required to recover source sequences in a sliding window of fixed length. Numerical results indicate significant improvements of the proposed coding scheme over techniques such as FEC based coding and naive binning. 

For the class of {\em symmetric sources}, with an additional assumption of a {\em memoryless encoder}, we establish the optimality of a binning based technique.  This is done by establishing a connection with another multi-terminal source coding problem --- the {\em Zig-Zag source network} with side information~\cite{berger:77}. For our streaming problem, we need to only lower bound the sum-rate for this network, which we do by exploiting the symmetric nature of the underlying sources.

As another extension, we consider the case when the decoder is allowed a fixed decoding delay of $T$ frames. When $W=0$ we again establish the optimality of binning.  For the converse, we introduce a periodic erasure channel of period $B+T+1$, where the first $B$ packets are erased. We argue that the decoder can recover each of the remaining source sequences by their deadline  and invoke the source coding theorem to find a lower bound on the rate-recovery function. 

The remainder of the paper is organized as follows. The problem setup is described in Section~\ref{sec:statement} and a summary of the main results is provided in Section~\ref{sec:Results}. Our upper and lower bounds on the rate-recovery function are established in section~\ref{sec:UBLB}. The prospicient coding scheme is described for the class of diagonally correlated deterministic sources in Section~\ref{sec:deterministic}, for the linear deterministic sources in Section~\ref{sec:LSD} and for Gaussian sources in Section~\ref{sec:Gauss}. The optimality of binning for symmetric sources is established in Section~\ref{sec:Symmetric} whereas the case of delay-constrained decoder is treated in section~\ref{sec:no_W}. Conclusions are provided in section~\ref{sec:concl}.

\section{Problem Statement}
\label{sec:statement}
In this section we describe the source and channel models as well as our notion of an error-propagation window and the associated rate-recovery function. 

\subsection{Source Model}
\label{subsec:Markov}
We consider a semi-infinite stationary vector source process $\{\rvs_t^n\}_{t\ge 0}$ whose\footnote{In establishing our coding theorems we assume that the source process starts at $t=-1$ (or before if required) and that all the sequences with a negative index are  revealed to the destination. We will also assume that the transmission terminates after a sufficiently long period.} symbols (defined over some finite alphabet $\cS$) are drawn independently across the spatial dimension and from a first-order Markov chain across the temporal dimension, i.e., for each $t \ge 1$,
\begin{multline}
\Pr(~\rvs_t^n = s_t^n~|~\rvs_{t-1}^n = s_{t-1}^n,~\rvs_{t-2}^n = s_{t-2}^n,\ldots)  \\= \prod_{j=1}^n p_{\rvs_1|\rvs_0}(s_{t,j}|s_{t-1,j}), \quad \forall t \ge 1.\label{eq:Markov}
\end{multline}
We assume that the underlying random variables $\{\rvs_t\}_{t\ge 0}$ constitute a time-invariant, stationary and a first-order stationary Markov chain with a common marginal distribution denoted by $p_\rvs(\cdot)$.  Such models are  used in earlier works on sequential source coding. See e.g.,~\cite{ishwar} for some justification. We remark that the results for the lossless recovery also generalize when the source sequence is a stationary process (not  necessarily \mbox{i.i.d.\ )} in the spatial dimension. However the extension to higher order Markov process appears non-trivial.


\subsection{Channel Model}
The channel introduces an erasure burst of size $B$, i.e. for some particular $j \ge 0$, it introduces an erasure burst such that  $g_{i}=\star$ for $i\in\{j,j+1,...,j+B-1\}$ and  $g_{i}=f_{i}$ otherwise i.e.,\begin{align}
g_i =\begin{cases}
\star, &i \in [j,j+1,\ldots, j+B-1]\\
f_i, &\text{ else}.
\end{cases}\label{eq:chModel}
\end{align}

\subsection{Rate-Recovery Function}
\label{subsec:rate-recovery}
A rate-$R$ causal encoder maps the sequence $\{\rvs_i^n\}_{i\ge 0}$ to an index $f_i \in [1,2^{nR}]$ according to some function \begin{align}\rvf_i=\mathcal{F}_i\left({s}^n_0,..., {s}^n_{i}\right) \label{eq:f-enc}\end{align} for each $i \ge 0$. For most of our discussion we will assume causal encoders. Furthermore, a {\em memoryless} encoder satisfies $\mathcal{F}_i\left({s}^n_0,..., {s}^n_{i}\right) = \cF_i(s_i^n)$ i.e., the encoder does  not use the knowledge of the past sequences. Naturally a memoryless encoder is very restricted and we will only use it to establish some special results. 

Upon observing the sequence $\{g_i\}_{i\ge 0}$  the decoder is required to perfectly recover all the source sequences  using decoding functions
\begin{align}
\hat{{s}}^{n}_i=\mathcal{G}_{i}(g_0, g_1, \ldots, g_{i}), \quad i \notin \{j, \ldots, j+B+W-1\}.
\end{align}
where $j$ denotes the time at which the erasure burst starts in~\eqref{eq:chModel}.
It is however not required to produce the  source sequences in the window of length $B+W$ following the start of an erasure burst.
We call this period the error propagation window.
The setup is shown in Fig.~\ref{fig:setup}.

A rate $R(B, W)$ is feasible if there exists a sequence of encoding and decoding functions and a sequence $\epsilon_{n}$ that approaches zero as $n \to \infty$ such that, $\Pr({s}_i^n \neq \hat{{s}_i}^{n})\leq\epsilon_{n}$ for all $i\notin\{j,..., j+B+W-1\}$. We seek the minimum feasible rate $R(B,W)$, which we define to be the {\em rate-recovery} function.

\section{Main Results}
\label{sec:Results}
In this section we discuss the main results of this paper.

\subsection{Upper and Lower Bounds}
\label{subsec:gen}
\begin{thm}
For any stationary first-order Markov source process the rate-recovery function satisfies $R^-(B,W) \le R(B,W)\le R^+(B,W)$ where
\begin{align}
R^{+}(B, W) &\!=\! H(\rvs_{1}|\rvs_0)+\frac{1}{W+1}I(\rvs_{B}~;~\rvs_{B+1}|\rvs_{0}), \label{eq:genUB}\\
R^-(B, W) &\!=\! H(\rvs_{1}|\rvs_0)+\frac{1}{W+1}I(\rvs_{B};\rvs_{W+B+1}|\rvs_{0}). \label{eq:genLB}
\end{align}
\label{thm:genUB_LB}
\end{thm}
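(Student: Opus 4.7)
The plan is to prove the upper bound by a two-layer random-binning construction and the lower bound by a Fano-based converse that invokes the multi-terminal source coding perspective mentioned in the introduction. Both halves pivot on the algebraic identity
\[
H\bigl(\rvs_{j+B}^n,\dots,\rvs_{j+B+W}^n\mid \rvs_{j-1}^n\bigr)
 = n\bigl[(W+1)H(\rvs_1|\rvs_0)+I(\rvs_B;\rvs_{B+1}|\rvs_0)\bigr],
\]
which follows from the chain rule and the first-order Markov property \eqref{eq:Markov} after writing $H(\rvs_{B+1}|\rvs_0)=H(\rvs_1|\rvs_0)+I(\rvs_B;\rvs_{B+1}|\rvs_0)$.

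For the upper bound, at each time $t$ the encoder independently places the block $\rvs_t^n$ into two random Slepian--Wolf bins: a ``base'' bin of rate $R_1 = H(\rvs_1|\rvs_0) + \delta$ and a ``refinement'' bin of rate $R_2 = \tfrac{1}{W+1}I(\rvs_B;\rvs_{B+1}|\rvs_0)+\delta$, transmitting the index pair as $f_t$. In the no-erasure regime, the decoder at time $t$ feeds the base bin together with the previously recovered $\rvs_{t-1}^n$ into a Slepian--Wolf decoder, which succeeds because $R_1>H(\rvs_t|\rvs_{t-1})=H(\rvs_1|\rvs_0)$. After a burst on $[j,j+B-1]$, the decoder at time $j+B+W$ holds $\rvs_{j-1}^n$ together with all $W+1$ bin-pairs drawn from $f_{j+B},\dots,f_{j+B+W}$ and runs joint typicality decoding of the tuple $(\rvs_{j+B}^n,\dots,\rvs_{j+B+W}^n)$ against $\rvs_{j-1}^n$. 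The aggregate rate $(W+1)(R_1+R_2)$ strictly exceeds the joint conditional entropy displayed above, so a standard union bound over confusable tuples drives the error probability to zero as $n\to\infty$.

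For the lower bound, fix a burst at $[j,j+B-1]$ and start from
\[
(W+1)nR\;\geq\;H\bigl(f_{j+B}^{j+B+W}\,\big|\,f_0^{j-1}\bigr)
\;\geq\;H\bigl(f_{j+B}^{j+B+W}\,\big|\,\rvs_0^{j-1,n}\bigr),
\]
the second step using that $\rvs_0^{j-1,n}$ is more informative than $f_0^{j-1}$. Introducing $\rvs_{j+B+W}^n$ via $H(X|Y)=H(Z|Y)+H(X|Y,Z)-H(Z|X,Y)$ and controlling the last term through Fano's inequality (the decoder is required to recover $\rvs_{j+B+W}^n$ from $f_0^{j-1},f_{j+B}^{j+B+W}$) yields
\[
(W+1)nR\;\geq\;nH(\rvs_{B+W+1}|\rvs_0)+H\bigl(f_{j+B}^{j+B+W}\,\big|\,\rvs_0^{j-1,n},\rvs_{j+B+W}^n\bigr)-o(n).
\]
Applying Fano a second time---now in the \emph{no-erasure} scenario, where each packet $f_t$ must enable predictive recovery of $\rvs_t^n$ from $f_0^{t-1}$---extracts a per-packet contribution of at least $nH(\rvs_1|\rvs_0)$ from the residual entropy. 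Reassembling the pieces with the Markov identity $I(\rvs_B;\rvs_{B+W+1}|\rvs_0)=H(\rvs_{B+W+1}|\rvs_0)-H(\rvs_{W+1}|\rvs_0)$ produces exactly $R^-(B,W)$.

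The main obstacle is extracting the sharp constant $I(\rvs_B;\rvs_{B+W+1}|\rvs_0)$ rather than the naive $H(\rvs_{B+W+1}|\rvs_0)$: a single Fano invocation on the burst scenario alone yields only $(W+1)R\geq H(\rvs_{B+W+1}|\rvs_0)$, which is strictly weaker than $R^-(B,W)$ whenever $H(\rvs_{W+1}|\rvs_0)>H(\rvs_1|\rvs_0)$. The fix requires viewing the $W+1$ post-burst packets as a single codeword that simultaneously serves two decoders with different side information---the burst decoder (recovering $\rvs_{j+B+W}^n$ alone) and the no-erasure decoder (recovering every intermediate $\rvs_t^n$)---and combining Fano estimates from the two channel realizations via the multi-terminal source coding reduction suggested in the theorem statement.
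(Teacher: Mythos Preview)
Your plan is essentially the paper's: Slepian--Wolf binning for achievability (the paper uses a single bin at rate $R^+$ rather than your two layers, but either works), and a two-Fano converse that combines the burst-recovery constraint on $\rvs_{j+B+W}^n$ with the no-erasure predictive constraint on the intermediate sources. The paper packages the converse inside a periodic erasure channel of period $B+W+1$ and averages over periods, but your single-burst version gives the same bound more directly.

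One technical step you left implicit: before invoking the no-erasure Fano on the residual $H\bigl(f_{j+B}^{j+B+W}\mid \rvs_0^{j-1,n},\rvs_{j+B+W}^n\bigr)$, you must further condition on the erased packets $f_j^{j+B-1}$ (equivalently on $\rvs_j^{j+B-1,n}$), because predictive recovery of $\rvs_t^n$ needs the \emph{full} history $f_0^{t-1}$, not just $f_0^{j-1}$. This extra conditioning only shrinks the residual (correct direction for a lower bound) and, after the Markov reduction to $\rvs_{j+B-1}^n$, yields $n\bigl[(W{+}1)H(\rvs_1|\rvs_0)-H(\rvs_{W+1}|\rvs_0)\bigr]$ --- not a flat $nH(\rvs_1|\rvs_0)$ per packet as your phrasing suggests. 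With that correction the arithmetic closes exactly as you indicate.
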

Notice that the upper and lower bound coincide for $W=0$ and $W \rightarrow\infty$, yielding the rate-recovery function in these cases. More generally we can interpret the term $H(\rvs_1|\rvs_0)$ as the rate associated with ideal predictive coding in absence of any erasures. Theorem~\ref{thm:genUB_LB} suggests that the rate-recovery function equals $H(\rvs_1|\rvs_0)$ plus a term that decreases as $\frac{1}{W+1}$. 

The upper bound is obtained using a  binning based scheme.  At each time the encoding function $f_i$ in~\eqref{eq:f-enc}  is  the bin-index of a Slepian-Wolf codebook~\cite{slepianWolf:73}. Following an erasure burst in $[j,j+B-1]$, the decoder collects $f_{j+B},\ldots, f_{j+W+B}$ and attempts to jointly recover all the underlying sources  at $t=j+W+B$.  The following corollary provides an alternate expression for the achievable rate and makes the connection to the binning technique more explicit. 
\begin{corol}
\label{corol:genUB}
For any first order Markov source process defined in Section~\ref{subsec:Markov}, the upper bound in~\eqref{eq:genUB} can also be expressed as
\begin{align}
R^+(B,W) = \frac{1}{W+1} H(\rvs_{B+1}, \rvs_{B+2}, \ldots, \rvs_{B+W+1}|s_{0}). \label{eq:genUB_Slepian-Wolf}
\end{align}
\end{corol}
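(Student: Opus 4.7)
The plan is purely algebraic: I would rewrite the joint conditional entropy on the right-hand side of~\eqref{eq:genUB_Slepian-Wolf} using the chain rule, collapse the conditioning via the first-order Markov property of Section~\ref{subsec:Markov}, and then invoke stationarity of $\{\rvs_t\}$ to identify terms. This should reproduce the sum $H(\rvs_1|\rvs_0) + \frac{1}{W+1} I(\rvs_B;\rvs_{B+1}|\rvs_0)$ from~\eqref{eq:genUB}, establishing the two expressions are equivalent.

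First I would apply the chain rule:
\begin{align}
H(\rvs_{B+1}, \ldots, \rvs_{B+W+1}\mid \rvs_0) = \sum_{k=1}^{W+1} H\bigl(\rvs_{B+k}\,\big|\, \rvs_0, \rvs_{B+1},\ldots,\rvs_{B+k-1}\bigr).
\end{align}
For each $k\ge 2$, the Markov property~\eqref{eq:Markov} allows me to drop every conditioning variable except the most recent, leaving $H(\rvs_{B+k}\mid \rvs_{B+k-1})$, and stationarity of the chain identifies this with $H(\rvs_1\mid \rvs_0)$. The $k=1$ term is simply $H(\rvs_{B+1}\mid \rvs_0)$. Therefore
\begin{align}
H(\rvs_{B+1},\ldots,\rvs_{B+W+1}\mid \rvs_0) = H(\rvs_{B+1}\mid \rvs_0) + W\,H(\rvs_1\mid \rvs_0).
\end{align}

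Next I would massage $H(\rvs_{B+1}\mid \rvs_0)$ so that the mutual information in~\eqref{eq:genUB} appears explicitly. Using the Markov property and stationarity again,
\begin{align}
H(\rvs_{B+1}\mid \rvs_0, \rvs_B) = H(\rvs_{B+1}\mid \rvs_B) = H(\rvs_1\mid \rvs_0),
\end{align}
so $I(\rvs_B;\rvs_{B+1}\mid \rvs_0) = H(\rvs_{B+1}\mid \rvs_0) - H(\rvs_1\mid \rvs_0)$. Substituting back and dividing through by $W+1$ yields
\begin{align}
\frac{1}{W+1}H(\rvs_{B+1},\ldots,\rvs_{B+W+1}\mid \rvs_0) = H(\rvs_1\mid \rvs_0) + \frac{1}{W+1}I(\rvs_B;\rvs_{B+1}\mid \rvs_0),
\end{align}
matching~\eqref{eq:genUB}. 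There is no real obstacle here: the corollary is a clean consequence of the chain rule, first-order Markovianity, and stationarity. The only care needed is to invoke stationarity correctly at arbitrary time shifts when identifying $H(\rvs_{B+k}\mid \rvs_{B+k-1})$ with $H(\rvs_1\mid \rvs_0)$, which is justified by the stationarity assumption stated in Section~\ref{subsec:Markov}.
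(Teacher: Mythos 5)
Your proposal is correct and follows essentially the same route as the paper's own proof in the appendix: chain rule on the joint conditional entropy, the first-order Markov property plus stationarity to collapse each term for $k\ge 2$ to $H(\rvs_1|\rvs_0)$, and the identity $I(\rvs_B;\rvs_{B+1}|\rvs_0)=H(\rvs_{B+1}|\rvs_0)-H(\rvs_{B+1}|\rvs_B,\rvs_0)$ with $H(\rvs_{B+1}|\rvs_B,\rvs_0)=H(\rvs_1|\rvs_0)$ (the paper realizes this same step as an add-and-subtract of $H(\rvs_{B+1}|\rvs_B,\rvs_0)$). No gaps; nothing further is needed.
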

The proof of Corollary~\ref{corol:genUB} is provided in Appendix~\ref{app:Cor1}. Although our framework assumes a single isolated erasure burst, we note that the coding scheme enables recovery  in the presence of multiple erasure bursts, provided there is a guard interval  of at least $W+1$ between these bursts. 

Our lower bound involves two key ideas that we illustrate below for the case when $W=1$ and $B=1$. First we develop the following equivalent expression of~\eqref{eq:genLB} which is easier to interpret:
\begin{align}
&R^-(B=1,W=1) = H(\rvs_1|\rvs_0)+ \frac{1}{2}I(\rvs_1;\rvs_{3}|\rvs_0)\\
&= H(\rvs_1|\rvs_0) + \frac{1}{2}H(\rvs_3|\rvs_0) -\frac{1}{2}H(\rvs_3 | \rvs_0, \rvs_1)\\
&=\frac{1}{2}H(\rvs_1, \rvs_2 | \rvs_0) + \frac{1}{2}H(\rvs_3 | \rvs_0) - \frac{1}{2}H(\rvs_3 | \rvs_1)\label{eq:MC_Prop1}\\
&= \frac{1}{2}H(\rvs_1|\rvs_0,\rvs_2) + \frac{1}{2}H(\rvs_3|\rvs_0)\label{eq:MC_Prop2}
\end{align}
where both~\eqref{eq:MC_Prop1} and~\eqref{eq:MC_Prop2} follow from the first-order Markov Chain property $\rvs_0 \rightarrow\rvs_1 \rightarrow \rvs_2 \rightarrow \rvs_3.$

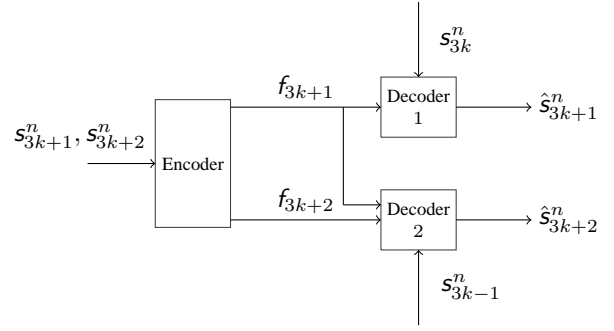
\begin{figure}
\begin{center}
\vspace{1em}
\begin{tikzpicture}

\def \a {0}{
\draw [white] (-1,-\a-.8) -- (-1,-\a-.8) node {$\color{black}\rvs^n_{3k+1},\rvs^n_{3k+2}$};
\fill [fill=white, draw=gray!50!black] (0,-.3-\a) -- (0,-2-\a) -- (1,-2-\a) -- (1,-.3-\a) -- (0,-.3-\a);
\draw [white] (.5,-\a-1.15) -- (.5,-\a-1.15) node {$\color{black}\scriptstyle\textrm{Encoder}$};
\draw [->] (-.9,-1.15) -- (0,-1.15);
\draw [->] (1,-.4) -- (3,-.4);
\fill [fill=white, draw=gray!50!black] (3,-.8-\a) -- (3,-\a) -- (4,-\a) -- (4,-.8-\a) -- (3,-.8-\a);
\draw [white] (3.5,-\a-0.25) -- (3.5,-\a-0.25) node {$\color{black}\scriptstyle\textrm{Decoder}$};
\draw [white] (3.5,-\a-0.55) -- (3.5,-\a-0.55) node {$\color{black}\scriptstyle{1}$};
\draw [->] (4,-.4) -- (5,-.4);
\draw [white] (5.5,-\a-0.4) -- (5.5,-\a-0.4) node {$\color{black}\hat{\rvs}^n_{3k+1}$};
\draw [white] (2,-\a-0.15) -- (2,-\a-0.15) node {$\color{black}\rvf_{3k+1}$};
\draw [->] (3.5,-\a+1) -- (3.5,-\a);
\draw [white] (4,-\a+.5) -- (4,-\a+.5) node {$\color{black}\rvs^n_{3k}$};
}
\def \a {1.5}{
\draw [->] (1,-.4-\a) -- (3,-.4-\a);
\fill [fill=white, draw=gray!50!black] (3,-.8-\a) -- (3,-\a) -- (4,-\a) -- (4,-.8-\a) -- (3,-.8-\a);
\draw [white] (3.5,-\a-0.25) -- (3.5,-\a-0.25) node {$\color{black}\scriptstyle\textrm{Decoder}$};
\draw [white] (3.5,-\a-0.55) -- (3.5,-\a-0.55) node {$\color{black}\scriptstyle{2}$};
\draw [->] (4,-.4-\a) -- (5,-.4-\a);
\draw [white] (5.5,-\a-0.4) -- (5.5,-\a-0.4) node {$\color{black}\hat{\rvs}^n_{3k+2}$};
\draw [white] (2,-\a-0.15) -- (2,-\a-0.15) node {$\color{black}\rvf_{3k+2}$};
\draw [->] (3.5,-\a-1.8) -- (3.5,-\a-.8);
\draw [white] (4.2,-\a-1.3) -- (4.2,-\a-1.3) node {$\color{black}\rvs^n_{3k-1}$};
}

\draw [-] (2.5,-.4) -- (2.5,-1.7);
\draw [->] (2.5,-1.7) -- (3,-1.7);

\end{tikzpicture}
\caption{A Multi-terminal Source Coding Problem related to the proposed streaming setup. The erasure at time $t=3k$ leads to two virtual decoders with different side information as shown.}
\label{fig:multiterminal}
\end{center}
\end{figure}
Our first idea is to introduce a {\em periodic erasure channel} where every third packet gets erased i.e., $g_k =\star$ for $t=3k$, $k=0,1,2,\ldots$. 
We claim that even on such a channel every third source sequence must be recovered.
Suppose the destination does not receive $f_0$ but observes $g_1 = f_1$ and $g_2 = f_2$. It must recover $\rvs_2^n$ at $t=2$. At this point, because of the Markov nature of the source process, it becomes synchronized with the encoder i.e., the effect of earlier erasures is no longer relevant. Thus it treats the new erasure at time $t=3$ as the only erasure it has observed so far. Upon receiving $f_4$ and $f_5$  it must recover $\rvs_5^n$ at $t=5$. More generally, it is able to recover $\rvs_{3k+2}^n$ at $t=3k+2$ upon sequentially observing $\{f_{3i+1}, f_{3i+2}\}_{0 \le i \le k}$ and missing $\{f_{3i}\}_{0 \le i \le k}$. From the source coding theorem we must have
\begin{align}
2kR  &\ge H(f_1, f_2, f_4, f_5, \ldots, f_{3k-2},f_{3k-1})\\
&\ge H(\rvs_2^n, \rvs_5^n,\ldots, \rvs_{3k-1}^n)\\
&\ge n(k-1) H(\rvs_{3}|\rvs_0)
\end{align}
which, upon taking $k\rightarrow \infty$ yields $R \ge \frac{1}{2}H(\rvs_3 | \rvs_0)$.

The above argument only takes into account one constraint --- when there is an erasure, the destination needs to 
recover the source sequences with $W=1$.   Hence it is missing the  term of $\frac{1}{2}H(\rvs_1|\rvs_0,\rvs_2)$  that appears in~\eqref{eq:MC_Prop2}. To recover this term, we need to take into account for the second constraint --- in absence of erasures the destination must recover each source sequence instantaneously. 

Our second key idea is to introduce a multi-terminal source coding problem with one encoder and two  decoders that simultaneously captures both these constraints. This is illustrated in Fig.~\ref{fig:multiterminal}. 
The encoder is revealed $(\rvs_{3k+1}^n, \rvs_{3k+2}^n)$ and produces outputs $f_{3k+1}$ and $f_{3k+2}$.
Decoder $1$  needs to recover $\rvs_{3k+1}^n$ given $f_{3k+1}$ and $\rvs_{3k}^n$ while decoder $2$ needs to recover $\rvs_{3k+2}^n$
given $\rvs_{3k-1}^n$ and $(f_{3k+1}, f_{3k+2})$. Thus decoder $1$ corresponds to the steady state behaviour of the system when there is no loss while decoder $2$ corresponds to the recovery immediately after an erasure. For the above multi-terminal problem, we establish a simple lower bound on the symmetric rate $R = \frac{1}{n}H(f_{3k-1})= \frac{1}{n}H(f_{3k})$ as follows:
\begin{align}
&2nR \ge H(f_{3k+1}, f_{3k+2}) \notag\\ &\ge H(f_{3k+1},f_{3k+2}|\rvs_{3k-1}^n)\\
&= H(f_{3k+1}, f_{3k+2}, \rvs_{3k+2}^n | \rvs_{3k-1}^n)\notag\\&\quad- H(\rvs_{3k+2}^n | f_{3k+1}, f_{3k+2}, \rvs_{3k-1}^n)\\
&\ge H(f_{3k+1}, \rvs_{3k+2}^n|\rvs_{3k-1}^n) - n\eps_n \label{eq:Fano_S3k}\\
&\ge H(\rvs_{3k+2}^n|\rvs_{3k-1}^n) + H(f_{3k+1}|\rvs_{3k+2}^n, \rvs_{3k-1}^n)-n\eps_n\notag\\
&\ge H(\rvs_{3k+2}^n|\rvs_{3k-1}^n) \!+\! H(f_{3k+1}|\rvs_{3k+2}^n, \rvs_{3k}^n, \rvs_{3k-1}^n)-n\eps_n\label{eq:Cond_Red}\\
&\ge H(\rvs_{3k+2}^n|\rvs_{3k-1}^n) \!+\! H(\rvs^n_{3k+1}|\rvs_{3k+2}^n, \rvs_{3k}^n, \rvs_{3k-1}^n)-2n\eps_n\label{eq:Fano_S3k2}\\
&\ge H(\rvs_{3k+2}^n|\rvs_{3k-1}^n) \!+\! H(\rvs^n_{3k+1}|\rvs_{3k+2}^n, \rvs_{3k}^n)-2n\eps_n\label{eq:Markov_S3k2}\\
&= nH(\rvs_3|\rvs_0) + nH(\rvs_1 |\rvs_2, \rvs_0) -2n\eps_n \label{eq:3k_der}
\end{align}
where~\eqref{eq:Fano_S3k} follows from the fact that $\rvs_{3k+2}^n$ must be recovered from $(f_{3k+1}, f_{3k+2}, \rvs_{3k-1}^n)$ at decoder $2$ hence Fano's inequality applies and~\eqref{eq:Cond_Red} follows from the fact that conditioning reduces entropy.
~\eqref{eq:Fano_S3k2} follows from Fano's inequality applied to decoder $1$ and finally~\eqref{eq:Markov_S3k2} follows from the Markov chain associated with the source process.  Dividing throughout by $n$ in~\eqref{eq:3k_der}  and taking $n\rightarrow\infty$ recover the desired lower bound~\eqref{eq:MC_Prop2}. 

To apply the above lower bound in the streaming setup, we need to take into account that the decoders have access to codeword indices rather than side-information sequences. Furthermore the encoder has access to all the past source sequences.  The formal proof of the lower bound is presented in Sec.~\ref{sec:UBLB}. While inspired by the above ideas, it is somewhat more direct.

\subsection{Linear Semi-Deterministic Markov Sources }
\label{det-model-main-results}
We propose a special class of source models --- linear semi-deterministic Markov sources --- for which the lower bound in~\eqref{eq:genLB} is tight.  
Our proposed coding scheme is most natural for a subclass of deterministic sources defined below.

\begin{defn}{(Linear Diagonally Correlated Deterministic Sources)}
\label{def:Diagonal}
The alphabet of a {\em linear diagonally correlated deterministic source} consists of $K$ sub-symbols i.e.,
\begin{multline}
\bs_i = (\bs_{i,0},\ldots, \bs_{i,K}) \in \cS_0 \times \cS_1 \times \ldots \times \cS_K, \label{eq:S_dec}
\end{multline}
where each $\cS_i = \{0,1\}^{N_i}$ is a binary sequence. 
Suppose that the sub-sequence $\{\bs_{i,0}\}_{i \ge 0}$ is an i.i.d. sequence sampled uniformly over $\cS_0$ and for $1 \le j \le K$, the sub-symbol $\bs_{i,j}$ is a linear deterministic function\footnote{All multiplication is over the binary field.} of $\bs_{i-1, j-1}$ i.e., 
\begin{equation}
\bs_{i,j} = \bR_{j,j-1} \cdot \bs_{i-1,j-1}, \qquad 1\le j \le K. \label{eq:S_dec_2}
\end{equation}
for fixed  matrices $\bR_{1,0}, \bR_{2,1} \ldots, \bR_{K, K-1}$ each of full row-rank i.e., $\mrm{rank}(\bR_{j,j-1}) = N_j$. 
\end{defn}

For such a class of sources we establish that the lower bound in Theorem~\ref{thm:genUB_LB} is tight and the binning based scheme is sub-optimal.

\begin{prop}
\label{prop:LDD}
For the class of Linear\footnote{The assumption of linearity in Def.~\ref{def:Diagonal} is not required to achieve the lower bound. However we use linearity to generalize to the class of semi-deterministic sources in Thm.~\ref{thm:SemiDet}.} Diagonally Correlated Deterministic Sources in Def.~\ref{def:Diagonal} the rate-recovery function is  given by:
\begin{align}
&R(B,W) = R^-(B,W) \notag\\
&\hspace{1cm}= H(\rvbs_1 | \rvbs_0) + \frac{1}{W+1} I(\rvbs_B; \rvbs_{B+W+1}|\rvbs_0)\label{eq:LDD}\\
&\hspace{1cm}=N_0+\frac{1}{W+1}\sum_{k=1}^{\min\{K-W,B\}}N_{W+k}\label{eq:LDD2}. 
\end{align}
\end{prop}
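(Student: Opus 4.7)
Given that $R(B,W)\ge R^-(B,W)$ is already supplied by Theorem~\ref{thm:genUB_LB}, the work splits into (i) evaluating \eqref{eq:LDD} in closed form to obtain \eqref{eq:LDD2}, and (ii) constructing an encoder--decoder pair that attains this rate. The scheme must beat the generic binning upper bound \eqref{eq:genUB}, so the converse cannot be matched by Slepian--Wolf binning alone; the improvement, dubbed \emph{prospicient} coding, is detailed in Section~\ref{sec:deterministic}.

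The entropy and mutual information in \eqref{eq:LDD} are both evaluated by peeling off deterministic dependencies. Conditioned on $\rvbs_0$, each sub-symbol $\bs_{t,j}$ with $t\ge 1$ is either a deterministic image of some $\bs_{0,j-t}\in\rvbs_0$ (when $j\ge t$), or a full-row-rank linear image of the fresh innovation $\bs_{t-j,0}$ (when $t-j\ge 1$); moreover the collection $\{\bs_{\tau,0}\}_{\tau\ge 1}$ is i.i.d.\ uniform. Consequently $H(\rvbs_1|\rvbs_0)=N_0$, coming solely from $\bs_{1,0}$. For $I(\rvbs_B;\rvbs_{B+W+1}|\rvbs_0)$, only the innovations shared between the two representations contribute, namely $\bs_{\tau,0}$ with $\tau\in[\max(1,B+W+1-K),B]$. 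For each such $\tau$, $\bs_{B+W+1,\,B+W+1-\tau}$ is a further full-row-rank linear image of $\bs_{B,\,B-\tau}$ (obtained by composing additional $\bR$ matrices), so the pair contributes $N_{B+W+1-\tau}$ bits. Re-indexing by $k=B+1-\tau$ converts the range $\tau\in[\max(1,B+W+1-K),B]$ into $k\in[1,\min\{K-W,B\}]$ with contribution $N_{W+k}$, which after dividing by $W+1$ and adding $N_0$ yields \eqref{eq:LDD2}.

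For the achievability, the prospicient encoder transmits $f_t=(\bs_{t,0},\bp_t)$ at time $t$, where the systematic part $\bs_{t,0}$ carries the $N_0$-bit fresh innovation and $\bp_t$ is a parity of size $\frac{1}{W+1}\sum_{k=1}^{\min\{K-W,B\}}N_{W+k}$. The parities are designed so that for every burst location $[j,j+B-1]$, the pooled parities $\bp_{j+B},\ldots,\bp_{j+B+W}$, combined with the pre-burst history and the post-burst innovations $\bs_{j+B,0},\ldots,\bs_{j+B+W,0}$, linearly determine the $N_{W+k}$-bit projections $\bR_{W+k,W+k-1}\cdots\bR_{1,0}\,\bs_{j+B-k,0}$ of the $B$ erased innovations for $k=1,\ldots,\min\{K-W,B\}$. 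These are precisely the unknown sub-symbols $\bs_{j+B+W,W+k}$ needed to reconstruct $\rvbs_{j+B+W}$; the remaining sub-symbols are already available deterministically from either pre-burst or post-burst material. Once $\rvbs_{j+B+W}$ has been recovered, the Markov structure resynchronizes the decoder with the encoder, and lossless recovery of all subsequent $\rvbs_t$ follows from the systematic transmission of $\bs_{t,0}$.

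The main obstacle is the explicit parity design. Because the required projections decouple across $k$ (the $k$-th projection depends only on $\bs_{j+B-k,0}$ through a fixed full-rank linear map), the parity payload naturally splits into $\min\{K-W,B\}$ parallel streams, the $k$-th of which spreads $N_{W+k}$ bits of information about the relevant innovation across $W+1$ consecutive slots by an MDS-type linear code over the binary field. Verifying that every length-$(W+1)$ window produces an invertible linear system in the required unknowns, independently of where the burst falls, reduces to a rank calculation that uses only the full-row-rank hypothesis on the $\bR_{j,j-1}$'s; the detailed construction and its correctness analysis are carried out in Section~\ref{sec:deterministic}.
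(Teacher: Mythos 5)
Your converse side is fine: Theorem~\ref{thm:genUB_LB} gives $R\ge R^-$, and your evaluation of $H(\rvbs_1|\rvbs_0)=N_0$ and of $I(\rvbs_B;\rvbs_{B+W+1}|\rvbs_0)=\sum_{k=1}^{\min\{K-W,B\}}N_{W+k}$ via the shared innovations is essentially the paper's own computation (the paper does it after reducing to $K=B+W$; your re-indexing handles general $K$ directly). The gap is in the achievability, which is the actual content of the proposition. You assert that the parity design ``decouples across $k$'' into $\min\{K-W,B\}$ parallel streams, each an MDS-type code spreading $N_{W+k}$ bits over $W+1$ slots, and that correctness ``reduces to a rank calculation using only the full-row-rank hypothesis,'' deferring details to Section~\ref{sec:deterministic} --- i.e.\ to the very proof you are supposed to supply. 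The decoupling claim is not correct as stated: for a burst $[j,j+B-1]$ and stream $k\ge 2$, the stream-$k$ parity sent at slots $j+B+1,\dots,j+B+k-1$ necessarily refers to innovations $\bs_{j+B-k+1,0},\dots,\bs_{j+B-1,0}$, which are themselves erased; so within a single stream you do not face a clean erasure-decoding problem with one unknown vector, and the per-stream budget of $N_{W+k}$ total parity bits cannot by itself absorb these nuisance unknowns. What saves the scheme --- and what your sketch never identifies --- is the diagonal redundancy: every such nuisance quantity is a \emph{further projection} of one of the needed unknowns $u_{k'}=\bR_{W+k',0}\bs_{j+B-k',0}$ (since $\bR_{W+k'+m,0}=\bR_{W+k'+m,W+k'}\bR_{W+k',0}$), so the full system of $\sum_k N_{W+k}$ equations closes over exactly the unknowns $(u_1,\dots,u_B)$ and must be solved \emph{jointly} across streams, not stream by stream.

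The paper's proof handles this quite differently: it applies a memoryless rearrangement $\bc_i=(\bs_{i,0},\bR_{W+1,1}\bs_{i,1},\dots,\bR_{W+B,B}\bs_{i,B})$ (so each $\bc_i$ carries the \emph{full} projections needed $W$ steps later) and then random Slepian--Wolf binning of $\bc_i^n$ at rate $R^-+\eps$; correctness is certified by the joint conditional entropy identity $H(\bc_i,\dots,\bc_{i-W}\mid \bs_{i-B-W-1})=(W+1)N_0+\sum_{k=1}^{B}N_{W+k}$, which is precisely where the ``further projection'' redundancy along diagonals enters. Your proposal captures the right target quantities (the sub-symbols $\bs_{j+B+W,W+k}$ and the resynchronization afterwards) and the right rate bookkeeping, but without either (i) the joint entropy computation that makes binning work, or (ii) an explicit joint linear design together with a proof of invertibility for every burst offset, the achievability step is missing rather than merely sketched. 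As written, the appeal to a routine rank argument from the full-row-rank hypothesis alone both misplaces the difficulty and is circular.
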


Sec.~\ref{sec:deterministic} provides the proof of Prop.~\ref{prop:LDD}. Our coding scheme exploits the special structure of such sources and achieves a  rate that is strictly lower than the binning based scheme.  We call this technique {\em prospicient coding} because it exploits non-causal knowledge of some future symbols. We make the following remark, which will be established in the sequel.
\begin{remark}
\label{rem:K}
In the proof of the coding theorem for Prop.~\ref{prop:LDD}, it suffices to consider the case when $K = B+W$. The extension to the case when $K < B+W$ is trivial and the extension to the case when $K> B+W$ also follows in a straightforward manner. 
\end{remark}

The proposed coding scheme can also be generalized to a broader class of semi-deterministic sources. 
\begin{defn}{(Linear Semi-Deterministic Sources)}
\label{def:SemiDet}
The alphabet of a linear semi-deterministic source\footnote{Since each sub-symbol is a (fixed length) binary sequence we use the bold-face font $\bs_{i,j}$ to represent it. Similarly since each source symbol is a collection of sub-symbols we use a bold-face font to represent it. This should  not be confused with a length $n$ source sequence at time $i$, which will be represented as $\bs_i^n$. } consists of two sub-symbols i.e., 
\begin{equation}
\bs_i = (\bs_{i,0},\bs_{i,1}) \in \cS_0 \times \cS_1,
\end{equation}
where each $\cS_i = \{0,1\}^{N_i}$ for $i=0,1$. The sequence $\{\bs_{i,0}\}$ is an i.i.d.\ sequence sampled uniformly over $\cS_0$ whereas \begin{equation}
\bs_{i,1} = \begin{bmatrix}\bA &\bB\end{bmatrix} \cdot \begin{bmatrix} \bs_{i-1,0} \\ \bs_{i-1,1} \end{bmatrix}\label{eq:SemiDet}
\end{equation} for some fixed matrices $\bA$ and $\bB$.
\end{defn}

We show that through a suitable  linear transform, that is both invertible and memoryless, this apparently more general source model can be transformed into a diagonally correlated deterministic Markov source. The propsicient coding can be applied to this class. 

\begin{thm}
\label{thm:SemiDet}
For the class of Linear Semi-Deterministic Sources in Def.~\ref{def:SemiDet} the rate-recovery function is given by:
\begin{multline}
\label{eq:SemiDet-Rate}
R(B,W) = R^-(B,W) \\= H(\rvbs_1 | \rvbs_0) + \frac{1}{W+1} I(\rvbs_B; \rvbs_{B+W+1}|\rvbs_0).
\end{multline}
\end{thm}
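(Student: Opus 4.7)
The converse follows from the general lower bound in Theorem~\ref{thm:genUB_LB}, so the task is achievability of the rate
\[
H(\rvbs_1|\rvbs_0) + \frac{1}{W+1}I(\rvbs_B;\rvbs_{B+W+1}|\rvbs_0).
\]
The plan, as hinted in the passage preceding the theorem, is to reduce the linear semi-deterministic source of Def.~\ref{def:SemiDet} to a linear diagonally correlated deterministic source in the sense of Def.~\ref{def:Diagonal} via an invertible, memoryless linear transformation, and then invoke Prop.~\ref{prop:LDD}.

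The first step is to unroll the update rule~\eqref{eq:SemiDet} for $K=B+W$ steps (see Remark~\ref{rem:K}) to obtain
\[
\bs_{i,1} = \sum_{k=1}^{K}\bB^{k-1}\bA\,\bs_{i-k,0} + \bB^{K}\bs_{i-K,1},
\]
which exhibits $\bs_{i,1}$ as a linear combination of the most recent $K$ innovations plus a tail. The second step is to perform a time-invariant change of basis on $\bs_{i,1}$ (possibly augmented by $\bs_{i,0}$) so that the transformed state decomposes into blocks indexed by the ``age'' of the innovation they encode, yielding the chain relation $\bs'_{i,j}=\bR'_{j,j-1}\bs'_{i-1,j-1}$ required by Def.~\ref{def:Diagonal}. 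A natural construction is a Kalman-style reachability decomposition of $(\bB,\bA)$: pick bases compatible with the nested flag
\[
\mathrm{Im}\,\bA \subseteq \mathrm{Im}\,[\bA,\bB\bA] \subseteq \cdots \subseteq \mathrm{Im}\,[\bA,\bB\bA,\ldots,\bB^{K-1}\bA],
\]
and identify block $j$ of the transformed state with the quotient that first appears at depth $j$.

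With the transform $\bT$ in hand, an encoder/decoder pair at rate $R$ for the transformed source $\{\bs'_i\}$ translates directly into a pair at rate $R$ for $\{\bs_i\}$ because $\bT$ is applied symbol-by-symbol across spatially i.i.d.\ samples and is invertible. Prop.~\ref{prop:LDD} then supplies an achievable scheme; invariance of joint entropies and mutual informations under bijective time-invariant maps lets me rewrite the resulting rate in the intrinsic form $H(\rvbs_1|\rvbs_0)+\frac{1}{W+1}I(\rvbs_B;\rvbs_{B+W+1}|\rvbs_0)$ claimed in~\eqref{eq:SemiDet-Rate}.

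The main technical obstacle is the construction in the second step. When $\bB$ is nilpotent the Krylov chain $\{\bB^{k-1}\bA\}_{k\ge 1}$ terminates and the cascade decomposition is elementary; in the general case $\bB$ may have a nontrivial invertible component so that contributions from the distant past persist in $\bs_{i,1}$ and cannot be captured by a finite chain. Resolving this requires splitting the state space into a reachable-within-$K$ part, which receives the cascade structure, and a complementary part that the decoder always recovers from the pre-erasure state $\bs_{j-1}$ using invertibility of the relevant block of $\bB^{K}$; verifying that the resulting transform is simultaneously invertible and memoryless is the crux of the argument.
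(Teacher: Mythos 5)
Your high-level plan is the paper's: converse from Theorem~\ref{thm:genUB_LB}, achievability by an invertible, memoryless, per-symbol linear map onto a diagonally correlated deterministic source, then Prop.~\ref{prop:LDD} (with $K=B+W$ as in Remark~\ref{rem:K}). The gap is that the central object --- the transform itself --- is never constructed; you explicitly defer ``verifying that the resulting transform is simultaneously invertible and memoryless'' as ``the crux,'' and that crux is exactly the content of the paper's Lemmas~\ref{lem:full-rank} and~\ref{lem:bf}. Your reachability-flag/quotient idea, once made concrete, delivers only the paper's intermediate form (Lemma~\ref{lem:full-rank}, Appendix~\ref{app:full-rank}): a staircase representation as in~\eqref{full-rank-rep} whose sub-diagonal blocks $\bR_{j,j-1}$ are full row-rank but whose dynamics still contain upper-triangular couplings from all older blocks. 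Quotients define the block map only modulo lower-depth terms; to turn this into an actual coordinate change you must pick cross-sections, and for a generic choice the couplings do not vanish. Eliminating them is a separate, nontrivial step --- the paper's Lemma~\ref{lem:bf} (Appendix~\ref{app:bf}) iteratively solves equations of the form $\tilde{\bR}_{l+1,l+k}-\bR_{l+1,l}\bX_{k,j}=\mathrm{0}$, which are solvable precisely because the staircase blocks are full row-rank. Without this step the transformed source does not satisfy Def.~\ref{def:Diagonal}, so Prop.~\ref{prop:LDD} cannot yet be invoked and the achievability argument is incomplete.

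Your handling of the non-nilpotent (``unreachable'') part of $\bB$ is also off the mark. In the staircase form that component is the block with $\bR_{K,K-1}=\mathrm{0}$, which evolves autonomously and is therefore a deterministic function of the initial condition $\bs_{-1}$, revealed to the destination by the problem setup; the paper simply eliminates it at zero rate (Step~0 of Appendix~\ref{app:bf}). No invertibility of any block of $\bB^{K}$ is needed --- the decoder propagates the (possibly singular) map forward from known data, never backward --- and the invertibility you invoke need not hold in general, so it cannot carry the argument. The remaining ingredients of your sketch are sound: the unrolled relation $\bs_{i,1}=\sum_{k=1}^{K}\bB^{k-1}\bA\,\bs_{i-k,0}+\bB^{K}\bs_{i-K,1}$ is correct, and once an invertible per-symbol map exists, rates transfer between the two sources and the intrinsic form of the rate in~\eqref{eq:SemiDet-Rate} follows from invariance of the entropy and mutual-information terms under the bijection, which matches the paper's own optimality argument.
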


The proof of Theorem~\ref{thm:SemiDet} is provided in Sec.~\ref{sec:LSD}.

\subsection{Gaussian Sources}
\label{sub:Gaussian}
Our proposed framework can be easily extended to a continuous valued source process with a fidelity measure. While a complete treatment of the lossy case is beyond the scope of the present paper, we study one natural extension of the diagonally correlated source model in Def.~\ref{def:Diagonal} to  Gaussian sources.  

Consider an Gaussian source process that is i.i.d.\ both in temporal and spatial dimensions.  
i.e., at time $i$, a sequence consisting of $n$ symbols $\rvs^n_{i}$, is sampled i.i.d.\ 
according to a zero mean unit variance Gaussian distribution $N(0,1)$. 

The encoder's output at time $i$ is denoted by the index $\rvf_{i} = \cF(\rvs_0^n, \ldots, \rvs_i^n) \in [1,2^{nR}]$ as before.  At time $i$,  upon receiving the channel outputs until time $i$, the decoder is interested in reproducing a collection of past $K$ sources. 

\begin{align}
\rvbt^n_{i}=\begin{pmatrix}
\rvs^n_{i}\\
\rvs^n_{i-1}\\
\vdots\\
\rvs^n_{i-K}\end{pmatrix}
\label{Gauss-model}
\end{align}
within a distortion vector $\bd=(d_0,d_1,\cdots,d_K)^{T}$. 

Thus for any $i \ge 0$ and $0 \le j \le K$,  if  $\hat{\rvs}_{i-j}^n$ is the reconstruction sequence of $\rvs_{i-j}^n$ at time $i$, we must have that $E\left[||\rvs_{i-j}^n- \hat{\rvs}_{i-j}^n||^2\right] \le nd_j$.  We will assume that ${d_0\le d_1 \le \cdots \le d_K}$ holds.  Furthermore as will be discussed in the coding theorem, it suffices to restrict $K = B+W$.
 
As before, the channel can introduce an  erasure-burst of length $B$ in an arbitrary interval $[j, j+B-1]$. The decoder is not required to output a reproduction of the sequences $\rvbt_i^n$ for $i \in [j, j+B+W-1]$. The {\em lossy rate-recovery function} denoted by $R(B, W, \bd)$ is the minimum rate required to satisfy these constraints.

\begin{thm}
For the Gaussian source model with a distortion vector $\bd = (d_0,\ldots, d_K)$ with $0 < d_i \le 1$,  the lossy recovery-rate function is given by\footnote{All logarithms are taken to base 2.}
\begin{align}
&R(B,W,\bd) \notag\\
&= \frac{1}{2}\log \bigg(\frac{1}{d_0}\bigg) + \frac{1}{W+1}\sum_{k=1}^{\min\{K-W,B\}}\frac{1}{2}\log \bigg(\frac{1}{d_{W+k}}\bigg). \label{eq:det-rate-Gaussian}
\end{align}
\label{thm:gauss-rate}
\end{thm}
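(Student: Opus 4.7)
The plan is to reduce this lossy Gaussian problem to the diagonally-correlated deterministic setup of Prop.~\ref{prop:LDD} by viewing the Gaussian source through a successive-refinement lens. For a unit-variance i.i.d.\ Gaussian source and any target distortions $0<d_0\le d_1\le\cdots\le d_K\le 1$, there exist nested quantizations $\rvs_i^{(0)},\rvs_i^{(1)},\ldots,\rvs_i^{(K)}$ of each source block $\rvs_i^n$ (via nested lattice codes or the Equitz--Cover construction) such that $\rvs_i^{(j)}$ achieves mean-squared distortion $d_j$, the coarser layer $\rvs_i^{(j)}$ is a deterministic function of the finer layer $\rvs_i^{(j-1)}$, and the per-sample rate of layer $j$ approaches $\frac12\log(1/d_j)$.

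For the achievability I would identify the layer $\rvs_{i-j}^{(j)}$ (the age-$j$ source at its prescribed quality $d_j$) with the sub-symbol $\bs_{i,j}$ of Def.~\ref{def:Diagonal}. As time advances from $i$ to $i+1$, the sub-symbol in slot $j+1$ becomes $\rvs_{i-j}^{(j+1)}$, which by the nested construction is a deterministic coarsening of $\rvs_{i-j}^{(j)}$, reproducing exactly the diagonal relation $\bs_{i+1,j+1}=\bR_{j+1,j}\cdot\bs_{i,j}$. After this identification the streaming problem is structurally identical to the diagonally-correlated deterministic source of Def.~\ref{def:Diagonal} with $N_j$ replaced by $\frac12\log(1/d_j)$, and applying the prospicient scheme of Sec.~\ref{sec:deterministic} verbatim to the quantized layers achieves the rate on the right-hand side of~\eqref{eq:det-rate-Gaussian}. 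The term $\frac12\log(1/d_0)$ transmits the finest layer of the current source at each time, and the overhead $\frac{1}{W+1}\sum_{k=1}^{\min\{K-W,B\}}\frac12\log(1/d_{W+k})$ supplies, through delayed descriptions spread over $W+1$ slots, the coarser layers needed to recover each source that fell inside a burst at its age-appropriate distortion.

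For the converse I plan to mirror the periodic-erasure argument behind the lower bound of Thm.~\ref{thm:genUB_LB}, replacing Fano's inequality by Gaussian rate-distortion bounds. I would introduce an erasure pattern of period $B+W+1$ in which the first $B$ packets of each period are erased and the remaining $W+1$ delivered, and then invoke the distortion constraint at the first deadline following each burst. Because the source is i.i.d.\ across time, the sources whose reconstruction constraints touch a given period are statistically independent, so the joint rate-distortion cost of the collection reconstructed per period decomposes into a sum of individual $\frac12\log(1/d_j)$ terms; the $W+1$ delivered codewords of each period must carry at least this much information, and after telescoping over many periods and exploiting stationarity one recovers the matching lower bound.

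The most delicate step will be the converse bookkeeping: in each period, the erased sources are reconstructed at ages ranging from $W+1$ up to $\min\{K,B+W\}$, and the mapping between source index and its deadline age must be established precisely so that the resulting rate-distortion sum matches the indices $W+1,\ldots,W+\min\{K-W,B\}$ appearing in~\eqref{eq:det-rate-Gaussian}. Once this alignment is set, the decoupling of the joint rate-distortion cost into a sum is an immediate consequence of the independence of the Gaussian source blocks across time, and the rest of the algebra parallels the Prop.~\ref{prop:LDD} converse. The achievability side is by contrast a clean corollary once the nested quantization layers are in place.
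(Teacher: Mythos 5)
Your achievability argument is essentially the paper's own: Gaussian successive refinability gives nested descriptions whose coarse layers are deterministic functions of finer ones, the aged layers are arranged into a source satisfying Def.~\ref{def:Diagonal}, and the prospicient scheme of Prop.~\ref{prop:LDD} is applied (the paper's Sec.~VI does exactly this; the only cosmetic difference is that it keeps full $d_0$-quality descriptions in slots $1,\ldots,W$ while you keep $d_j$-quality ones, which is immaterial since the rate in Prop.~\ref{prop:LDD} depends only on $N_0$ and $N_{W+1},\ldots$). The converse skeleton --- a periodic erasure channel of period $B+W+1$ with the first $B$ packets of each period erased, and rate--distortion bounds in place of Fano --- is also the paper's.

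However, the converse as you have sketched it has a genuine gap in the constraint bookkeeping. You propose to ``invoke the distortion constraint at the first deadline following each burst,'' and your final paragraph aligns only the indices $W+1,\ldots,W+\min\{K-W,B\}$. Using only the window constraint at that single deadline yields at best a per-period cost of $\sum_{j=0}^{B+W}\frac{1}{2}\log(1/d_j)$ spread over $W+1$ delivered packets, i.e.\ a bound of $\frac{1}{W+1}\sum_{j=0}^{B+W}\frac{1}{2}\log\bigl(1/d_j\bigr)$, in which the $d_0$ term carries weight $\frac{1}{W+1}$ and slots $1,\ldots,W$ enter at distortions $d_1,\ldots,d_W$; this is strictly weaker than \eqref{eq:det-rate-Gaussian} whenever $d_j>d_0$ for some $1\le j\le W$. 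To recover the undivided $\frac{1}{2}\log(1/d_0)$ you must additionally invoke the steady-state constraints at \emph{every} delivered time of the period: each of the $W+1$ unerased sources must be reconstructed at distortion $d_0$ from the packets received up to its own instant. This is precisely what the paper's Claim~\ref{claim:Gauss-Ent} does, lower bounding $H(\rvf_{kp+B}^{(k+1)p-1}\mid\rvf_0^{kp-1})$ by $\frac{n}{2}\sum_{i=1}^{B}\log(1/d_{W+i})+\frac{n(W+1)}{2}\log(1/d_0)$, with the erased sources counted once at their first-recovery distortions and the delivered sources counted once each at $d_0$ --- the lossy analogue of how the $(W+1)H(\rvs_1|\rvs_0)$ term is extracted in the lossless lower bound you say you will mirror. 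Your appeal to temporal independence does make the resulting decomposition clean (as in the paper's Appendix~E), but without the per-slot $d_0$ constraints the argument, as written, does not reach the claimed rate.
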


The coding scheme for the proposed model involves using a successively refinable code for each sequence $\rvs_i^n$ to produce $K+1$ layers and mapping the sequence of layered codewords to a diagonally correlated deterministic source. The proof of Theorem~\ref{thm:gauss-rate} is provided in Sec.~\ref{sec:Gauss}.

\begin{figure}
\centering
\includegraphics[scale=0.6]{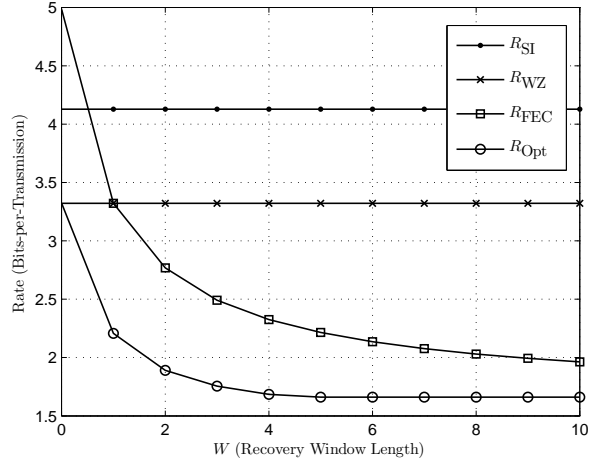}
\caption{Comparison of rate-recovery of sub-optimal systems to minimum possible rate-recovery function for different recovery window length $W$.}
\label{fig:comparison}
\end{figure} 

In Fig~\ref{fig:comparison}, the rate-recovery functions  of various schemes are compared. The sub-optimal schemes considered are \emph{Still-Image compression (SI)}, \emph{Wyner-Ziv Compression with delayed side-information (WZ)} and \emph{Predictive Coding plus FEC (FEC)} which are studied in detail in Sec.~\ref{subsec:comparison}. We assume $K=5$, ${B=2}$ and the distortion vector ${\bd=\{0.1,0.25,0.4,0.55,0.7,.85\}^{T}}$.  It can be observed from Fig~\ref{fig:comparison} that except when $W=0$ none of the other schemes are optimal. The \emph{Predictive Coding plus FEC} scheme, which is a natural { separation based scheme} is sub-optimal even for relatively large values of $W$.

\subsection{Symmetric Sources}
\label{subsec:symm}
A {\em symmetric}  source is defined as a Markov source such that the underlying Markov chain is also reversible i.e., the random variables  satisfy   $(\rvs_0,\ldots, \rvs_t) \stackrel{\text{d}}{=} (\rvs_t, \ldots, \rvs_0)$, where the equality is in the sense of distribution~\cite{markov}.  Of particular interest to us is the following property satisfied for each $t \ge 1$.
\begin{align}
p_{\rvs_{t+1}, \rvs_{t}} (s_a, s_b) = p_{\rvs_{t-1},\rvs_t}(s_a, s_b), \quad \forall s_a, s_b \in \cS\label{eq:symmetric}
\end{align}
i.e., we can ``exchange" the source pair $(\rvs_{t+1}^n,\rvs_t^n)$ with $(\rvs_{t-1}^n,\rvs_t^n)$ without affecting the joint distribution. 
An important class of sources that are symmetric are the binary sources: $\rvs_t^n = \rvs_{t-1}^n \oplus \rvz_t^n$, where $\{\rvz_t^n\}_{t\ge1}$ is  an i.i.d.\ binary source process (in both temporal and spatial  dimensions)  with the marginal distribution $\Pr(\rvz_{t,i}=0)=p$, the marginal distribution $\Pr(\rvs_{t,i}=0) = \Pr(\rvs_{t,i}=1) =\frac{1}{2}$ and $\oplus$ denotes modulo-2 addition. 

\begin{thm}
\label{thm:binning}
For the class of symmetric sources that satisfy~\eqref{eq:symmetric} the rate-recovery function, restricted to the class of memoryless encoders, is given by
\begin{align}
R(B,W) = \frac{1}{W+1} H(\rvs_{B+1}, \rvs_{B+2}, \ldots, \rvs_{B+W+1}|s_{0}).\label{eq:rate-rec-binning}
\end{align}
\end{thm}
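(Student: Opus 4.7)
The upper bound is immediate from Corollary~\ref{corol:genUB}, whose binning-based scheme already attains the rate $\frac{1}{W+1}H(\rvs_{B+1},\ldots,\rvs_{B+W+1}\mid \rvs_0)$ on any first-order Markov source, without leaning on either symmetry or memoryless encoding. All the work therefore lies in a matching converse for memoryless encoders on symmetric sources. My plan is to fix an arbitrary erasure burst $[j,j+B-1]$, isolate the $W+1$ codewords $f_{j+B},\ldots,f_{j+B+W}$ that follow the burst, and combine the per-codeword rate constraint $H(f_i)\le nR$ with Fano's inequality applied to the \emph{joint} post-erasure recovery of $(\rvs_{j+B}^n,\ldots,\rvs_{j+B+W}^n)$.

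The core inequality chain I would write down is
\begin{align*}
(W+1)nR &\ge H(f_{j+B},\ldots,f_{j+B+W}\mid f_0,\ldots,f_{j-1}) \\
&\ge H(\rvs_{j+B}^n,\ldots,\rvs_{j+B+W}^n\mid f_0,\ldots,f_{j-1}) - (W+1)n\epsilon_n \\
&\ge H(\rvs_{j+B}^n,\ldots,\rvs_{j+B+W}^n\mid \rvs_0^n,\ldots,\rvs_{j-1}^n) - (W+1)n\epsilon_n \\
&= n\,H(\rvs_{B+1},\ldots,\rvs_{B+W+1}\mid\rvs_0) - (W+1)n\epsilon_n,
\end{align*}
whose four steps use, in order, the sum-rate bound $\sum H(f_i)\le(W+1)nR$ together with ``conditioning reduces entropy''; an expansion of $H(f_{j+B},\ldots,f_{j+B+W}\mid f_0,\ldots,f_{j-1})$ into a mutual information, followed by Fano's inequality at the post-erasure decoder; the key memoryless-encoder observation that $(f_0,\ldots,f_{j-1})$ is a deterministic function of $(\rvs_0^n,\ldots,\rvs_{j-1}^n)$, so swapping in the raw past sources only reduces the conditional entropy further; and finally the first-order Markov plus stationarity properties together with the spatial i.i.d.\ structure of $\{\rvs_t^n\}$. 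Dividing by $n$ and letting $n\to\infty$ closes the converse.

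The step I expect to be most delicate is the third one, where I trade the decoder's true side information (past channel outputs) for richer side information (the raw past source blocks); this is the unique place in the argument that genuinely uses the memoryless encoder hypothesis, and it is also where the ``Zig-Zag source network'' interpretation flagged in the Introduction becomes natural --- each timestep's codeword plays the role of one Zig-Zag encoder, with one ``combined'' post-erasure decoder plus $W+1$ instantaneous per-time decoders whose side informations are the immediately preceding source blocks. Interestingly, the skeleton above does not appear to require the reversibility/symmetry of the chain, so I would sanity-check my plan against the authors' remark that symmetry is what lets them bound the sum-rate of this Zig-Zag network; I suspect they invoke the symmetric property to argue that the symmetric-rate vertex is tight within the full Zig-Zag rate region when individual-rate facets are also active. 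The practical obstacle in executing my own plan is therefore largely bookkeeping: tracking the $\epsilon_n$ terms across the chain, and confirming that the blanket constraint $H(f_i)\le nR$ is strong enough to bypass the individual-rate facets without separately invoking symmetry.
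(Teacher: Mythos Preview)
Your second inequality is the gap, and it is exactly the place where symmetry is needed. When the burst occupies $[j,j+B-1]$, the error-propagation window is $[j,j+B+W-1]$, so the \emph{only} source the decoder is obliged to reproduce from $(f_0^{j-1},f_{j+B}^{j+B+W})$ is $\rvs_{j+B+W}^n$. The intermediate sequences $\rvs_{j+B}^n,\ldots,\rvs_{j+B+W-1}^n$ lie inside the window and carry no Fano guarantee under this conditioning; the no-erasure constraint does force their recovery, but only from $f_0^{j+B},\ldots,f_0^{j+B+W-1}$, each of which contains the erased indices $f_j,\ldots,f_{j+B-1}$ that you have deliberately dropped. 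Hence
\[
H\bigl(\rvs_{j+B}^n,\ldots,\rvs_{j+B+W}^n \,\big|\, f_0^{j-1},\,f_{j+B}^{j+B+W}\bigr)\le (W+1)n\eps_n
\]
is not implied by the problem constraints, and your chain stalls. This is not a bookkeeping issue: if the argument went through as written it would prove $R\ge R^+(B,W)$ for \emph{every} first-order Markov source with a memoryless encoder, contradicting the prospicient examples in Section~\ref{sec:deterministic} where memoryless encoders beat $R^+$.

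The paper closes this gap precisely via symmetry. With a memoryless encoder one has $H(\rvs_{j+B+k}^n\mid \rvs_{j+B+k-1}^n,f_{j+B+k})\le n\eps_n$ for each $k$ (your step~3 idea, applied per coordinate). Lemma~\ref{lem:SI} then uses the reversibility $(\rvs_t,\rvs_{t-1})\stackrel{d}{=}(\rvs_t,\rvs_{t+1})$ to swap the side information, yielding $H(\rvs_{j+B+k}^n\mid \rvs_{j+B+k+1}^n,f_{j+B+k})\le n\eps_n$. Now one can chain \emph{backwards}: starting from $\rvs_{j+B+W}^n$ (recovered post-erasure) and peeling off $\rvs_{j+B+W-1}^n,\ldots,\rvs_{j+B}^n$ one at a time using $f_{j+B+W-1},\ldots,f_{j+B}$. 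That backward chain is what legitimises the joint-recovery step you wrote down, and it is unavailable without the symmetry hypothesis. Your instinct that the skeleton ``should'' need symmetry was correct; the missing idea is the side-information swap.
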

Note that the achievability follows immediately from~\eqref{eq:genUB_Slepian-Wolf}. Thus it only remains to show that the lower bound~\eqref{eq:genLB} can to be improved.  We have only been able to obtain  this improvement for the class of memoryless encoders. For the general encoder structure~\eqref{eq:f-enc} this remains an open problem. At first glance one may expect that when memoryless encoders are considered, the binning based scheme is always optimal. Interestingly this is not true. The prospicient encoders for the diagonally correlated source models in section~\ref{det-model-main-results} are memoryless and yet improve upon the binning based lower bound.  Our proof only applies to the class of symmetric sources. 

Our proof presented in Sec.~\ref{sec:Symmetric} involves an interesting connection to a multi-terminal source coding problem called zig-zag source coding~\cite{berger:77,zigZag,oohama:gaussian}. In particular we develop a simple approach to lower bound  the sum-rate of a zig-zag source coding network with symmetric sources that may be of independent interest. 

\subsection{Delay-Constrained Decoder}
\label{subsec:Delay}
As a variation of the problem setup in Section~\ref{sec:statement} where instantaneous recovery of each source sequence is desired, we consider a {\em delay-constrained} decoder in this section.  A receiver with a delay constraint of $T$ recovers 
\begin{multline}
\hat{{s}}^{n}_i=\mathcal{G}_{i}(g_0, g_1, ..., g_i, \ldots, g_{i+T}),\\ i \notin \{j, \ldots, j+B+W-1\}\end{multline} if an erasure-burst of length $B$ occurs in the interval $[j,j+B-1]$. The rest of the setup does not change. The rate-recovery function is a function of three parameters: $W$, $B$ and $T$ i.e., $R(B, W, T)$. Note that $T=0$ reduces to the case treated in the rest of the paper. 
\begin{thm}
\label{thm:no_W}
The rate-recovery function, when $W=0$ is given by ${R_0}(B, T) = R(B, W=0, T)$ where
\begin{align}
{R_0}(B, T) &=\frac{1}{T+1}H(s_{B+1}|s_{0})+\frac{T}{T+1}H(s_{1}|s_{0}) \label{eq:R0_Cap1}\\
&= \frac{1}{T+1} H(s_{B+1}, s_{B+2}, \ldots s_{B+T+1}| s_0 ).
\label{eq:R0_Cap}
\end{align}
\end{thm}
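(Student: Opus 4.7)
The plan is to establish the two sides of~\eqref{eq:R0_Cap} separately. For achievability I would use Slepian--Wolf binning exactly as in Corollary~\ref{corol:genUB} but with $W$ replaced by $T$: set $R = \frac{1}{T+1}H(\rvs_{B+1},\ldots,\rvs_{B+T+1}|\rvs_0)$ and let $\rvf_i$ be a random bin-index of $\rvs^n_i$ under a uniform partition into $2^{nR}$ bins. Decoding proceeds in two regimes: (i) in steady state the decoder already has $\rvs^n_{i-1}$ and recovers $\rvs^n_i$ directly from $\rvf_i$, which is feasible since the chain-rule identity $H(\rvs_{B+1},\ldots,\rvs_{B+T+1}|\rvs_0) = H(\rvs_{B+1}|\rvs_0) + T\cdot H(\rvs_1|\rvs_0)$ together with the data-processing inequality $H(\rvs_{B+1}|\rvs_0) \ge H(\rvs_1|\rvs_0)$ gives $R \ge H(\rvs_1|\rvs_0)$; (ii) immediately after a burst in $[j,j+B-1]$, by the delay-$T$ deadline $j+B+T$ of $\rvs^n_{j+B}$ the decoder has collected the $T+1$ indices $(\rvf_{j+B},\ldots,\rvf_{j+B+T})$ and jointly recovers the block $(\rvs^n_{j+B},\ldots,\rvs^n_{j+B+T})$ using $\rvs^n_{j-1}$ as side-information. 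The aggregate rate $(T+1)R$ matches the Slepian--Wolf joint threshold $H(\rvs_{B+1},\ldots,\rvs_{B+T+1}|\rvs_0)$; a subset-entropy check that exploits the Markov structure verifies that this symmetric rate lies inside the Slepian--Wolf region, so joint decoding succeeds with vanishing error. Because the $\ell$-th source in the block has deadline $j+B+\ell+T \ge j+B+T$, all deadlines are met simultaneously at the block-decode instant and regime (i) resumes.

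For the converse I would feed the system a periodic erasure channel of period $m = B+T+1$ that erases the first $B$ slots of every period. Since $W=0$, the error-propagation window coincides with the erasure burst, so for every $k \ge 0$ the decoder must recover the $T+1$ source blocks indexed by $A_k = \{km+B,\ldots,km+B+T\}$. Over $N$ periods the received codewords are $\{\rvf_i\}_{i\in\bigcup_{k=0}^{N-1}A_k}$, of aggregate rate at most $N(T+1)nR$. Taking the initialization $\rvs^n_{-1}$ as known side-information and applying Fano's inequality to the entire ensemble,
\begin{align}
N(T+1)nR &\ge H\bigl(\{\rvf_i\}_{i\in \cup_k A_k}\bigm|\rvs^n_{-1}\bigr)\notag\\
&\ge H\bigl(\{\rvs^n_i\}_{i\in \cup_k A_k}\bigm|\rvs^n_{-1}\bigr) - N(T+1)n\eps_n.\notag
\end{align}
The chain rule on the right-hand entropy produces $\sum_{k=0}^{N-1} H\bigl(\{\rvs^n_i\}_{i\in A_k}\bigm|\{\rvs^n_j\}_{j\in A_0\cup\cdots\cup A_{k-1}},\rvs^n_{-1}\bigr)$, and by the first-order Markov property each conditioning set collapses to the single most-recent sequence $\rvs^n_{km-1}$, which lies in $A_{k-1}$ because $(k-1)m+B+T=km-1$. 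Stationarity and the spatial i.i.d.\ structure then give $H\bigl(\{\rvs^n_i\}_{i\in A_k}\bigm|\rvs^n_{km-1}\bigr) = nH(\rvs_{B+1},\ldots,\rvs_{B+T+1}|\rvs_0)$, with the parallel identity using $\rvs^n_{-1}$ in the $k=0$ term. Summing, dividing by $N(T+1)n$, and letting first $N\to\infty$ and then $n\to\infty$ delivers the matching lower bound. The equivalence between~\eqref{eq:R0_Cap1} and~\eqref{eq:R0_Cap} is exactly the chain-rule identity invoked in regime (i).

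The step I expect to require the most care is the delay-deadline bookkeeping on the periodic channel: one must verify that by the time each $\rvs^n_i$ with $i\in A_k$ is due, the decoder has received a codeword set contained in $\bigcup_{k'}A_{k'}$, so that a single terminal Fano step applied to the whole ensemble $\{\rvs^n_i\}_{i\in \cup_k A_k}$ is legitimate. A clean fix is to run the channel until a horizon $L$ exceeding the last deadline in the first $N$ periods and then take $L,N\to\infty$, with the boundary terms absorbed into $\eps_n$ and an additive $O(nH(\rvs_0))$ constant that washes out under the normalization. Once this setup is in place, the Markov collapse and the stationarity-based calculation are routine.
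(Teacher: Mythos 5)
Your achievability argument coincides with the paper's (Slepian--Wolf binning, steady-state decoding from $\rvf_i$ and $\rvs^n_{i-1}$, joint block decoding of $(\rvs^n_{j+B},\ldots,\rvs^n_{j+B+T})$ at time $j+B+T$); the additional check that the symmetric rate point satisfies \emph{all} the Slepian--Wolf subset constraints, not merely the sum-rate constraint, is correct and in fact more careful than the paper's own verification, since the Markov structure makes the full-set constraint the binding one.

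The converse, however, has a genuine gap at the single ensemble-level Fano step. You assert that on the periodic erasure channel the decoder recovers every $\rvs^n_i$, $i\in\cup_k A_k$, from the unerased codewords alone, so that $H\bigl(\{\rvs^n_i\}_{i\in\cup_k A_k}\mid \{\rvf_i\}_{i\in\cup_k A_k},\rvs^n_{-1}\bigr)\le N(T+1)n\eps_n$. But the code is only required to work when the channel introduces a \emph{single} burst; on the periodic (multi-burst) channel no decoding guarantee applies. What the single-burst model gives you, via Fano, is $H\bigl(\rvs^n_i \mid \rvf_0^{km-1},\rvf_{km+B}^{i+T}\bigr)\le n\eps_n$ for $i\in A_k$, and this conditions on the entire prefix $\rvf_0^{km-1}$ --- including codewords in earlier periods' erased slots --- and, for the later sources of the block (whose deadlines $i+T$ extend past $km+B+T$), on codewords in period $(k+1)$'s erased slots as well. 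None of these are in your conditioning set $\{\rvf_i\}_{i\in\cup_k A_k}$, nor are they deterministic functions of it, so your Fano inequality does not follow; your proposed fix (extending the horizon $L$) only repairs the end-of-horizon deadline bookkeeping, not this issue. The paper closes exactly this gap with a sequential argument: in each period it invokes Fano twice under two \emph{different} single-burst hypotheses --- the $k$-th period's burst to recover the first source of the block from $(\rvs^n_{km-1},\rvf_{e_k}^{u_k})$, and the $(k{+}1)$-st period's burst to recover the remaining $T$ sources from $(\rvf_0^{u_k},\rvf_{e_{k+1}}^{u_{k+1}})$ --- and then uses the facts that codewords are deterministic functions of past source sequences and that the source is first-order Markov to replace conditioning on the full past by conditioning on the most recently recovered source block, so erased codewords never need to enter the conditioning. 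With that interleaved Fano/Markov bookkeeping, the count you performed, $N\,H(\rvs_{B+1}|\rvs_0)+(N-1)T\,H(\rvs_1|\rvs_0)$ up to vanishing terms, is exactly what emerges, and the converse is complete.
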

The minimum rate is achieved by applying a Slepian-Wolf code to each source sequence and jointly decoding the source sequences $(\rvs^n_{j+B}, \rvs^n_{j+B+1},\ldots, \rvs^n_{j+B+T} )$ at time $j+B+T$, following a burst
in-between $(j, j+1, \ldots, j+B-1)$.

The complete proof of Theorem.~\ref{thm:no_W} is provided in Sec.~\ref{sec:no_W}.

\section{General Upper and Lower Bounds on Rate-Recovery Function}
\label{sec:UBLB}
We first establish the achievability of $R^+(B,W)$ and then the lower bound $R^-(B,W)$ in Theorem~\ref{thm:genUB_LB}.
\subsection{Achievability}
From Corollary.~\ref{corol:genUB} it suffices to show that\begin{equation}
R^+  = \frac{1}{W+1} H(\rvs_{B+1}, \ldots, \rvs_{B+W+1}|\rvs_0) + \eps\label{eq:SW2} 
\end{equation}
is achievable for any arbitrary $\eps > 0$.

We use a Slepian-Wolf codebook $\cC_i$ which is generated by randomly partitioning the set of all typical sequences~\cite{csiszarKorner}  
$T_\eps^n(\rvs)$ into $2^{nR^+}$ bins. For each $i \ge 0$ the partitioning is done independently and all the partitions are revealed to the  decoder ahead of time. 

Upon observing $\rvs_i^n$ the encoder declares an error if $\rvs_i^n \notin T_\eps^n(\rvs)$. Otherwise it finds the bin to which $\rvs_i^n$ belongs to and sends the corresponding bin index $f_i$. We consider two cases for recovering at time $t=i$. First, suppose that the sequence $\rvs_{i-1}^n$ has already been recovered. Then the destination attempts to recover $\rvs_i^n$ from  $(f_i, \rvs_{i-1}^n)$. This succeeds with high probability if $R^+ > H(\rvs_1 |\rvs_0)$, which is guaranteed via~\eqref{eq:SW2}. Next suppose that $\rvs_{i-1}^n$ has not been recovered by the destination but $\rvs_i^n$ needs to be recovered. This only happens when $\rvs_i^n$ is the first sequence to be recovered after the erasure burst. In particular  the erasure burst must happen between ${[ i-B'-W,i-W-1]}$ for some $B' \le B$. The decoder thus has access to $\rvs_{i-B'-W-1}^n$, before the start of the erasure burst. Upon receiving $f_{i-W},\ldots, f_i$ the destination simultaneously attempts to recover $(\rvs_{i-W}^n,\ldots, \rvs_i^n)$ given $(\rvs_{i-B-W-1}^n, f_{i-W},\ldots, f_i)$. This succeeds with high probability if,
\begin{align}
(W+1)R &= \sum_{j=i-W}^i H(f_j)\notag\\
& > H(\rvs_{i-W},\ldots, \rvs_i|\rvs_{i-W-B-1})
\end{align}which is also guaranteed by~\eqref{eq:SW2}.

\subsection{Lower Bound}
\label{subsec:genLB}

Our proof is an extension of the intuition developed in section~\ref{subsec:gen}.

For any sequence of $(n,2^{nR})$ codes we show that there is a sequence $\eps_n$ that vanishes as $n \rightarrow 0$ such that
\begin{align}
R \ge H(\rvs_1|\rvs_0) + \frac{1}{W+1} I(\rvs_p; \rvs_B| \rvs_0) - (W+1)\eps_n \label{eq:rate_UB}
\end{align}
where throughout we let $p=B+W+1$.

We consider a periodic erasure channel of period $p$ where the first $B$ packets are erased i.e., 
for each $k\ge 0$, suppose that  an erasure happens at time interval $t=\{kp, kp+1, \ldots, kp+B-1\}$. 
Consider:
\begin{align}
&(W+1)n(t+1)R \notag\\ 
&= H(\rvf_{B}^{p-1},\rvf_{p+B}^{2p-1},\ldots, \rvf_{(t-1)p+B}^{tp-1}, \rvf_{tp+B}^{(t+1)p-1})\\
&= H(\rvf_{B}^{p-1})+ \sum_{k=1}^t H(\rvf_{kp+B}^{(k+1)p-1} | \rvf_{B}^{p-1},\rvf_{p+B}^{2p-1},\ldots, \rvf_{(k-1)p+B}^{kp-1})\notag\\
&\ge H(\rvf_{B}^{p-1})+ \sum_{k=1}^t H(\rvf_{kp+B}^{(k+1)p-1} | \rvf_{0}^{kp-1}) \label{eq:periodic_LB}.
\end{align}
where the last step follows from the fact that conditioning reduces entropy. 

We bound the term $H(\rvf_{kp+B}^{(k+1)p-1} | \rvf_{0}^{kp-1})$ for each $k \ge 1$.
By definition, the source sequence $s_{(k+1)p-1}^n$ must be recovered from 
$(f_{0}^{kp-1}, f_{kp+B}, f_{kp+B+1}, ..., f_{(k+1)p-1})$. Applying  Fano's inequality we have that 
\begin{align}
H(\rvs_{(k+1)p-1}^n| \rvf_0^{kp-1},\rvf_{kp+B}, \ldots,\rvf_{(k+1)p-1}) \le n\eps_n \label{eq:Fano2}
\end{align} and
\begin{align}
&H(\rvf_{kp+B}^{(k+1)p-1}~|~\rvf_{0}^{kp-1}) \ge H(\rvs_{(k+1)p-1}^n~|~\rvf_0^{kp-1}) + \notag\\ &\qquad H(\rvf_{kp+B}^{(k+1)p-2}~|~\rvs_{(k+1)p-1}^n,\rvf_{0}^{kp-1}) - n\eps_n\label{eq:FanoApp}, 
\end{align}
where~\eqref{eq:FanoApp} follows from applying Fano's inequality of~\eqref{eq:Fano2}. Now we bound each of the two terms in~\eqref{eq:FanoApp}. First we note that:
\begin{align}
&\quad H(\rvs_{(k+1)p-1}^n|\rvf_0^{kp-1}) \ge H(\rvs_{(k+1)p-1}^n|\rvf_0^{kp-1}, \rvs_{kp-1}^n)\\
&= H(\rvs_{(k+1)p-1}^n|\rvs_{kp-1}^n) = nH(\rvs_{p}|\rvs_0),\label{eq:Markov2}
\end{align}where the last step follows from the Markov relation $\rvf_0^{kp-1} \rightarrow \rvs_{kp-1}^n \rightarrow \rvs_{(k+1)p-1}^n$.

Furthermore the second term in~\eqref{eq:FanoApp} can be lower bounded using the following series of inequalities.
\begin{align}
& H\left(\rvf_{kp+B}^{(k+1)p-2}~\big|~\rvs_{(k+1)p-1}^n,\rvf_{0}^{kp-1}\right) \label{eq:step1}\\
&\ge H\left(\rvf_{kp+B}^{(k+1)p-2}~\big|~\rvs_{(k+1)p-1}^n,\rvf_{0}^{kp+B-1}\right) \label{eq:cond_f3k}\\
\!&\!\ge \!H\left(\!\rvf_{kp+B}^{(k+1)p-2},\! \rvs^n_{kp+B}, \!\ldots,\!\rvs_{(k+1)p-2}^n\!\big|\!\rvs_{(k+1)p-1}^n,\!\rvf_{0}^{kp+B-1}\!\right) \notag\\ &- H\left(\rvs_{kp+B}^n, \ldots,\rvs_{(k+1)p-2}^n\big|\rvs_{(k+1)p-1}^n,\rvf_{0}^{(k+1)p-2}\right)\\
&\!=\!H\left(\!\rvf_{kp+B}^{(k+1)p-2},\! \rvs_{kp+B}^n, \! \ldots,\!\rvs_{(k+1)p-2}^n\!|\!\rvs_{(k+1)p-1}^n,\rvf_{0}^{kp+B-1}\!\right) \notag\\ &\qquad- Wn\eps_n \label{eq:FanoApp2}\\
&\!\ge \!H\left(\!\rvs_{kp+B}^n\!,\! \ldots\!,\!\rvs_{(k+1)p-2}^n\!\big|\!\rvs_{(k+1)p-1}^n,\!\rvf_{0}^{kp+B-1}\!\right)- Wn\eps_n \notag\\
&\!\!\ge\!\! H\!\left(\!\rvs_{kp+B}^n\!,\! \ldots\!,\!\rvs_{(k+1)p-2}^n\!\big|\!\rvs_{(k+1)p-1}^n\!,\rvf_{0}^{kp+B-1},\! \rvs_{kp+B-1}^n\!\right) \notag\\ &\qquad- Wn\eps_n \notag\\
&\!\!= \!\!H\!\left(\!\rvs_{kp+B}^n, \!\rvs_{kp+B+1}^n,\! \ldots,\!\rvs_{(k+1)p-2}^n\!\big|\!\rvs_{(k+1)p-1}^n,\! \rvs_{kp+B-1}^n\!\right) \notag\\ &\qquad- Wn\eps_n\label{eq:CondMarkov}\end{align}\begin{align}
&\!=\!\!n H(\rvs_{B+1}\!,\!\rvs_{B+2}\!,\!\ldots,\!\rvs_{p-1}\!|\!\rvs_B,\rvs_p\!)\!-\! Wn\eps_n\notag\\
&= \!n H(\rvs_{B+1}, \rvs_{B+2}, \ldots, \rvs_{p-1},\rvs_p |\rvs_B)\!\!- \!\!n H(\rvs_p |\rvs_B)\! -\! Wn\eps_n\notag\\
&= n (W+1)H(\rvs_1 |\rvs_0)- n H(\rvs_p |\rvs_B) - Wn\eps_n, \label{eq:term2}
\end{align} 
where~\eqref{eq:FanoApp2} follows from the fact that $\{\rvs_{kp+B+1}^n, \ldots,\rvs_{(k+1)p-2}^n\}$ must be decoded from $\rvf_0^{(k+1)p-2}$ and hence Fano's inequality again applies and~\eqref{eq:CondMarkov}  follows from the fact that 
\begin{align}
\rvf_0^{kp+B-1}\rightarrow \rvs_{kp+B-1}^n\rightarrow (\rvs_{kp+B}^n, \ldots,\rvs_{(k+1)p-2}^n).
\end{align} Combining~\eqref{eq:FanoApp},~\eqref{eq:Markov2} and~\eqref{eq:term2} we have that
\begin{align}
&H\!\left(\rvf_{kp+B}^{(k+1)p-1}~\big|~\rvf_{0}^{kp-1}\right)\notag\\  
&\ge n H(\rvs_p |\rvs_0) + n (W+1)H(\rvs_1 |\rvs_0)- n H(\rvs_p |\rvs_B)\notag\\
&\hspace{1cm} - (W+1)n\eps_n
\label{eq:LB_term}
\end{align}Finally substituting~\eqref{eq:LB_term} into~\eqref{eq:periodic_LB} we have that,
\begin{align}
&(W+1)n(t+1)R \notag\\ 
&\ge H(\rvf_1^{p-1}) - (W+1)n\eps_n + n t H(\rvs_{p}|\rvs_0) \notag\\ &\qquad  + nt (W+1)H(\rvs_1 |\rvs_0)- nt H(\rvs_p |\rvs_B) \label{eq:LB_bound}\\
&= H(\rvf_1^{p-1}) - (W+1)n\eps_n \notag\\ &\qquad + n t \left((W+1)H(\rvs_{1}|\rvs_0) +I(\rvs_{p};\rvs_{B}|\rvs_{0})\right)
\end{align} 
As we take $n\rightarrow \infty$ and then $t\rightarrow \infty$ we recover~\eqref{eq:rate_UB}.

\section{Diagonally Correlated Deterministic Sources}
\label{sec:deterministic}
We establish Prop.~\ref{prop:LDD} in this section.

\begin{figure*}
\vspace{1em}
\begin{center}
\input{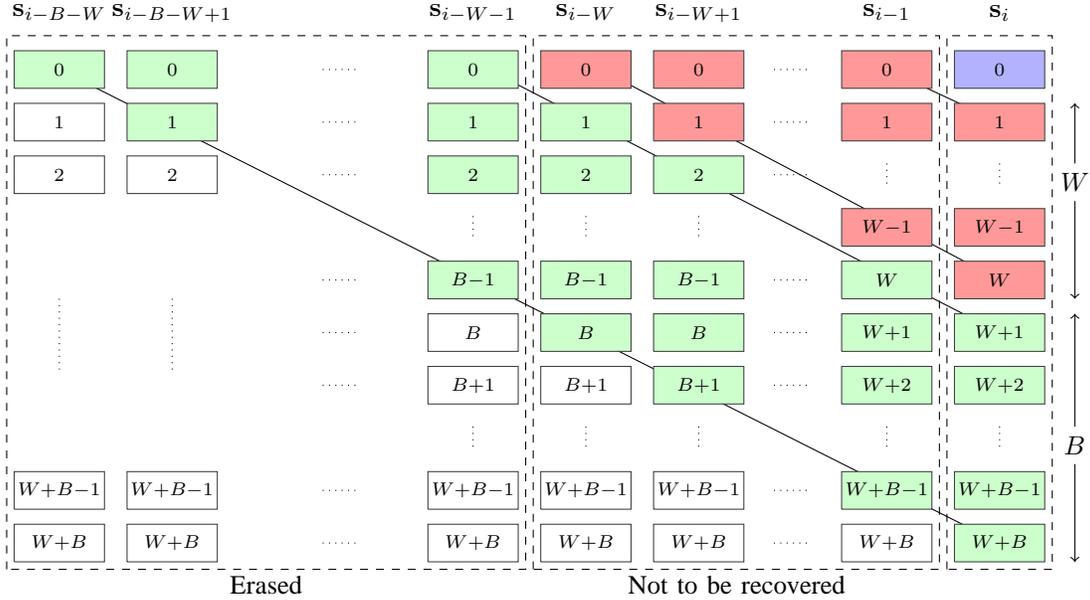}
\end{center}
\caption{Schematic of Diagonally Correlated Deterministic Markov Source. The first row of sub-symbols are innovation symbols. They are generated independently of all past symbols. On each diagonal the sub-symbol is a deterministic function of the sub-symbols above it.}
\label{Fig-Diag-Source}
\end{figure*}

\subsection{Source Model}
We  consider the semi-deterministic source model with a special diagonal correlation structure as described in Def.~\ref{def:Diagonal}. 
The diagonal correlation structure appears to be the most natural structure to consider in developing insights into our proposed coding scheme. 
As we will see later in Theorem~\ref{thm:SemiDet}, the underlying coding scheme can also be generalized to a broader class of linear semi-deterministic sources. Furthermore this class of semi-deterministic sources also provides a solution to the Gaussian source model as discussed in Theorem~\ref{thm:gauss-rate}.

We first provide an alternate characterization of the sources defined in Def.~\ref{def:Diagonal}. Let us define,

\begin{align}
\bR_{k,l}=\bR_{k,k-1}\bR_{k-1,k-2}\ldots\bR_{l+2,l+1}\bR_{l+1,l}\label{Prod},
\end{align}
where $k>l$. Note that since each $\bR_{j,j-1}$ is assumed to have a full row-rank (c.f. Def.~\ref{def:Diagonal}) the matrix $\bR_{k,l}$ is a $N_k \times N_l$ full-rank matrix of rank $N_k$. From Def.~\ref{def:Diagonal}

\begin{align}
\bs_{i}=\begin{pmatrix}
\bs_{i,0}\\
\bR_{1,0}\bs_{i-1,0}\\
\bR_{2,0}\bs_{i-2,0}\\
\vdots\\
\bR_{K,0}\bs_{i-K,0}
\end{pmatrix},\label{model-s}
\end{align}
where $\{\bs_{i-K,0}, \bs_{i-K+1,0},..., \bs_{i,0}\}$ are innovation sub-symbols of each source. This is expressed in Fig.~\ref{Fig-Diag-Source}. Any diagonal in Fig.~\ref{Fig-Diag-Source} consists of the same set of innovation bits.
In particular the innovation bits are introduced on the upper-left most entry of the diagonal. As we traverse down, each sub-symbol consists of some fixed linear combinations of these innovation bits. Furthermore the sub-symbol $\bs_{i,j}$ is completely determined given the sub-symbol $\bs_{i-1, j-1}$. 

In this section, we first argue that analyzing the coding scheme for the case $K=B+W$ is sufficient. Then we explain the prospicient coding scheme which achieves the rate specified in~\eqref{eq:LDD2}. Finally, the proof of the rate-optimality of the prospicient coding scheme is provided by establishing the equality of the rate expression~\eqref{eq:LDD2} and the general lower bound in~\eqref{eq:LDD}.  

\subsection{Sufficiency of $K=B+W$ }
\label{subsec:K_suff}
We first argue that for our coding scheme, it suffices to  assume that each source symbol $\bs_i$ consists of one innovation sub-symbol and a total of ${K=B+W}$ deterministic symbols. In particular when ${K<B+W}$, by simply adding ${K-B-W}$ zeros, the source can be turned into a source with ${B+W}$ deterministic sub-symbols. 

For the case ${K>B+W}$ we argue that it suffices to construct a coding scheme with $K=B+W$. The remainder of the sub-symbols can be trivially computed by the receiver.  In particular, at any time $i$, either $\bs_{i-1}$ or $\bs_{i-B-W-1}$ is guaranteed to be available to the destination. In the former case, except the innovation bits of $\bs_{i}$, all other bits are known. Thus all the deterministic sub-symbols, including those corresponding to ${K>B+W}$ can be computed.  In the latter case, because of the diagonal structure of the source, the sub-symbols $\bs_{i,j}$, for $j\ge B+W+1$,  are deterministic functions of $\bs_{i-B-W-1}$ (c.f.~\eqref{model-s}), and therefore, are known and can be ignored. Thus without loss of generality we assume that ${K=B+W}$ is sufficient.

\begin{figure*}
\begin{center}
\vspace{1em}
\noindent
\resizebox{5in}{3in}{
\begin{tikzpicture}
\fill[gray!30!white] (0,-1) -- (12,-9) -- (0,-9) -- (0,-1);

\draw [dashed](.85,-9.2) rectangle (5.15,.5);
\draw [dashed](5.35,-9.2) rectangle (12.65,.5);

\draw [-](1.5,-1) -- (13.5,-9);
\draw [-](4.5,-1) -- (13.5,-7);
\draw [-](6,-1) -- (15,-7);
\draw [-](7.5,-1) -- (15,-6);
\draw [-](10.5,-1) -- (12,-2);
\draw [-](13.5,-1) -- (15,-2);
\draw [-](12,-1) -- (13.5,-2);
\draw [-](13.5,-4) -- (15,-5);

\draw [white] (0,0) -- (0,0) node {$\color{black}s_{i-p}$};
\draw [white] (1.5,0) -- (1.5,0) node {$\color{black}s_{i-p+1}$};

\draw [dotted](2.5,0) -- (3.5,0);

\draw [white] (4.5,0) -- (4.5,0) node {$\color{black}s_{i-W-1}$};
\draw [white] (6,0) -- (6,0) node {$\color{black}s_{i-W}$};
\draw [white] (7.5,0) -- (7.5,0) node {$\color{black}s_{i-W+1}$};

\draw [dotted](8.5,0) -- (9.5,0);

\draw [white] (10.5,0) -- (10.5,0) node {$\color{black}s_{i-2}$};
\draw [white] (12,0) -- (12,0) node {$\color{black}s_{i-1}$};
\draw [white] (13.5,0) -- (13.5,0) node {$\color{black}s_{i}$};
\draw [white] (15,0) -- (15,0) node {$\color{black}s_{i+1}$};

\draw [white] (-.5,-1) -- (-.5,-1) node {$\color{black}{\textrm{\small 0}}$};
\draw [white] (-.5,-2) -- (-.5,-2) node {$\color{black}\textrm{\small 1}$};
\draw [white] (-.5,-4) -- (-.5,-4) node {$\color{black}\textrm{\small B}$};
\draw [white] (-.5,-5) -- (-.5,-5) node {$\color{black}\textrm{\small B+1}$};
\draw [white] (-.5,-6) -- (-.5,-6) node {$\color{black}\textrm{\small B+2}$};
\draw [white] (-.5,-8) -- (-.5,-8) node {$\color{black}\textrm{\small p-2}$};
\draw [white] (-.5,-9) -- (-.5,-9) node {$\color{black}\textrm{\small p-1}$};

\draw [fill=gray!30!white, draw=gray!50!black] (0,-1) circle (8pt);
\draw [fill=gray!30!white, draw=gray!50!black] (0,-2) circle (7.5pt);
\draw [dotted] (0,-2.8) -- (0,-3.2);
\draw [fill=gray!30!white, draw=gray!50!black] (0,-4) circle (6pt);
\draw [fill=gray!30!white, draw=gray!50!black] (0,-5) circle (5.5pt);
\draw [fill=gray!30!white, draw=gray!50!black] (0,-6) circle (5pt);
\draw [dotted] (0,-6.8) -- (0,-7.2);
\draw [fill=gray!30!white, draw=gray!50!black] (0,-8) circle (4pt);
\draw [fill=gray!30!white, draw=gray!50!black] (0,-9) circle (3.5pt);

\draw [fill=white, draw=gray!50!black] (1.5,-1) circle (8pt);
\draw [fill=gray!30!white, draw=gray!50!black] (1.5,-2) circle (7.5pt);
\draw [dotted] (1.5,-2.8) -- (1.5,-3.2);
\draw [fill=gray!30!white, draw=gray!50!black] (1.5,-4) circle (6pt);
\draw [fill=gray!30!white, draw=gray!50!black] (1.5,-5) circle (5.5pt);
\draw [fill=gray!30!white, draw=gray!50!black] (1.5,-6) circle (5pt);
\draw [dotted] (1.5,-6.8) -- (1.5,-7.2);
\draw [fill=gray!30!white, draw=gray!50!black] (1.5,-8) circle (4pt);
\draw [fill=gray!30!white, draw=gray!50!black] (1.5,-9) circle (3.5pt);

\draw [fill=white, draw=gray!50!black] (4.5,-1) circle (8pt);
\draw [fill=white, draw=gray!50!black] (4.5,-2) circle (7.5pt);
\draw [dotted] (4.5,-2.8) -- (4.5,-3.2);
\draw [fill=gray!30!white, draw=gray!50!black] (4.5,-4) circle (6pt);
\draw [fill=gray!30!white, draw=gray!50!black] (4.5,-5) circle (5.5pt);
\draw [fill=gray!30!white, draw=gray!50!black] (4.5,-6) circle (5pt);
\draw [dotted] (4.5,-6.8) -- (4.5,-7.2);
\draw [fill=gray!30!white, draw=gray!50!black] (4.5,-8) circle (4pt);
\draw [fill=gray!30!white, draw=gray!50!black] (4.5,-9) circle (3.5pt);

\filldraw [fill=green!10!white, draw=green!50!black] (6,-1) circle (8pt);
\draw [fill=white, draw=gray!50!black] (6,-2) circle (7.5pt);
\draw [dotted] (6,-2.8) -- (6,-3.2);
\draw [fill=white, draw=gray!50!black] (6,-4) circle (6pt);
\draw [fill=gray!30!white, draw=gray!50!black] (6,-5) circle (5.5pt);
\draw [fill=gray!30!white, draw=gray!50!black] (6,-6) circle (5pt);
\draw [dotted] (6,-6.8) -- (6,-7.2);
\draw [fill=gray!30!white, draw=gray!50!black] (6,-8) circle (4pt);
\draw [fill=gray!30!white, draw=gray!50!black] (6,-9) circle (3.5pt);

\draw [fill=red!40!white, draw=red!50!black] (7.5,-1) circle (8pt);
\draw [fill=white, draw=gray!50!black] (7.5,-2) circle (7.5pt);
\draw [dotted] (7.5,-2.8) -- (7.5,-3.2);
\draw [fill=white, draw=gray!50!black] (7.5,-4) circle (6pt);
\draw [fill=white, draw=gray!50!black] (7.5,-5) circle (5.5pt);
\draw [fill=gray!30!white, draw=gray!50!black] (7.5,-6) circle (5pt);
\draw [dotted] (7.5,-6.8) -- (7.5,-7.2);
\draw [fill=gray!30!white, draw=gray!50!black] (7.5,-8) circle (4pt);
\draw [fill=gray!30!white, draw=gray!50!black] (7.5,-9) circle (3.5pt);

\draw [dotted] (2.7,-1) -- (3.3,-1);
\draw [dotted] (2.7,-2) -- (3.3,-2);
\draw [dotted] (2.7,-4) -- (3.3,-4);
\draw [dotted] (2.7,-5) -- (3.3,-5);
\draw [dotted] (2.7,-6) -- (3.3,-6);
\draw [dotted] (2.7,-8) -- (3.3,-8);
\draw [dotted] (2.7,-9) -- (3.3,-9);

\draw [dotted] (8.7,-1) -- (9.3,-1);
\draw [dotted] (8.7,-2) -- (9.3,-2);
\draw [dotted] (8.7,-4) -- (9.3,-4);
\draw [dotted] (8.7,-5) -- (9.3,-5);
\draw [dotted] (8.7,-6) -- (9.3,-6);
\draw [dotted] (8.7,-8) -- (9.3,-8);
\draw [dotted] (8.7,-9) -- (9.3,-9);

\draw [fill=white, draw=gray!50!black] (12,-1) circle (8pt);
\draw [fill=white, draw=gray!50!black] (12,-2) circle (7.5pt);
\draw [dotted] (12,-2.8) -- (12,-3.2);
\draw [fill=white, draw=gray!50!black] (12,-4) circle (6pt);
\draw [fill=white, draw=gray!50!black] (12,-5) circle (5.5pt);
\draw [fill=white, draw=gray!50!black] (12,-6) circle (5pt);
\draw [dotted] (12,-6.8) -- (12,-7.2);
\draw [fill=white, draw=gray!50!black] (12,-8) circle (4pt);
\draw [fill=gray!30!white, draw=gray!50!black] (12,-9) circle (3.5pt);

\draw [fill=white, draw=gray!50!black] (10.5,-1) circle (8pt);
\draw [fill=white, draw=gray!50!black] (10.5,-2) circle (7.5pt);
\draw [dotted] (10.5,-2.8) -- (10.5,-3.2);
\draw [fill=white, draw=gray!50!black] (10.5,-4) circle (6pt);
\draw [fill=white, draw=gray!50!black] (10.5,-5) circle (5.5pt);
\draw [dotted] (10.5,-5.8) -- (10.5,-6.2);
\draw [fill=white, draw=gray!50!black] (10.5,-7) circle (4.5pt);
\draw [fill=gray!30!white, draw=gray!50!black] (10.5,-8) circle (4pt);
\draw [fill=gray!30!white, draw=gray!50!black] (10.5,-9) circle (3.5pt);

\draw [fill=white, draw=gray!50!black] (13.5,-1) circle (8pt);
\draw [fill=white, draw=gray!50!black] (13.5,-2) circle (7.5pt);
\draw [dotted] (13.5,-2.8) -- (13.5,-3.2);
\draw [fill=white, draw=gray!50!black] (13.5,-4) circle (6pt);
\draw [fill=white, draw=gray!50!black] (13.5,-5) circle (5.5pt);
\draw [fill=white, draw=gray!50!black] (13.5,-6) circle (5pt);
\filldraw [fill=green!10!white, draw=green!50!black] (13.5,-7) circle (4.5pt);

\draw [dotted] (13.5,-7.8) -- (13.5,-8.2);

\filldraw [fill=green!10!white, draw=green!50!black] (13.5,-9) circle (3.5pt);

\draw [fill=white, draw=gray!50!black] (15,-1) circle (8pt);
\draw [fill=white, draw=gray!50!black] (15,-2) circle (7.5pt);
\draw [dotted] (15,-2.8) -- (15,-3.2);
\draw [fill=white, draw=gray!50!black] (15,-4) circle (6pt);
\draw [fill=white, draw=gray!50!black] (15,-5) circle (5.5pt);
\draw [fill=white, draw=gray!50!black] (15,-6) circle (5pt);
\filldraw [fill=red!40!white, draw=red!50!black] (15,-7) circle (4.5pt);

\draw [dotted] (15,-7.8) -- (15,-8.2);

\filldraw [fill=red!40!white, draw=red!50!black] (15,-9) circle (3.5pt);

\draw (6,-3) ellipse (15pt and 40pt);
\draw (13.5,-8) ellipse (15pt and 40pt);
\draw (7.5,-3) ellipse (15pt and 40pt);
\draw (15,-8) ellipse (15pt and 40pt);

\draw [white] (3,-9.5) -- (3,-9.5) node {$\color{black}\textrm{Erased}$};
\draw [white] (9,-9.5) -- (9,-9.5) node {$\color{black}\textrm{Not to be recovered}$};

\draw [->](14,-3.8) -- (14,-2);
\draw [->](14,-4.2) -- (14,-6);
\draw [white] (14,-4) -- (14,-4) node {$\color{black}\textrm{W}$};

\draw [->](14,-7.8) -- (14,-7);
\draw [->](14,-8.2) -- (14,-9);
\draw [white] (14,-8) -- (14,-8) node {$\color{black}\textrm{B}$};

\end{tikzpicture}}
\caption{Schematic of Coding Scheme- Codeword structure. We set $p= B+W+1$. }
\label{Fig-S-1}
\end{center}
\end{figure*}
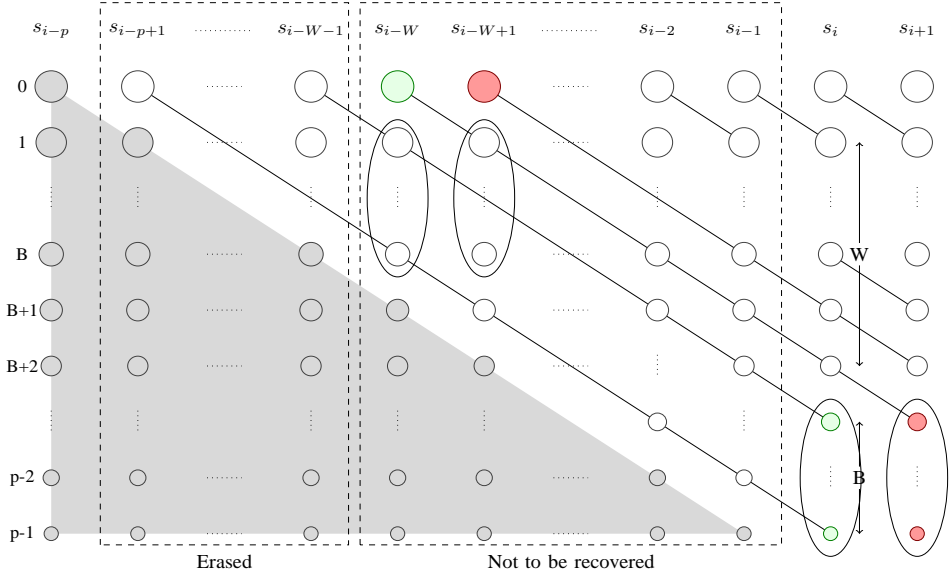

\subsection{Prospicient Coding}

Our coding scheme is based on the following observation, illustrated in  Fig.~\ref{Fig-S-1}.  
Suppose that an erasure happens between $t \in [i-W-B-1,i-W-1]$ and after the ``don't care'' period of $[i-W, i-1]$ we need to recover $\bs_i^n$. 
Based on the structure of the source, illustrated in Fig.~\ref{Fig-S-1} we make the following observations:
\begin{itemize}
\item Sub-symbols $\bs_{i,1},\ldots, \bs_{i,W}$ can be directly computed from the innovation sub-symbols $\bs_{i-1,0},\ldots, \bs_{i-W,0}$, respectively.
\item Sub-symbols $\bs_{i,W+1},\ldots, \bs_{i, W+B}$ can be computed from sub-symbols $\bs_{i-W,1},\ldots, \bs_{i-W, B}$, respectively. 
\end{itemize}

Thus if we send the first $B+1$ sub-symbols at each time i.e., $\bx_i = (\bs_{i,0},\ldots, \bs_{i,B})$ then we are guaranteed that the destination will be able to decode $\bs_i^n$ when an erasure happens between $[i-B-W, i-W-1]$. To achieve the optimal rate, we further compress $\bx_i$ as discussed below. Our coding scheme consists of two steps. 

\subsubsection{Source Rearrangement} The source symbols $\bs_i$ consisting of innovation and deterministic sub-symbols as in Def.~\ref{def:Diagonal} are first rearranged to produce an auxiliary set of  codewords
\begin{align}
{\bc}_{i}\!=\!
\begin{pmatrix}
{\bc}_{i,0}\\
{\bc}_{i,1}\\
{\bc}_{i,2}\\
\vdots\\
{\bc}_{i,B}\end{pmatrix}=\begin{pmatrix}
\bs_{i,0}\\
\bs_{i+W,W+1}\\
\bs_{i+W,W+2}\\
\vdots\\
\bs_{i+W,W+B}\end{pmatrix}\!
=\!\begin{pmatrix}
\bs_{i,0}\\
\bR_{W+1,1}\bs_{i,1}\\
\bR_{W+2,2}\bs_{i,2}\\
\vdots\\
\bR_{W+B,B}\bs_{i,B}\end{pmatrix},
\label{hat-c}
\end{align}
where the last relation follows from~\eqref{model-s}.

Note that the codeword ${\bc}_i$  consists of the innovation symbol  $\bs_{i,0}$,
as well as symbols  $\bs_{i+W,W+1},\ldots, \bs_{i+W,W+B}$ that enable the recovery of symbols in $\bs_{i+W}$.
The codeword ${\bc}_{i-W}$ consists of the green circles in Fig.~\ref{Fig-S-1}. 

\begin{figure*}
\begin{center}
\vspace{1em}
\begin{tikzpicture}

\draw [-] (0.5,-.25) -- (7.5,-3.75);
\draw [-] (1.9,-.25) -- (8.9,-3.75);
\draw [-] (0.5,-.95) -- (6.1,-3.75);

\fill [fill=green!20!white, draw=gray!50!black] (0,-.5) -- (0,0) -- (1,0) -- (1,-.5) -- (0,-.5);
\fill [fill=blue!35!white, draw=gray!50!black] (0,-1.2) -- (0,-.7) -- (1,-.7) -- (1,-1.2) -- (0,-1.2);
\fill [fill=white, draw=gray!50!black] (0,-1.9) -- (0,-1.4) -- (1,-1.4) -- (1,-1.9) -- (0,-1.9);
\draw [dotted] (0.5,-2.2) -- (0.5,-2.5);
\fill [fill=gray!15!white, draw=gray!50!black] (0,-3.3) -- (0,-2.8) -- (1,-2.8) -- (1,-3.3) -- (0,-3.3);
\fill [fill=white, draw=gray!50!black] (0,-4) -- (0,-3.5) -- (1,-3.5) -- (1,-4) -- (0,-4);

\fill [fill=red!50!white, draw=gray!50!black] (1.4,-.5) -- (1.4,0) -- (2.4,0) -- (2.4,-.5) -- (1.4,-.5);
\fill [fill=green!20!white, draw=gray!50!black] (1.4,-1.2) -- (1.4,-.7) -- (2.4,-.7) -- (2.4,-1.2) -- (1.4,-1.2);
\fill [fill=blue!35!white, draw=gray!50!black] (1.4,-1.9) -- (1.4,-1.4) -- (2.4,-1.4) -- (2.4,-1.9) -- (1.4,-1.9);
\draw [dotted] (1.9,-2.2) -- (1.9,-2.5);
\fill [fill=white, draw=gray!50!black] (1.4,-3.3) -- (1.4,-2.8) -- (2.4,-2.8) -- (2.4,-3.3) -- (1.4,-3.3);
\fill [fill=gray!15!white, draw=gray!50!black] (1.4,-4) -- (1.4,-3.5) -- (2.4,-3.5) -- (2.4,-4) -- (1.4,-4);

\fill [fill=white, draw=gray!50!black] (2.8,-.5) -- (2.8,0) -- (3.8,0) -- (3.8,-.5) -- (2.8,-.5);
\fill [fill=red!50!white, draw=gray!50!black] (2.8,-1.2) -- (2.8,-.7) -- (3.8,-.7) -- (3.8,-1.2) -- (2.8,-1.2);
\fill [fill=green!20!white, draw=gray!50!black] (2.8,-1.9) -- (2.8,-1.4) -- (3.8,-1.4) -- (3.8,-1.9) -- (2.8,-1.9);
\draw [dotted] (3.3,-2.2) -- (3.3,-2.5);
\fill [fill=white, draw=gray!50!black] (2.8,-3.3) -- (2.8,-2.8) -- (3.8,-2.8) -- (3.8,-3.3) -- (2.8,-3.3);
\fill [fill=white, draw=gray!50!black] (2.8,-4) -- (2.8,-3.5) -- (3.8,-3.5) -- (3.8,-4) -- (2.8,-4);

\draw [dotted] (4.6,-.25) -- (4.8,-.25);
\draw [dotted] (4.6,-.95) -- (4.8,-.95);
\draw [dotted] (4.6,-1.65) -- (4.8,-1.65);
\draw [dotted] (4.6,-3.05) -- (4.8,-3.05);
\draw [dotted] (4.6,-3.75) -- (4.8,-3.75);

\fill [fill=white, draw=gray!50!black] (5.6,-.5) -- (5.6,0) -- (6.6,0) -- (6.6,-.5) -- (5.6,-.5);
\fill [fill=white, draw=gray!50!black] (5.6,-1.2) -- (5.6,-.7) -- (6.6,-.7) -- (6.6,-1.2) -- (5.6,-1.2);
\fill [fill=white, draw=gray!50!black] (5.6,-1.9) -- (5.6,-1.4) -- (6.6,-1.4) -- (6.6,-1.9) -- (5.6,-1.9);
\draw [dotted] (6.1,-2.2) -- (6.1,-2.5);
\fill [fill=green!20!white, draw=gray!50!black] (5.6,-3.3) -- (5.6,-2.8) -- (6.6,-2.8) -- (6.6,-3.3) -- (5.6,-3.3);
\fill [fill=blue!35!white, draw=gray!50!black] (5.6,-4) -- (5.6,-3.5) -- (6.6,-3.5) -- (6.6,-4) -- (5.6,-4);

\fill [fill=white, draw=gray!50!black] (7,-.5) -- (7,0) -- (8,0) -- (8,-.5) -- (7,-.5);
\fill [fill=white, draw=gray!50!black] (7,-1.2) -- (7,-.7) -- (8,-.7) -- (8,-1.2) -- (7,-1.2);
\fill [fill=white, draw=gray!50!black] (7,-1.9) -- (7,-1.4) -- (8,-1.4) -- (8,-1.9) -- (7,-1.9);
\draw [dotted] (7.5,-2.2) -- (7.5,-2.5);
\fill [fill=red!50!white, draw=gray!50!black] (7,-3.3) -- (7,-2.8) -- (8,-2.8) -- (8,-3.3) -- (7,-3.3);
\fill [fill=green!20!white, draw=gray!50!black] (7,-4) -- (7,-3.5) -- (8,-3.5) -- (8,-4) -- (7,-4);

\fill [fill=white, draw=gray!50!black] (8.4,-.5) -- (8.4,0) -- (9.4,0) -- (9.4,-.5) -- (8.4,-.5);
\fill [fill=white, draw=gray!50!black] (8.4,-1.2) -- (8.4,-.7) -- (9.4,-.7) -- (9.4,-1.2) -- (8.4,-1.2);
\fill [fill=white, draw=gray!50!black] (8.4,-1.9) -- (8.4,-1.4) -- (9.4,-1.4) -- (9.4,-1.9) -- (8.4,-1.9);
\draw [dotted] (8.9,-2.2) -- (8.9,-2.5);
\fill [fill=white, draw=gray!50!black] (8.4,-3.3) -- (8.4,-2.8) -- (9.4,-2.8) -- (9.4,-3.3) -- (8.4,-3.3);
\fill [fill=red!50!white, draw=gray!50!black] (8.4,-4) -- (8.4,-3.5) -- (9.4,-3.5) -- (9.4,-4) -- (8.4,-4);

\draw [dotted] (10.2,-.25) -- (10.4,-.25);
\draw [dotted] (10.2,-.95) -- (10.4,-.95);
\draw [dotted] (10.2,-1.65) -- (10.4,-1.65);
\draw [dotted] (10.2,-3.05) -- (10.4,-3.05);
\draw [dotted] (10.2,-3.75) -- (10.4,-3.75);

\fill [fill=white, draw=gray!50!black] (11.2,-.5) -- (11.2,0) -- (12.2,0) -- (12.2,-.5) -- (11.2,-.5);
\fill [fill=white, draw=gray!50!black] (11.2,-1.2) -- (11.2,-.7) -- (12.2,-.7) -- (12.2,-1.2) -- (11.2,-1.2);
\fill [fill=white, draw=gray!50!black] (11.2,-1.9) -- (11.2,-1.4) -- (12.2,-1.4) -- (12.2,-1.9) -- (11.2,-1.9);
\draw [dotted] (11.7,-2.2) -- (11.7,-2.5);
\fill [fill=white, draw=gray!50!black] (11.2,-3.3) -- (11.2,-2.8) -- (12.2,-2.8) -- (12.2,-3.3) -- (11.2,-3.3);
\fill [fill=white, draw=gray!50!black] (11.2,-4) -- (11.2,-3.5) -- (12.2,-3.5) -- (12.2,-4) -- (11.2,-4);

\draw [white] (.5,0.5) -- (.5,0.5) node {$\color{black}\scriptstyle \bc_{i-W}$};
\draw [white] (0.5,-0.25) -- (0.5,-0.25) node {$\color{black}\scriptstyle 0$};
\draw [white] (0.5,-.95) -- (0.5,-.95) node {$\color{black}\scriptstyle 1$};
\draw [white] (0.5,-1.65) -- (0.5,-1.65) node {$\color{black}\scriptstyle 2$};
\draw [white] (0.5,-3.05) -- (0.5,-3.05) node {$\color{black}{\scriptstyle B-1}$};
\draw [white] (0.5,-3.75) -- (0.5,-3.75) node {$\color{black}\scriptstyle B$};

\draw [white] (1.9,0.5) -- (1.9,0.5) node {$\color{black}\scriptstyle \bc_{i-W+1}$};
\draw [white] (1.9,-0.25) -- (1.9,-0.25) node {$\color{black}\scriptstyle 0$};
\draw [white] (1.9,-.95) -- (1.9,-.95) node {$\color{black}\scriptstyle 1$};
\draw [white] (1.9,-1.65) -- (1.9,-1.65) node {$\color{black}\scriptstyle 2$};
\draw [white] (1.9,-3.05) -- (1.9,-3.05) node {$\color{black}{\scriptstyle B-1}$};
\draw [white] (1.9,-3.75) -- (1.9,-3.75) node {$\color{black}\scriptstyle B$};

\draw [white] (3.3,0.5) -- (3.3,0.5) node {$\color{black}\scriptstyle \bc_{i-W+2}$};
\draw [white] (3.3,-0.25) -- (3.3,-0.25) node {$\color{black}\scriptstyle 0$};
\draw [white] (3.3,-.95) -- (3.3,-.95) node {$\color{black}\scriptstyle 1$};
\draw [white] (3.3,-1.65) -- (3.3,-1.65) node {$\color{black}\scriptstyle 2$};
\draw [white] (3.3,-3.05) -- (3.3,-3.05) node {$\color{black}{\scriptstyle B-1}$};
\draw [white] (3.3,-3.75) -- (3.3,-3.75) node {$\color{black}\scriptstyle B$};

\draw [white] (6.1,0.5) -- (6.1,0.5) node {$\color{black}\scriptstyle \bc_{i-W+B}$};
\draw [white] (6.1,-0.25) -- (6.1,-0.25) node {$\color{black}\scriptstyle 0$};
\draw [white] (6.1,-.95) -- (6.1,-.95) node {$\color{black}\scriptstyle 1$};
\draw [white] (6.1,-1.65) -- (6.1,-1.65) node {$\color{black}\scriptstyle 2$};
\draw [white] (6.1,-3.05) -- (6.1,-3.05) node {$\color{black}{\scriptstyle B-1}$};
\draw [white] (6.1,-3.75) -- (6.1,-3.75) node {$\color{black}\scriptstyle B$};

\draw [white] (7.5,0.5) -- (7.5,0.5) node {$\color{black}\scriptstyle \bc_{i-W+B+1}$};
\draw [white] (7.5,-0.25) -- (7.5,-0.25) node {$\color{black}\scriptstyle 0$};
\draw [white] (7.5,-.95) -- (7.5,-.95) node {$\color{black}\scriptstyle 1$};
\draw [white] (7.5,-1.65) -- (7.5,-1.65) node {$\color{black}\scriptstyle 2$};
\draw [white] (7.5,-3.05) -- (7.5,-3.05) node {$\color{black}{\scriptstyle B-1}$};
\draw [white] (7.5,-3.75) -- (7.5,-3.75) node {$\color{black}\scriptstyle B$};

\draw [white] (8.9,0.5) -- (8.9,0.5) node {$\color{black}\scriptstyle  \bc_{i-W+B+2}$};
\draw [white] (8.9,-0.25) -- (8.9,-0.25) node {$\color{black}\scriptstyle 0$};
\draw [white] (8.9,-.95) -- (8.9,-.95) node {$\color{black}\scriptstyle 1$};
\draw [white] (8.9,-1.65) -- (8.9,-1.65) node {$\color{black}\scriptstyle 2$};
\draw [white] (8.9,-3.05) -- (8.9,-3.05) node {$\color{black}{\scriptstyle B-1}$};
\draw [white] (8.9,-3.75) -- (8.9,-3.75) node {$\color{black}\scriptstyle B$};

\draw [white] (11.7,0.5) -- (11.7,0.5) node {$\color{black} \scriptstyle \bc_{i}$};
\draw [white] (11.7,-0.25) -- (11.7,-0.25) node {$\color{black}\scriptstyle 0$};
\draw [white] (11.7,-.95) -- (11.7,-.95) node {$\color{black}\scriptstyle 1$};
\draw [white] (11.7,-1.65) -- (11.7,-1.65) node {$\color{black}\scriptstyle 2$};
\draw [white] (11.7,-3.05) -- (11.7,-3.05) node {$\color{black}{\scriptstyle B-1}$};
\draw [white] (11.7,-3.75) -- (11.7,-3.75) node {$\color{black}\scriptstyle B$};
\end{tikzpicture}
\caption{Schematic of Coding Scheme- Rate reduction.}
\label{Fig-Code-Cons}
\end{center}
\end{figure*}
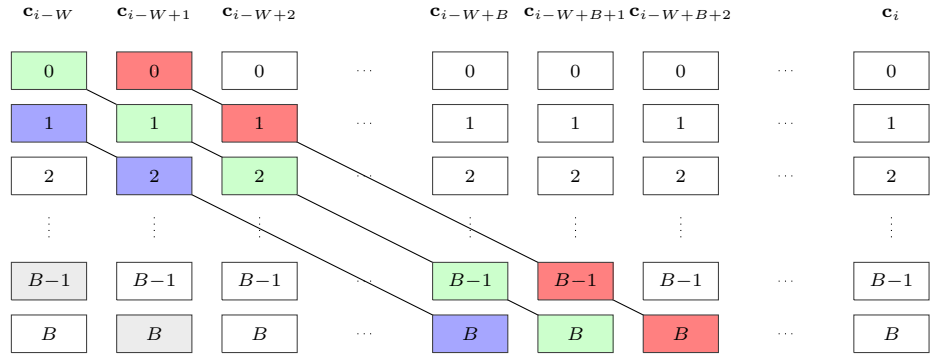

It can be verified from~\eqref{hat-c} that the rate associated with the codewords  $\bc_i$ is given by
\begin{align}
R_0=N_0+\sum_{k=W+1}^{W+B}N_i, \label{r-hat}
\end{align}
which is  larger than the rate-expression in~\eqref{eq:LDD2}. In particular it is missing the $\frac{1}{W+1}$  factor in the second term. This factor can be recovered by binning the sequences $\bc_i^n$ as described next. 

\subsubsection{Slepian-Wolf Coding} There is a strong temporal correlation between the sequences ${\bc}_i^n$ in~\eqref{hat-c}. 
As shown in Fig.~\ref{Fig-Code-Cons} as we proceed along any diagonal  the sub-symbols ${\bc}_{i,j}$ and ${\bc}_{i+1,j+1}$ contain the same underlying set of innovation bits  i.e., from sub-symbol $\bs_{i-j,0}$. 

To exploit the correlation, we independently bin the codeword sequences $\bc_i^n$ into $2^{nR}$ bins at each time. We let $R=R(B,W)+\eps$ is as given in~\eqref{eq:LDD2},  and only transmits the bin index of the associated codeword i.e., $\rvf_i = \cF(\bc_i^n) \in \{1,2, \ldots, 2^{nR}\}$.

It remains to show that given the bin index $\rvf_i$, the decoder is able to recover the underlying codeword symbols $\bc_i^n$. 


\subsubsection*{Analysis of Slepian-Wolf Coding}

Recall that we only transmit the bin index $\rvf_i$ of $\bc_i^n$. The receiver first recovers the underlying sequence $\bc_i^n$ as follows:
\begin{itemize}
\item [1)] If the receiver has access to $\bs^n_{i-1}$ in addition to  $\rvf_{i}$ it  can recover $\bc_{i}^n$ if 
\begin{align}
R&\ge H(\bc_{i}|\bs_{i-1})=H(\bc_{i,0})=N_0.
\label{eq:c-1}
\end{align}  
where the second equality follows since $\bc_{i,1}$, ..., $\bc_{i,W}$ are all deterministic functions of  $\bs_{i,1}$, ..., $\bs_{i,W}$, which in turn are deterministic functions of $\bs_{i-1}$. Clearly~\eqref{eq:c-1} is satisfied by our choice of $R$ in~\eqref{eq:LDD2}. 
\item [2)] The decoder has access to $\bs_{i-B-W-1}$ and $\{\rvf_{i-W}, \rvf_{i-W+1}, ..., \rvf_{i}\}$. The decoder is able to recover $\{\bc_{i-W}, ..., \bc_{i}\}$ if
\begin{align}
(W+1)R&\ge H(\bc_{i},\bc_{i-1},..., \bc_{i-W}|\bs_{i-B-W-1})\notag\\
&=\sum_{k=0}^{W+1}H(\bc_{i-k,0})+\sum_{k=1}^{B}H(\bc_{i-W,k})\label{sum}\\
&=(W+1)N_0+\sum_{k=1}^{B}N_{W+k}\label{eq:c-2},
\end{align}
where \eqref{sum} comes from the diagonal correlation property illustrated in  Fig.~\ref{Fig-Code-Cons}.
Our choice of $R$~\eqref{eq:LDD2} guarantees that~\eqref{eq:c-2} is satisfied. 
\end{itemize}

\subsection{Rate-Optimality of the Coding Scheme}

We specialize the general lower bound established in Theorem~\ref{thm:genUB_LB} 
to the case of diagonally correlated deterministic sources. 
Using~\eqref{model-s} and $p=B+W+1$ we have 
\begin{align}
R&\ge H(\bs_{1}|\bs_{0})+\frac{1}{W+1}I(\bs_{p};\bs_{B}|\bs_{0})\nonumber\\
&=H(\bs_{i}|\bs_{i-1})\!+\!\frac{1}{W+1}\left\{H(\bs_{i}|\bs_{i-p}) \notag - H(\bs_{i}|\bs_{i-W-1})\right\}\nonumber\\
&=H(\bs_{i}|\bs_{i-1})+ \notag\\
&\hspace{2pt}\frac{1}{W+1}H(\bs_{i,0},\bR_{1,0}\bs_{i-1,0},\ldots,\bR_{p-1,0}\bs_{i-p+1,0})\notag\\
&\hspace{2pt}- \frac{1}{W+1}H(\bs_{i,0},\bR_{1,0}\bs_{i-1,0},\ldots, \bR_{W,0}\bs_{i-W,0})\label{eq:connect}
\end{align}
According to the fact that innovation bits of each source are drawn i.i.d.\ \eqref{eq:connect} reduces to
\begin{align} 
R&\ge H(\bs_{i,0})+ \frac{1}{W+1}\left(H(\bs_{i,0})+\sum_{k=1}^{p-1}H(\bR_{k,0}\bs_{i-k,0})\right) \notag\\&-\frac{1}{W+1}\left(H(\bs_{i,0})+\sum_{k=1}^{W}H(\bR_{k,0}\bs_{i-k,0})\right)\label{indep}\\
&=N_0+ \frac{1}{W+1}  \left(\sum_{k=1}^{p-1}N_{k} -  \sum_{k=1}^{W}N_{k}\right)\label{N-j}\\
&=N_0 + \frac{1}{W+1}  \sum_{k=W+1}^{p-1}N_{k}\label{lower-dc},
\end{align}
where \eqref{N-j} follows from the fact that $\bR_{k,0}$ are $N_k\times N_0$ full-rank matrices of rank $N_k$. Since~\eqref{lower-dc} equals \eqref{eq:LDD2} for $K=B+W$ the optimality of the proposed scheme is established.

\section{Linear Semi-Deterministic Sources}
\label{sec:LSD}
We consider the class of linear deterministic sources as defined in Def.~\ref{def:SemiDet} in this section. Recall that for such a source the deterministic component $\bs_{i,d}$ is obtained from the previous sub-symbol $\bs_{i-1}$ through a linear transformation i.e.,
$$\bs_{i,d} = \begin{bmatrix}\bA & \bB \end{bmatrix} \begin{bmatrix} \bs_{i-1,0} \\ \bs_{i-1,d}\end{bmatrix}.$$

As discussed below, the transfer matrix $\begin{bmatrix}\bA & \bB \end{bmatrix}$ can be converted into a block-diagonal form through suitable invertible linear transformations, thus resulting in a diagonally correlated deterministic source. The prospicient coding scheme discussed earlier can then be applied to such a transformed source.

\subsection{Case 1}
Our transformation is most natural for the case when $\bA$ is a full row-rank matrix. So we treat this case first. Let
\begin{align}
N_1 \triangleq \textrm{Rank}(\bA)\le \min \{N_0,N_d\} \label{N1-min}.
\end{align}
In this section we restrict to the special case where $N_1=N_d$, i.e. $\bA$ is a full-row-rank matrix with $N_d$ independent non-zero rows. For this case, we explain the coding scheme by describing the encoder and decoder shown in Fig~\ref{fig:case1}.
\begin{figure}
\begin{center}
\vspace{1em}
\begin{tikzpicture}
\def \a {0}
\def \b {0}{
\draw [dashed] (-0.8,\b-.6) -- (-.8,\b+.6) -- (3.2,\b+.6) -- (3.2,\b-.6) -- (-0.8,\b-.6) ; 
\draw [white] (-.6,\b-.7) -- (-.6,\b-.7) node {$\color{black}\scriptstyle\textrm{Encoder}$};
\fill [fill=white, draw=gray!50!black] (\a-.5,\b-.5) -- (\a-.5,\b+.5) -- (\a+.5,\b+.5) -- (\a+.5,\b-.5) -- (\a-.5,\b-.5);
\draw [white] (\a-1.5,\b+0.25) -- (\a-1.5,\b+0.25) node {$\color{black} \rvbs^n_{i}$};
\draw [white] (\a,\b) -- (\a,\b) node {$\color{black}\scriptstyle\mathcal{L}$};
\draw [-] (-1.5+\a,\b) -- (-.5,\b);
}

\def \a {0}
\def \b {-2.5}{
\draw [dashed] (-0.8,\b-.6) -- (-.8,\b+.6) -- (3.2,\b+.6) -- (3.2,\b-.6) -- (-0.8,\b-.6) ; 
\draw [white] (-.6,\b+.7) -- (-.6,\b+.7) node {$\color{black}\scriptstyle\textrm{Decoder}$};
\fill [fill=white, draw=gray!50!black] (\a-.5,\b-.5) -- (\a-.5,\b+.5) -- (\a+.5,\b+.5) -- (\a+.5,\b-.5) -- (\a-.5,\b-.5);
\draw [white] (\a-1.5,\b+0.25) -- (\a-1.5,\b+0.25) node {$\color{black} \rvbs^n_{i}$};
\draw [white] (\a,\b) -- (\a,\b) node {$\color{black}\scriptstyle\mathcal{L}^{-1}$};
\draw [-] (-1.5+\a,\b) -- (-.5,\b);
}

\def \a {2.25}
\def \b {0}{
\fill [fill=white, draw=gray!50!black] (\a-.75,\b-.5) -- (\a-.75,\b+.5) -- (\a+.75,\b+.5) -- (\a+.75,\b-.5) -- (\a-.75,\b-.5);
\draw [white] (\a-1.25,\b+0.25) -- (\a-1.25,\b+0.25) node {$\color{black} \tilde{\rvbs}^n_{i}$};
\draw [white] (\a,\b+.2) -- (\a,\b+.2) node {$\color{black}\scriptstyle\textrm{Prospicient}$};
\draw [white] (\a,\b-.2) -- (\a,\b-.2) node {$\color{black}\scriptstyle\textrm{Encoder}$};
\draw [-] (\a-1.75,\b) -- (\a-.75,\b);
\draw [-] (\a+.75,\b) -- (\a+1.75,\b);
\draw [-] (\a+1.75,\b) -- (\a+1.75,\b-.5);
}

\def \a {2.25}
\def \b {-2.5}{
\fill [fill=white, draw=gray!50!black] (\a-.75,\b-.5) -- (\a-.75,\b+.5) -- (\a+.75,\b+.5) -- (\a+.75,\b-.5) -- (\a-.75,\b-.5);
\draw [white] (\a-1.25,\b+0.25) -- (\a-1.25,\b+0.25) node {$\color{black} \tilde{\rvbs}^n_{i}$};
\draw [white] (\a,\b+.2) -- (\a,\b+.2) node {$\color{black}\scriptstyle\textrm{Prospicient}$};
\draw [white] (\a,\b-.2) -- (\a,\b-.2) node {$\color{black}\scriptstyle\textrm{Decoder}$};
\draw [-] (\a-1.75,\b) -- (\a-.75,\b);
\draw [-] (\a+.75,\b) -- (\a+1.75,\b);
\draw [-] (\a+1.75,\b) -- (\a+1.75,\b+.5);
}

\def \a {4}
\def \b {-1.25}{
\fill [fill=gray!10!white, draw=gray!50!black] (\a-.5,\b-.75) -- (\a-.5,\b+.75) -- (\a+.5,\b+.75) -- (\a+.5,\b-.75) -- (\a-.5,\b-.75);
\draw [white] (\a,\b+.3) -- (\a,\b+.3) node {$\color{black}\scriptstyle\textrm{Burst}$};
\draw [white] (\a,\b) -- (\a,\b) node {$\color{black}\scriptstyle\textrm{Erasure}$};
\draw [white] (\a,\b-.3) -- (\a,\b-.3) node {$\color{black}\scriptstyle\textrm{Channel}$};
}
\end{tikzpicture}
\caption{ Block diagram of the system described in Case 1.}
\label{fig:case1}
\end{center}
\end{figure}
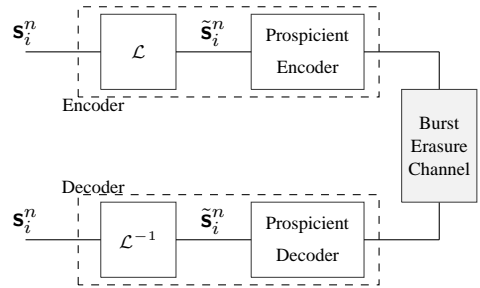 
\subsubsection{Encoder}
As in Fig.~\ref{fig:case1}, the encoder applies a memoryless transformation block $\mathcal{L}(.)$ onto each symbol $\bs_i$ to yield $\tilde{\bs}_{i} = \cL(\bs_i)$, a diagonally correlated deterministic source. We discuss the $\mathcal{L}(\cdot)$ mapping below.

Suppose that $\bX$ is  a matrix of dimensions $N_0\times N_d$. Define
\begin{align}
\bM \triangleq \begin{bmatrix}
\bI& \bX\\
\mathrm{0}& \bI
\end{bmatrix}
\end{align}
and observe that
\begin{align}
\bM^{-1}= \begin{bmatrix}
\bI& -\bX\\
\mathrm{0}& \bI
\end{bmatrix}.
\end{align}

For a certain $\bX$  to be specified later, let
\begin{align}
\bs_{i,d} &=\begin{bmatrix}
\bA& \bB
\end{bmatrix}\bM^{-1}\bM\begin{bmatrix}
\bs_{i-1,0}\\
\bs_{i-1,d}
\end{bmatrix}\\
&=\begin{bmatrix}
\bA& \bB-\bA\bX
\end{bmatrix}\begin{bmatrix}
\bs_{i-1,0}+\bX \bs_{i-1,d}\\
\bs_{i-1,d}
\end{bmatrix}\label{case-one-1}
\end{align} Since $\bA$ is a full-rank matrix, we may select $\bX$ such that 
\begin{align}
\bB-\bA\bX=\mathrm{0}
\end{align} With this choice of $\bX$, \eqref{case-one-1} reduces to 
\begin{align}
\bs_{i,d}&=\begin{bmatrix}
\bA& \mathrm{0}
\end{bmatrix}\begin{bmatrix}
\bs_{i-1,0}+\bX \bs_{i-1,d}\\
\bs_{i-1,d}
\end{bmatrix}
\end{align} Now, define the linear transformation $\mathcal{L}(.)$ as follows. 
\begin{align}
\tilde{\bs}_{i}=\begin{pmatrix}
\tilde{\bs}_{i,0}\\
\tilde{\bs}_{i,1}
\end{pmatrix}\triangleq \begin{pmatrix}
\bs_{i,0}+\bX \bs_{i,d}\\
\bs_{i,d}
\end{pmatrix}
\end{align} Note that 1) The transformation $\mathcal{L}(.)$ is memoryless and requires no knowledge of the past source sequences, 2) The innovation bits $\bs_{i,0}$ are independently drawn and independent of $\bs_{i,d}$. Hence $\tilde{\bs}_{i,0}$ are drawn i.i.d.\ according to Bernoulli-$(1/2)$, and  are independent of $\bs_{i,d}$, 3) The map between the two sources $\bs_{i}$ and $\tilde{\bs}_{i}$ are one-to-one.

Observe that $\tilde{\bs}_{i}$ is diagonally correlated Markov source with $N_0$ innovation bits $\tilde{\bs}_{i,0}$ and $N_d$ deterministic bits $\tilde{\bs}_{i,1}$ that satisfy
\begin{align}
\tilde{\bs}_{i,1}=\bA\tilde{\bs}_{i-1,0}.
\end{align}
We transmit the source sequence $\{\tilde{\bs}_i\}$ using the prospicient coding scheme. 
\subsubsection{Decoder}
At the receiver, first the Prospicient decoder recovers the diagonally correlated source $\tilde{\bs}_{i}$ at any time except error propagation window. Then whenever $\tilde{\bs}_{i}$ is available, the decoder directly constructs $\bs_{i}$ as
\begin{align}
\bs_{i}=\mathcal{L}^{-1}(\tilde{\bs}_{i})=\bM^{-1}\tilde{\bs}_{i}.
\end{align}

\subsubsection{Rate-optimality}
\label{sec:rate-optimal}
Suppose that our two step approach in Fig.~\ref{fig:case1} is sub-optimal. 
Then, in order to transmit the $\tilde{\bs}_{i}$ through the channel, one can first transform it into ${\bs}_{i}$ via $\mathcal{L}^{-1}$ and achieve lower rate than the prospicient coding scheme. However this is impossible because prospicient scheme is optimal. This shows the optimality of the coding scheme.  

\subsection{Case 2}
Now we consider the general case of semi-deterministic Markov sources defined in Def.~\ref{def:SemiDet}. 
As illustrated in Fig.~\ref{fig:LfLb} the reduction to the diagonally correlated source is done in two steps using two linear transforms: $\cL_f(\cdot)$ and $\cL_b(\cdot)$.
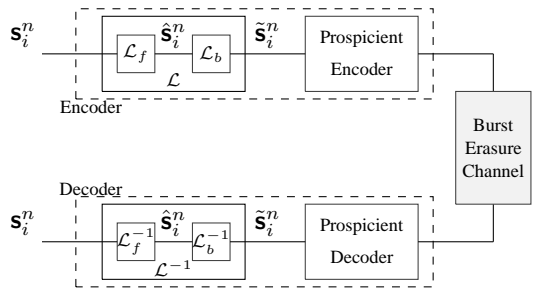
\begin{figure}
\begin{center}
\vspace{1em}
\begin{tikzpicture}
\def \a {0}
\def \b {0}{
\draw [dashed] (-0.8,\b-.6) -- (-.8,\b+.6) -- (3.95,\b+.6) -- (3.95,\b-.6) -- (-0.8,\b-.6) ;
\draw [-] (-0.45,\b-.5) -- (-.45,\b+.5) -- (1.45,\b+.5) -- (1.45,\b-.5) -- (-0.45,\b-.5) ;  
\draw [white] (.5,\b-.35) -- (.5,\b-.35) node {$\color{black}\scriptstyle\mathcal{L}$};
\draw [white] (-.6,\b-.7) -- (-.6,\b-.7) node {$\color{black}\scriptstyle\textrm{Encoder}$};
\fill [fill=white, draw=gray!50!black] (\a-.25,\b-.25) -- (\a-.25,\b+.25) -- (\a+.25,\b+.25) -- (\a+.25,\b-.25) -- (\a-.25,\b-.25);
\draw [white] (\a-1.5,\b+0.25) -- (\a-1.5,\b+0.25) node {$\color{black} \rvbs^n_{i}$};
\draw [white] (\a,\b) -- (\a,\b) node {$\color{black}\scriptstyle\mathcal{L}_f$};
\draw [-] (-1.25+\a,\b) -- (-.25,\b);
}

\def \a {1}
\def \b {0}{
\fill [fill=white, draw=gray!50!black] (\a-.25,\b-.25) -- (\a-.25,\b+.25) -- (\a+.25,\b+.25) -- (\a+.25,\b-.25) -- (\a-.25,\b-.25);
\draw [white] (\a-.5,\b+0.25) -- (\a-.5,\b+0.25) node {$\color{black} \hat{\rvbs}^n_{i}$};
\draw [white] (\a,\b) -- (\a,\b) node {$\color{black}\scriptstyle\mathcal{L}_b$};
\draw [-] (-.75+\a,\b) -- (\a-.25,\b);
}

\def \a {0}
\def \b {-2.5}{
\draw [dashed] (-0.8,\b-.6) -- (-.8,\b+.6) -- (3.95,\b+.6) -- (3.95,\b-.6) -- (-0.8,\b-.6) ; 
\draw [-] (-0.45,\b-.5) -- (-.45,\b+.5) -- (1.45,\b+.5) -- (1.45,\b-.5) -- (-0.45,\b-.5) ;
\draw [white] (.5,\b-.35) -- (.5,\b-.35) node {$\color{black}\scriptstyle\mathcal{L}^{-1}$}; 
\draw [white] (-.6,\b+.7) -- (-.6,\b+.7) node {$\color{black}\scriptstyle\textrm{Decoder}$};
\fill [fill=white, draw=gray!50!black] (\a-.25,\b-.25) -- (\a-.25,\b+.25) -- (\a+.25,\b+.25) -- (\a+.25,\b-.25) -- (\a-.25,\b-.25);
\draw [white] (\a-1.5,\b+0.25) -- (\a-1.5,\b+0.25) node {$\color{black} \rvbs^n_{i}$};
\draw [white] (\a,\b) -- (\a,\b) node {$\color{black}\scriptstyle\mathcal{L}^{-1}_f$};
\draw [-] (-1.25+\a,\b) -- (-.25,\b);
}

\def \a {1}
\def \b {-2.5}{
\fill [fill=white, draw=gray!50!black] (\a-.25,\b-.25) -- (\a-.25,\b+.25) -- (\a+.25,\b+.25) -- (\a+.25,\b-.25) -- (\a-.25,\b-.25);
\draw [white] (\a-.5,\b+0.25) -- (\a-.5,\b+0.25) node {$\color{black} \hat{\rvbs}^n_{i}$};
\draw [white] (\a,\b) -- (\a,\b) node {$\color{black}\scriptstyle\mathcal{L}_b^{-1}$};
\draw [-] (-.75+\a,\b) -- (\a-.25,\b);
}
\def \a {3}
\def \b {0}{
\fill [fill=white, draw=gray!50!black] (\a-.75,\b-.5) -- (\a-.75,\b+.5) -- (\a+.75,\b+.5) -- (\a+.75,\b-.5) -- (\a-.75,\b-.5);
\draw [white] (\a-1.25,\b+0.25) -- (\a-1.25,\b+0.25) node {$\color{black} \tilde{\rvbs}^n_{i}$};
\draw [white] (\a,\b+.2) -- (\a,\b+.2) node {$\color{black}\scriptstyle\textrm{Prospicient}$};
\draw [white] (\a,\b-.2) -- (\a,\b-.2) node {$\color{black}\scriptstyle\textrm{Encoder}$};
\draw [-] (\a-1.75,\b) -- (\a-.75,\b);
\draw [-] (\a+.75,\b) -- (\a+1.75,\b);
\draw [-] (\a+1.75,\b) -- (\a+1.75,\b-.5);
}

\def \a {3}
\def \b {-2.5}{
\fill [fill=white, draw=gray!50!black] (\a-.75,\b-.5) -- (\a-.75,\b+.5) -- (\a+.75,\b+.5) -- (\a+.75,\b-.5) -- (\a-.75,\b-.5);
\draw [white] (\a-1.25,\b+0.25) -- (\a-1.25,\b+0.25) node {$\color{black} \tilde{\rvbs}^n_{i}$};
\draw [white] (\a,\b+.2) -- (\a,\b+.2) node {$\color{black}\scriptstyle\textrm{Prospicient}$};
\draw [white] (\a,\b-.2) -- (\a,\b-.2) node {$\color{black}\scriptstyle\textrm{Decoder}$};
\draw [-] (\a-1.75,\b) -- (\a-.75,\b);
\draw [-] (\a+.75,\b) -- (\a+1.75,\b);
\draw [-] (\a+1.75,\b) -- (\a+1.75,\b+.5);
}

\def \a {4.75}
\def \b {-1.25}{
\fill [fill=gray!10!white, draw=gray!50!black] (\a-.5,\b-.75) -- (\a-.5,\b+.75) -- (\a+.5,\b+.75) -- (\a+.5,\b-.75) -- (\a-.5,\b-.75);
\draw [white] (\a,\b+.3) -- (\a,\b+.3) node {$\color{black}\scriptstyle\textrm{Burst}$};
\draw [white] (\a,\b) -- (\a,\b) node {$\color{black}\scriptstyle\textrm{Erasure}$};
\draw [white] (\a,\b-.3) -- (\a,\b-.3) node {$\color{black}\scriptstyle\textrm{Channel}$};
}
\end{tikzpicture}
\caption{Block diagram for general linear semi-deterministic Markov sources.}
\label{fig:LfLb}
\end{center}
\end{figure}

\begin{lemma}
\label{lem:full-rank}
Any semi-deterministic Markov source specified in Def.~\ref{def:SemiDet}, or equivalently by~\eqref{eq:SemiDet}, can be transformed into an equivalent source $\hat{\bs}_{i}$ consisting of innovation component $\bs_{i,0}\in \{0,1\}^{N_{0}}$ and $K$ deterministic components that satisfy~\eqref{full-rank-rep}, at the top of the next page, 
\begin{figure*}[!htb]
\begin{align}
&\hat{\bs}_{i,d}=\begin{pmatrix}
{\bs}_{i,1}\\
{\bs}_{i,2}\\
\vdots\\
{\bs}_{i,K-1}\\
{\bs}_{i,K}
\end{pmatrix}=\begin{pmatrix}
\bR_{1,0} & \bR_{1,1} & \cdots & \bR_{1,K-2} &\bR_{1,K-1} &\bR_{1,K}\\
\mathrm{0} & \bR_{2,1} & \cdots & \bR_{2,K-2} & \bR_{2,K-1} &\bR_{2,K}\\
\vdots & \vdots & \ddots & \vdots &\vdots &\vdots\\
\mathrm{0} & \mathrm{0} & \cdots & \bR_{K-1,K-2} &\bR_{K-1,K-1}& \bR_{K-1,K}\\
\mathrm{0} & \mathrm{0} & \cdots & \mathrm{0} & {\bR}_{K,K-1} &{\bR}_{K,K}\\
\end{pmatrix}\begin{pmatrix}
{\bs}_{i-1,0}\\
{\bs}_{i-1,1}\\
\vdots\\
{\bs}_{i-1,K-2}\\
{\bs}_{i-1,K-1}\\
{\bs}_{i-1,K}
\end{pmatrix}. \label{full-rank-rep}
\end{align}
\begin{align}
&\tilde{\bs}_{i,d}=\begin{pmatrix}
\tilde{\bs}_{i,1}\\
\tilde{\bs}_{i,2}\\
\vdots\\
\tilde{\bs}_{i,K-1}\\
\tilde{\bs}_{i,K}
\end{pmatrix}=\begin{pmatrix}
\bR_{1,0} & \mathrm{0} & \cdots & \mathrm{0} &\mathrm{0} \\
\mathrm{0} & \bR_{2,1} & \cdots & \mathrm{0} & \mathrm{0} \\
\vdots & \vdots & \ddots & \vdots &\vdots \\
\mathrm{0} & \mathrm{0} & \cdots & \bR_{K-1,K-2} &\mathrm{0} \\
\mathrm{0} & \mathrm{0} & \cdots & \mathrm{0} & {\bR}_{K,K-1} 
\end{pmatrix}\begin{pmatrix}
\tilde{\bs}_{i-1,0}\\
\tilde{\bs}_{i-1,1}\\
\vdots\\
\tilde{\bs}_{i-1,K-2}\\
\tilde{\bs}_{i-1,K-1}
\end{pmatrix}. \label{block-diag-rep-2}
\end{align}        
\end{figure*} using a one-to-one linear transformation $\cL_f$
where \begin{enumerate}
\item ${\bs}_{i,j}\in \{0,1\}^{N_j}$ for $j \in\{0,\ldots, K\}$ where 
\begin{align}
N_0 \ge N_1 \ge \ldots \ge N_K, \label{ranks}
\end{align}
and $\sum_{k=1}^{K}N_k=N_d$.
\item$\bR_{j,j-1}$ is $N_j\times N_{j-1}$ full-rank matrix of rank $N_j$ for $j \in\{1,\ldots, K-1\}$.
\item The matrix $\bR_{K,K-1}$ is either full-rank of rank $N_K$ or zero matrix.
\end{enumerate}
\end{lemma}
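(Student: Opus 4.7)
The transformation $\cL_f$ will be built as an iterated rank-revealing change of coordinates on the deterministic part $\bs_{i,d}$, leaving the innovation $\bs_{i,0}$ untouched. The construction is analogous to the controllability canonical form in linear systems theory: at each step I peel off one sub-block of the deterministic state whose size equals the rank of the matrix driving the innovation (or a previously extracted block) into the not-yet-decomposed residual.

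For the base step, I would set $N_1 \triangleq \mathrm{rank}(\bA)$ and, via elementary row operations, choose an invertible $N_d \times N_d$ matrix $\bT_1$ satisfying $\bT_1\bA = (\bR_{1,0}^T,\, \mathbf{0}^T)^T$, where $\bR_{1,0}$ is an $N_1 \times N_0$ matrix of full row-rank $N_1$. Partitioning $\bT_1\bs_{i,d} = (\bs_{i,1},\, \bv_{i,\geq 2})^T$ with block sizes $N_1$ and $N_d - N_1$, and writing the conjugated matrix $\bT_1\bB\bT_1^{-1}$ in conformable $2 \times 2$ block form with blocks $\bC_1, \bD_1, \bG_1, \bF_1$, the dynamics take the form $\bs_{i,1} = \bR_{1,0}\bs_{i-1,0} + \bC_1\bs_{i-1,1} + \bD_1\bv_{i-1,\geq 2}$ and $\bv_{i,\geq 2} = \bG_1\bs_{i-1,1} + \bF_1\bv_{i-1,\geq 2}$. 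The crucial outcome is that the residual $\bv_{i,\geq 2}$ no longer depends on $\bs_{i-1,0}$, thereby producing the first row of blocks in~\eqref{full-rank-rep}.

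Inductively, at step $k \ge 2$ I would apply exactly the same procedure to $\bv_{i,\geq k}$, treating $\bG_{k-1}$ as the ``new $\bA$'' and setting $N_k \triangleq \mathrm{rank}(\bG_{k-1})$. This yields an invertible transformation on the residual that peels off a block $\bs_{i,k}$ of size $N_k$ with a full-rank subdiagonal matrix $\bR_{k,k-1}$ of dimension $N_k \times N_{k-1}$ in the top rows and zeros below. Because $\bG_{k-1}$ has $N_{k-1}$ columns we automatically get $N_k \le N_{k-1}$, and combined with $N_1 \le N_0$ this yields the non-increasing chain~\eqref{ranks}. The procedure terminates either when the residual dimension reaches zero (in which case $\bR_{K,K-1}$ is full-rank by construction) or when $\bG_{K-1} = \mathbf{0}$, in which case the entire residual becomes the single final block $\bs_{i,K}$ with $\bR_{K,K-1} = \mathbf{0}$, exactly as permitted by the lemma.

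The main obstacle will be careful bookkeeping: each new transformation acts only on the still-undecomposed residual coordinates, so the zeros in the lower-left of~\eqref{full-rank-rep} established by earlier steps are automatically preserved; however the entries $\bR_{j,k}$ with $k \ge j$ get updated as later transformations change the basis of the residual and must be relabeled at each iteration into a single final set of matrices. Composing all the invertible block transformations into the one-to-one map $\cL_f$, and noting $\sum_{k=1}^{K} N_k = N_d$ by construction, then yields the form~\eqref{full-rank-rep}.
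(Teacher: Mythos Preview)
Your proposal is correct and follows essentially the same iterative rank-revealing construction as the paper: at each step you set $N_k=\mathrm{rank}$ of the matrix driving the previously extracted block into the residual, apply an invertible change of coordinates on the residual to isolate a full-row-rank top sub-block $\bR_{k,k-1}$ with zeros below, and iterate until the residual is exhausted or the driving matrix vanishes. The only cosmetic difference is that the paper fixes a specific lower-unitriangular $\bM_k$ (after permuting rows so the first $N_k$ rows are independent) where you allow a general invertible $\bT_k$ from elementary row operations; the bookkeeping and termination analysis are otherwise identical.
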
  
The transformation to $\hat{\bs}_i$ involves repeated application of the technique in case 1.
The proof is provided in Appendix~\ref{app:full-rank}. The proof provides an explicit construction of $\cL_f$.

\begin{lemma}
\label{lem:bf}
Consider the source $\hat{\bs}_{i}=\mathcal{L}_f(\bs_{i})$ where $\bs_{i}$ is a semi-deterministic Markov source and $\hat{\bs}_i$ is defined in~\eqref{full-rank-rep}. There exists a one-to-one linear transformation $\mathcal{L}_{b}$ which maps $\hat{\bs}_i$ to a diagonally correlated deterministic Markov source $\tilde{\bs}_{i}$ that satisfies~\eqref{block-diag-rep-2}.
\end{lemma}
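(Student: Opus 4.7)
The plan is to construct $\mathcal{L}_b$ explicitly as a memoryless, block upper-triangular linear map, using the elimination technique of Case~1 (Section~\ref{sec:LSD}) as a building block but now applied recursively to every row of the transition matrix in~\eqref{full-rank-rep}. Specifically, I would propose the ansatz
\[
\tilde{\bs}_{i,j} \;\triangleq\; \hat{\bs}_{i,j} \;+\; \sum_{k=j+1}^{K} \bX_{j,k}\, \hat{\bs}_{i,k}, \qquad j=0,1,\ldots,K,
\]
where $\{\bX_{j,k}\}_{0\le j<k\le K}$ are matrices of appropriate dimensions yet to be determined. The map is automatically one-to-one since it is block upper-triangular with identity blocks on the diagonal; its inverse has the same form and is obtained by back-substitution.

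Next, I would substitute the evolution~\eqref{full-rank-rep} into the ansatz and impose the target relation $\tilde{\bs}_{i,j}=\bR_{j,j-1}\tilde{\bs}_{i-1,j-1}$ from~\eqref{block-diag-rep-2}. Matching the coefficient of each $\hat{\bs}_{i-1,\ell}$ on both sides produces the system of matrix equations
\[
\bR_{j,j-1}\,\bX_{j-1,\ell} \;=\; \bR_{j,\ell} \;+\; \sum_{j<k\le \ell+1} \bX_{j,k}\,\bR_{k,\ell}, \qquad 1\le j\le K,\;\ell\ge j,
\]
while the coefficient of $\hat{\bs}_{i-1,j-1}$ matches identically. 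The key observation is that the right-hand side only involves matrices $\bX_{j,k}$ with strictly larger first index, so the equations can be solved by a top-down recursion on $j$: start at $j=K$ (where the right-hand side contains no $\bX$-term) and descend to $j=1$, reading off $\bX_{j-1,\ell}$ for $\ell=j,\ldots,K$ at each step. The row $j-1=0$, which modifies the ``innovation'' component, is determined last and is what absorbs the residual obstructions --- exactly the role played by $\bX$ in Case~1.

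Solvability at every step is guaranteed by Lemma~\ref{lem:full-rank}: the coefficient matrix on the left, $\bR_{j,j-1}$, has full row rank $N_j$, so it possesses a right inverse and the equation $\bR_{j,j-1}\bX=\bC$ admits a solution for every admissible $\bC$. Once the $\bX_{j,k}$ are fixed, I would verify that $\{\tilde{\bs}_i\}$ is indeed a diagonally correlated deterministic Markov source: the relation $\tilde{\bs}_{i,j}=\bR_{j,j-1}\tilde{\bs}_{i-1,j-1}$ holds for $j\ge 1$ by construction, and $\tilde{\bs}_{i,0}=\hat{\bs}_{i,0}+\sum_{k\ge 1}\bX_{0,k}\hat{\bs}_{i,k}$ remains uniform over $\{0,1\}^{N_0}$ and independent of the past because $\hat{\bs}_{i,0}$ is uniform and independent of $\hat{\bs}_{i-1}$ (which determines the added deterministic term).

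The main obstacle is merely bookkeeping: keeping track of the dependency order of the $\bX_{j,k}$ so that the recursion is well-defined, and handling the degenerate case $\bR_{K,K-1}=\mathrm{0}$ allowed by Lemma~\ref{lem:full-rank}. In that case the last sub-symbol $\hat{\bs}_{i,K}$ is trivial (identically zero under our stationary assumption) and can be dropped, reducing the effective depth to $K-1$ without loss of generality, after which the recursion proceeds exactly as above.
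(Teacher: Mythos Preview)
Your proposal is correct and is essentially the paper's argument packaged more compactly: the paper performs the same block upper-triangular elimination one row at a time via a sequence of similarity transforms $\bD_1,\ldots,\bD_K$, zeroing out the off-diagonal blocks from the bottom row upward, whereas you write the composite transformation as a single ansatz and solve the identical recursion (your equation $\bR_{j,j-1}\bX_{j-1,\ell}=\bR_{j,\ell}+\sum_{j<k\le\ell+1}\bX_{j,k}\bR_{k,\ell}$ is exactly what the paper's step-$j$ transform enforces). One small correction on the degenerate case $\bR_{K,K-1}=\mathrm{0}$: the last sub-symbol is not identically zero under stationarity but rather $\hat{\bs}_{i,K}=\bR_{K,K}^{\,i+1}\hat{\bs}_{-1,K}$, which is known at the decoder because $\bs_{-1}$ is revealed, and can therefore be eliminated just as you say.
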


To illustrate  the idea, here we study a simple example. The complete proof is available in Appendix~\ref{app:bf}. Assume $K=2$ and consider the source $\hat{\bs}_{i}$ consisting of $N_0$ innovation bits $\bs_{i,0}$ and $N_1+N_2$ deterministic bits as 
\begin{align}
\hat{\bs}_{i,d}=\begin{pmatrix}
\bs_{i,1}\\
\bs_{i,2}
\end{pmatrix}=\begin{pmatrix}
\bR_{1,0}& \bR_{1,1} & \bR_{1,2}\\
\mathrm{0}& \bR_{2,1} & \bR_{2,2}
\end{pmatrix}\begin{pmatrix}
\bs_{i-1,0}\\
\bs_{i-1,1}\\
\bs_{i-1,2}
\end{pmatrix}\label{det-2}
\end{align} 
where $\bR_{1,0}$ and $\bR_{2,1}$ are full-rank (non-zero) matrices of rank $N_1$ and $N_2$, respectively.


The following steps transforms the source $\hat{\bs}_{i}$ into diagonally correlated Markov source. \\
{\bf Step $1$: } Define 
\begin{align}
\begin{pmatrix}
\tilde{\bs}_{i,1}\\
\tilde{\bs}_{i,2}
\end{pmatrix}\triangleq \begin{pmatrix}
\bI_{N_{1}}& \bX_1\\
\mathrm{0}& \bI_{N_2}
\end{pmatrix}\begin{pmatrix}
\bs_{i,1}\\
\bs_{i,2}
\end{pmatrix}
\end{align} and 
\begin{align}
\bD_1 \triangleq \begin{pmatrix}
\bI_{N_0}& \mathrm{0}&\mathrm{0}\\
\mathrm{0}& \bI_{N_1} & \bX_1\\
\mathrm{0}& \mathrm{0}& \bI_{N_2}
\end{pmatrix}
\end{align}  and note that
\begin{align}
\bD^{-1}_1 = \begin{pmatrix}
\bI& \mathrm{0}&\mathrm{0}\\
\mathrm{0}& \bI & -\bX_1\\
\mathrm{0}& \mathrm{0}& \bI
\end{pmatrix}
\end{align} By these definitions it is not hard to check that 
\begin{align}
&\begin{pmatrix}
\tilde{\bs}_{i,1}\\
\tilde{\bs}_{i,2}
\end{pmatrix}\notag\\
&=\begin{pmatrix}
\bI& \bX_1\\
\mathrm{0}& \bI
\end{pmatrix}\begin{pmatrix}
\bR_{1,0}& \bR_{1,1} & \bR_{1,2}\\
\mathrm{0}& \bR_{2,1} & \bR_{2,2}
\end{pmatrix}\bD^{-1}_{1}\begin{pmatrix}
\bs_{i-1,0}\\
\tilde{\bs}_{i-1,1}\\
\tilde{\bs}_{i-1,2}
\end{pmatrix}\notag\\
&=\begin{pmatrix}
\bR_{1,0}& \tilde{\bR}_{1,1} & \tilde{\bR}_{1,2}\\
\mathrm{0}& \bR_{2,1} & \bR_{2,2}-\bR_{2,1}\bX_1
\end{pmatrix}\begin{pmatrix}
\bs_{i-1,0}\\
\tilde{\bs}_{i-1,1}\\
\tilde{\bs}_{i-1,2}
\end{pmatrix} \label{examp-1}
\end{align} where 
\begin{align}
\tilde{\bR}_{1,1}= {\bR}_{1,1}+\bX_1\bR_{2,1}
\end{align}
\begin{align}
\tilde{\bR}_{1,2}= {\bR}_{1,2}+\bX_1\bR_{2,2} -\bX_1\bR_{2,1}\bX_1-\bR_{1,1}\bX_1 
\end{align} $\bR_{2,1}$ is full-row-rank of rank $N_2$ and $\bR_{2,2}$ is $N_2 \times N_2$ matrix, thus $\bX_1$ can be selected such that
\begin{align}
\bR_{2,2}-\bR_{2,1}\bX_1=\mathrm{0}
\end{align} and \eqref{examp-1} reduces to
\begin{align}
&\begin{pmatrix}
\tilde{\bs}_{i,1}\\
\tilde{\bs}_{i,2}
\end{pmatrix}=\begin{pmatrix}
\bR_{1,0}& \tilde{\bR}_{1,1} & \tilde{\bR}_{1,2}\\
\mathrm{0}& \bR_{2,1} & \mathrm{0}
\end{pmatrix}\begin{pmatrix}
\bs_{i-1,0}\\
\tilde{\bs}_{i-1,1}\\
\tilde{\bs}_{i-1,2}
\end{pmatrix} 
\end{align}
{\bf Step 2: } Define
\begin{align}
\tilde{\bs}_{i-1,0} \triangleq \begin{pmatrix}
\bI& \bX_{1,2}&\bX_{2,2}
\end{pmatrix}\begin{pmatrix}
\bs_{i-1,0}\\
\tilde{\bs}_{i-1,1}\\
\tilde{\bs}_{i-1,2}
\end{pmatrix}
\end{align} and 
\begin{align}
\bD_2 \triangleq \begin{pmatrix}
\bI& \bX_{1,2}&\bX_{2,2}\\
\mathrm{0}& \bI & \mathrm{0}\\
\mathrm{0}& \mathrm{0}& \bI
\end{pmatrix}
\end{align} and note that
\begin{align}
\bD^{-1}_2 = \begin{pmatrix}
\bI& -\bX_{1,2}&-\bX_{2,2}\\
\mathrm{0}& \bI & \mathrm{0}\\
\mathrm{0}& \mathrm{0}& \bI
\end{pmatrix}
\end{align}  It can be observed that
\begin{align}
&\begin{pmatrix}
\tilde{\bs}_{i,1}\\
\tilde{\bs}_{i,2}
\end{pmatrix}=\begin{pmatrix}
\bR_{1,0}& \tilde{\bR}_{1,1} & \tilde{\bR}_{1,2}\\
\mathrm{0}& \bR_{2,1} & \mathrm{0}
\end{pmatrix}\bD^{-1}_2\begin{pmatrix}
\tilde{\bs}_{i-1,0}\\
\tilde{\bs}_{i-1,1}\\
\tilde{\bs}_{i-1,2}
\end{pmatrix}\\
&= \begin{pmatrix}
\bR_{1,0}& \tilde{\bR}_{1,1}-\bR_{1,0}\bX_{1,2} & \tilde{\bR}_{1,2}-\bR_{1,0}\bX_{2,2} \\
\mathrm{0}& \bR_{2,1} & \mathrm{0}
\end{pmatrix}\notag\\
&\hspace{2cm}\times \begin{pmatrix}
\tilde{\bs}_{i-1,0}\\
\tilde{\bs}_{i-1,1}\\
\tilde{\bs}_{i-1,2}
\end{pmatrix}
\end{align} Similarly, $\bX_{1,2}$ and $\bX_{2,2}$ are selected such that 
\begin{align}
\tilde{\bR}_{1,1}-\bR_{1,0}\bX_{1,2} &= \mathrm{0}\\
\tilde{\bR}_{1,2}-\bR_{1,0}\bX_{2,2} &= \mathrm{0}
\end{align} Therefore, the source $\tilde{\bs}_i$ consists of $N_0$ innovation bits and $N_1+N_2$ deterministic bits as
\begin{align}
\begin{pmatrix}
\tilde{\bs}_{i,1}\\
\tilde{\bs}_{i,2}
\end{pmatrix}&=\begin{pmatrix}
\bR_{1,0}& \mathrm{0} & \mathrm{0} \\
\mathrm{0}& \bR_{2,1} & \mathrm{0}
\end{pmatrix}\begin{pmatrix}
\tilde{\bs}_{i-1,0}\\
\tilde{\bs}_{i-1,1}\\
\tilde{\bs}_{i-1,2}
\end{pmatrix}\\
&=\begin{pmatrix}
\bR_{1,0}& \mathrm{0}\\
\mathrm{0}& \bR_{2,1} 
\end{pmatrix}\begin{pmatrix}
\tilde{\bs}_{i-1,0}\\
\tilde{\bs}_{i-1,1}
\end{pmatrix}.
\end{align} Clearly, $\tilde{\bs}_{i}=\mathcal{L}_{b}(\hat{\bs}_i)$ is a diagonally correlated deterministic Markov source and the mapping is invertible.

Exploiting Lemmas \ref{lem:full-rank} and \ref{lem:bf}, any linear semi-deterministic source $\bs_{i}$ is first transformed into a diagonally correlated deterministic Markov source $\tilde{\bs}_{i}=\mathcal{L}_{b}(\mathcal{L}_f(\bs_{i}))$ and then is transmitted through the channel using prospicient coding scheme. The block diagram of encoder and decoder is shown in Fig~\ref{fig:LfLb}. The optimality of the scheme can be shown using a similar argument in Sec.~\ref{sec:rate-optimal}.  

\section{Gaussian Sources: Proof of Theorem~\ref{thm:gauss-rate}}
\label{sec:Gauss}
In this section we investigate the Gaussian source model with window recovery constraints explained in \ref{sub:Gaussian}. First we argue that it is sufficient to consider the case $K=B+W$ and then the coding scheme and rate-optimality of the scheme is studied. Finally the rate-recovery function of different schemes are provided for comparison.  

\subsection{Sufficiency of $K=B+W$}
First, we argue that it is sufficient to consider the case $K=B+W$. In particular, if $K<B+W$, we can assume that the decoder, instead of recovering the source $\rvbt_{i}=(\rvs_{i}, \rvs_{i-1}, \ldots, \rvs_{i-K})^{T}$ at time $i$ within distortion $\bd$,  aims to recover the source $\rvbt'_{i}=(\rvs_{i}, ..., \rvs_{i-K'})^{T}$ within distortion $\bd'$ where $K'=B+W$ and 
\begin{align}
d'_{j}=\begin{cases}
d_{j} &\mbox{for } j\in\{1, 2, ..., K\}\\
1&\mbox{for } j\in\{K+1, ..., K'\}
\end{cases}
\end{align}
In the case $K>B+W$, at each time $i$, $\rvs_{i-j}$ is required to be recovered within distortion $d_{j}$ for $j\in\{B+W+1, \ldots, K\}$, however there is always a better reconstruction available from the past. In particular, according to the problem description the decoder at each time $i$ has recovered $\hat{\rvbt}_{i-1}$ or $\hat{\rvbt}_{i-B-W-1}$.
In the former case, $\{\hat{\rvs}_{i-j}\}_{d_{j-1}}$ is available from time $i-1$ and $d_{j-1}\le d_{j}$ and in the latter case, $\{\hat{\rvs}_{i-j}\}_{d_{j-W-B-1}}$ is available from time $i-B-W-1$ and $d_{j-W-B-1}\le d_{j}$. Thus one can simply solve the problem for the case $K=B+W$. 

\subsection{Coding Scheme}
In this section, we propose a coding scheme based on \emph{successive refinement} of the Gaussian source. The block diagram of the scheme is shown in  Fig.~\ref{fig:Ach}. 
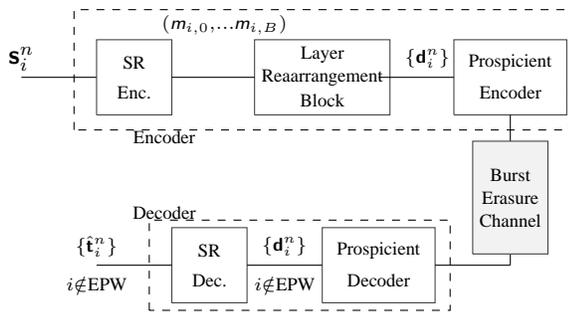
\begin{figure}
\vspace{1em}
\begin{center}
\begin{tikzpicture}
\def \a {-1}
\def \b {0}{
\draw [-] (-1.5+\a,\b) -- (-.5,\b);
\draw [dashed] (-1.8,\b-.7) -- (-1.8,\b+.9) -- (4.8,\b+.9) -- (4.8,\b-.7) -- (-1.8,\b-.7) ; 
\draw [white] (-.6,\b-.8) -- (-.6,\b-.8) node {$\color{black}\scriptstyle\textrm{Encoder}$};
\fill [fill=white, draw=gray!50!black] (\a-.5,\b-.5) -- (\a-.5,\b+.5) -- (\a+.5,\b+.5) -- (\a+.5,\b-.5) -- (\a-.5,\b-.5);
\draw [white] (\a-1.5,\b+0.25) -- (\a-1.5,\b+0.25) node {$\color{black} \rvbs^n_{i}$};
\draw [white] (\a,\b+.2) -- (\a,\b+.2) node {$\color{black}\scriptstyle\textrm{SR}$};
\draw [white] (\a,\b-.2) -- (\a,\b-.2) node {$\color{black}\scriptstyle\textrm{Enc.}$};
}

\def \a {0}
\def \b {-2.5}{
\draw [dashed] (-0.8,\b-.6) -- (-.8,\b+.6) -- (3.2,\b+.6) -- (3.2,\b-.6) -- (-0.8,\b-.6) ; 
\draw [white] (-.6,\b+.7) -- (-.6,\b+.7) node {$\color{black}\scriptstyle\textrm{Decoder}$};
\fill [fill=white, draw=gray!50!black] (\a-.5,\b-.5) -- (\a-.5,\b+.5) -- (\a+.5,\b+.5) -- (\a+.5,\b-.5) -- (\a-.5,\b-.5);
\draw [white] (\a-1.5,\b+0.25) -- (\a-1.5,\b+0.25) node {$\color{black} \scriptstyle\{\hat{\rvbt}^n_{i}\}$};
\draw [white] (\a-1.5,\b-0.25) -- (\a-1.5,\b-0.25) node {$\color{black} \scriptstyle i\notin \textrm{EPW}$};
\draw [white] (\a,\b+.2) -- (\a,\b+.2) node {$\color{black}\scriptstyle\textrm{SR}$};
\draw [white] (\a,\b-.2) -- (\a,\b-.2) node {$\color{black}\scriptstyle\textrm{Dec.}$};
\draw [-] (-1.5+\a,\b) -- (-.5,\b);
}

\def \a {1.5}
\def \b {0}{
\fill [fill=white, draw=gray!50!black] (\a-.9,\b-.5) -- (\a-.9,\b+.5) -- (\a+.9,\b+.5) -- (\a+.9,\b-.5) -- (\a-.9,\b-.5);
\draw [white] (\a-1.3,\b+.7) -- (\a-1.3,\b+.7) node {$\color{black}\scriptstyle (\rvm_{i,0},\ldots \rvm_{i,B})$};
\draw [white] (\a,\b+.3) -- (\a,\b+.3) node {$\color{black}\scriptstyle\textrm{Layer}$};
\draw [white] (\a,\b) -- (\a,\b) node {$\color{black}\scriptstyle\textrm{Reaarrangement}$};
\draw [white] (\a,\b-.3) -- (\a,\b-.3) node {$\color{black}\scriptstyle\textrm{Block}$};
\draw [-] (\a-2,\b) -- (\a-.9,\b);
}

\def \a {4}
\def \b {0}{
\fill [fill=white, draw=gray!50!black] (\a-.75,\b-.5) -- (\a-.75,\b+.5) -- (\a+.75,\b+.5) -- (\a+.75,\b-.5) -- (\a-.75,\b-.5);
\draw [white] (\a-1.1,\b+0.25) -- (\a-1.1,\b+0.25) node {$\color{black} \scriptstyle \{\rvbd^n_{i}\}$};
\draw [white] (\a,\b+.2) -- (\a,\b+.2) node {$\color{black}\scriptstyle\textrm{Prospicient}$};
\draw [white] (\a,\b-.2) -- (\a,\b-.2) node {$\color{black}\scriptstyle\textrm{Encoder}$};
\draw [-] (\a-1.75,\b) -- (\a-.75,\b);
\draw [-] (\a,\b-.5) -- (\a,\b-2);
}

\def \a {2.25}
\def \b {-2.5}{
\fill [fill=white, draw=gray!50!black] (\a-.75,\b-.5) -- (\a-.75,\b+.5) -- (\a+.75,\b+.5) -- (\a+.75,\b-.5) -- (\a-.75,\b-.5);
\draw [white] (\a-1.25,\b+0.25) -- (\a-1.25,\b+0.25) node {$\color{black} \scriptstyle\{\rvbd_{i}^n\}$};
\draw [white] (\a-1.25,\b-0.25) -- (\a-1.25,\b-0.25) node {$\color{black} \scriptstyle i\notin \textrm{EPW}$};
\draw [white] (\a,\b+.2) -- (\a,\b+.2) node {$\color{black}\scriptstyle\textrm{Prospicient}$};
\draw [white] (\a,\b-.2) -- (\a,\b-.2) node {$\color{black}\scriptstyle\textrm{Decoder}$};
\draw [-] (\a-1.75,\b) -- (\a-.75,\b);
\draw [-] (\a+.75,\b) -- (\a+1.75,\b);
\draw [-] (\a+1.75,\b) -- (\a+1.75,\b+.5);
}

\def \a {4}
\def \b {-1.6}{
\fill [fill=gray!10!white, draw=gray!50!black] (\a-.5,\b-.75) -- (\a-.5,\b+.75) -- (\a+.5,\b+.75) -- (\a+.5,\b-.75) -- (\a-.5,\b-.75);
\draw [white] (\a,\b+.3) -- (\a,\b+.3) node {$\color{black}\scriptstyle\textrm{Burst}$};
\draw [white] (\a,\b) -- (\a,\b) node {$\color{black}\scriptstyle\textrm{Erasure}$};
\draw [white] (\a,\b-.3) -- (\a,\b-.3) node {$\color{black}\scriptstyle\textrm{Channel}$};
}
\end{tikzpicture}
\caption{Schematic of Encoder and Decoder for Gaussian Case. (EPW$\equiv$ Error Propagation Window)}
\label{fig:Ach}
\end{center}
\end{figure}
\subsubsection{Successive Refinement (SR) Encoder}
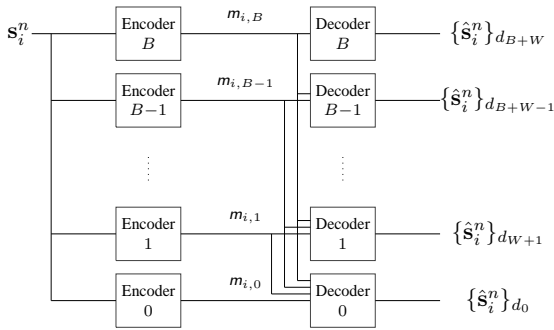
\begin{figure}
\vspace{1em}
\begin{center}
\noindent
\resizebox{3in}{1.7in}{
\begin{tikzpicture}
\draw [white] (-1.5,-.4) -- (-1.5,-.4) node {$\color{black}\bs_{i}^{n}$};
\def \a {0}{
\fill [fill=white, draw=gray!50!black] (0,-.8-\a) -- (0,-\a) -- (1,-\a) -- (1,-.8-\a) -- (0,-.8-\a);
\draw [white] (.5,-\a-0.25) -- (.5,-\a-0.25) node {$\color{black}\textrm{\scriptsize{Encoder}}$};
\draw [white] (.5,-\a-0.55) -- (.5,-\a-0.55) node {$\color{black}\scriptstyle{B}$};
\draw [-] (-1.3,-.4) -- (0,-.4);
\draw [-] (1,-.4) -- (3,-.4);
\fill [fill=white, draw=gray!50!black] (3,-.8-\a) -- (3,-\a) -- (4,-\a) -- (4,-.8-\a) -- (3,-.8-\a);
\draw [white] (3.5,-\a-0.25) -- (3.5,-\a-0.25) node {$\color{black}\textrm{\scriptsize{Decoder}}$};
\draw [white] (3.5,-\a-0.55) -- (3.5,-\a-0.55) node {$\color{black}\scriptstyle{B}$};
\draw [-] (4,-.4) -- (5,-.4);
\draw [white] (5.9,-\a-0.4) -- (5.9,-\a-0.4) node {$\color{black}\{\hat{\bs}_{i}^n\}_{d_{B+W}}$};
\draw [white] (2,-\a-0.15) -- (2,-\a-0.15) node {$\color{black}\scriptstyle\rvm_{i,B}$};
}
\def \a {1}{
\fill [fill=white, draw=gray!50!black] (0,-.8-\a) -- (0,-\a) -- (1,-\a) -- (1,-.8-\a) -- (0,-.8-\a);
\draw [white] (.5,-\a-0.25) -- (.5,-\a-0.25) node {$\color{black}\textrm{\scriptsize{Encoder}}$};
\draw [white] (.5,-\a-0.55) -- (.5,-\a-0.55) node {$\color{black}\scriptstyle{B-1}$};
\draw [-] (-1,-.4-\a) -- (0,-.4-\a);
\draw [-] (1,-.4-\a) -- (3,-.4-\a);
\fill [fill=white, draw=gray!50!black] (3,-.8-\a) -- (3,-\a) -- (4,-\a) -- (4,-.8-\a) -- (3,-.8-\a);
\draw [white] (3.5,-\a-0.25) -- (3.5,-\a-0.25) node {$\color{black}\textrm{\scriptsize{Decoder}}$};
\draw [white] (3.5,-\a-0.55) -- (3.5,-\a-0.55) node {$\color{black}\scriptstyle{B-1}$};
\draw [-] (4,-.4-\a) -- (5,-.4-\a);
\draw [white] (5.9,-\a-0.4) -- (5.9,-\a-0.4) node {$\color{black}\{\hat{\bs}_{i}^n\}_{d_{B+W-1}}$};
\draw [white] (2,-\a-0.15) -- (2,-\a-0.15) node {$\color{black}\scriptstyle\rvm_{i,B-1}$};
}
\draw [dotted] (.5,-2-.6)--(.5,-2-.2);
\draw [dotted] (3.5,-2-.6)--(3.5,-2-.2);
\def \a {3}{
\fill [fill=white, draw=gray!50!black] (0,-.8-\a) -- (0,-\a) -- (1,-\a) -- (1,-.8-\a) -- (0,-.8-\a);
\draw [white] (.5,-\a-0.25) -- (.5,-\a-0.25) node {$\color{black}\textrm{\scriptsize{Encoder}}$};
\draw [white] (.5,-\a-0.55) -- (.5,-\a-0.55) node {$\color{black}\scriptstyle{1}$};
\draw [-] (-1,-.4-\a) -- (0,-.4-\a);
\draw [-] (1,-.4-\a) -- (3,-.4-\a);
\fill [fill=white, draw=gray!50!black] (3,-.8-\a) -- (3,-\a) -- (4,-\a) -- (4,-.8-\a) -- (3,-.8-\a);
\draw [white] (3.5,-\a-0.25) -- (3.5,-\a-0.25) node {$\color{black}\textrm{\scriptsize{Decoder}}$};
\draw [white] (3.5,-\a-0.55) -- (3.5,-\a-0.55) node {$\color{black}\scriptstyle{1}$};
\draw [-] (4,-.4-\a) -- (5,-.4-\a);
\draw [white] (5.9,-\a-0.4) -- (5.9,-\a-0.4) node {$\color{black}\{\hat{\bs}_{i}^n\}_{d_{W+1}}$};
\draw [white] (2,-\a-0.15) -- (2,-\a-0.15) node {$\color{black}\scriptstyle\rvm_{i,1}$};
}
\def \a {4}{
\fill [fill=white, draw=gray!50!black] (0,-.8-\a) -- (0,-\a) -- (1,-\a) -- (1,-.8-\a) -- (0,-.8-\a);
\draw [white] (.5,-\a-0.25) -- (.5,-\a-0.25) node {$\color{black}\textrm{\scriptsize{Encoder}}$};
\draw [white] (.5,-\a-0.55) -- (.5,-\a-0.55) node {$\color{black}\scriptstyle{0}$};
\draw [-] (-1,-.4-\a) -- (0,-.4-\a);
\draw [-] (1,-.4-\a) -- (3,-.4-\a);
\fill [fill=white, draw=gray!50!black] (3,-.8-\a) -- (3,-\a) -- (4,-\a) -- (4,-.8-\a) -- (3,-.8-\a);
\draw [white] (3.5,-\a-0.25) -- (3.5,-\a-0.25) node {$\color{black}\textrm{\scriptsize{Decoder}}$};
\draw [white] (3.5,-\a-0.55) -- (3.5,-\a-0.55) node {$\color{black}\scriptstyle{0}$};
\draw [-] (4,-.4-\a) -- (5,-.4-\a);
\draw [white] (5.9,-\a-0.4) -- (5.9,-\a-0.4) node {$\color{black}\{\hat{\bs}_{i}^n\}_{d_{0}}$};
\draw [white] (2,-\a-0.15) -- (2,-\a-0.15) node {$\color{black}\scriptstyle\rvm_{i,0}$};
}

\draw [-] (-1,-.4) -- (-1,-4.4);
\draw [-] (2.8,-.4) -- (2.8,-4.1);
\draw [-] (2.6,-1.4) -- (2.6,-4.2);
\draw [-] (2.4,-3.4) -- (2.4,-4.3);

\draw [-] (2.8,-1.3) -- (3,-1.3);
\draw [-] (2.8,-3.2) -- (3,-3.2);
\draw [-] (2.6,-3.3) -- (3,-3.3);
\draw [-] (2.8,-4.1) -- (3,-4.1);
\draw [-] (2.6,-4.2) -- (3,-4.2);
\draw [-] (2.4,-4.3) -- (3,-4.3);

\end{tikzpicture}}
\caption{($B+1$)-layer coding scheme based on Successive Refinement.}
\label{Suc-Ref}
\end{center}
\end{figure}

The structure of SR encoder is shown in Fig.~\ref{Suc-Ref}. 
The encoder at time $i$, encodes source signal $\bs^n_{i}$ using a $(B+1)$-layer successive refinement coding scheme~\cite{equitzCover:91, rimoldi:94} to generate $(B+1)$ codewords whose indices are given by $\{\rvm_{i,0}, \rvm_{i,1}, \ldots, \rvm_{i,B-1}, \rvm_{i,B}\}$ where $\rvm_{i,j}\in \{1,2,\ldots,2^{n\tilde{R}_j}\}$ for $j\in\{0,1,\dots,B\}$ and
\begin{align}
&\tilde{R}_{j}=\begin{cases}
\frac{1}{2}\log(\frac{d_{W+1}}{d_{0}})& \mbox{for } j=0\\
\frac{1}{2}\log(\frac{d_{W+j+1}}{d_{W+j}})& \mbox{for } j\in\{1,2,\ldots,B\}\\
\frac{1}{2}\log(\frac{1}{d_{W+j}})& \mbox{for }  j=B,
\end{cases}
\end{align}

The $j$-th layer uses indices \begin{align}M_{i,j} \defeq (\rvm_{i,j},\ldots,\rvm_{i,B}) \label{eq:Mdef}\end{align} for reproduction and the associated rate is given by:
\begin{align}
&R_{j}=\notag\\
&\begin{cases}
\sum_{k=0}^{B}\tilde{R}_{k}= \frac{1}{2}\log(\frac{1}{d_{0}})& \mbox{for } j=0\\
\sum_{k=j}^{B}\tilde{R}_{k}= \frac{1}{2}\log(\frac{1}{d_{W+j}})& \mbox{for } j\in\{1,2,\ldots,B-1\},
\end{cases}
\label{eq:Rdef}\end{align}

and the corresponding distortion associated with layer $j$ equals $d_0$ if $j=0$ and $d_{W+j}$ for $j=1,\ldots, B$.

Clearly, for recovering $\bt_i^n = (\rvs_i^n, \ldots, \rvs_{i-B-W}^n)$ with a distortion tuple $(d_0,\ldots, d_{B+W}),$  it suffices that the destination have access to:
\begin{equation}
\cM_i = \begin{bmatrix} M_{i,0} \\ M_{i-1,0} \\  \vdots \\  M_{i-W,0} \\  M_{i-W-1,1} \\  \vdots \\  M_{i-W-B,B} \end{bmatrix}
\label{eq:cMdef}
\end{equation}

As described next, our coding scheme  uses the prospicient code construction in Section~\ref{sec:deterministic} for the diagonally correlated source model to guarantee that the receiver obtains $\cM_i$ for each $i$ outside the error propagation window.

\subsubsection{Layer Rearrangement Block}

For simplicity first assume that all $\tilde{R}_j$s are integer.  The results can be generalized for non-integer rates as explained in Appendix~\ref{app:gen}.  Each index $\rvm_{i,j}$ is isomorphic to a length $n$ sequence $\bb_{i,j}^n$  over the alphabet $\cB_j \in \{0,1\}^{R_j}$ and the indices associated with layer $j$ are isomorphic to a sequence $\bc_{i,j}^n$ defined as
\begin{equation}
M_{i,j}\leftrightarrow \bc_{i,j}^n = \left(\bb_{i,j}^n, \ldots, \bb_{i,B}^n\right).\label{eq:def_cij}
\end{equation}
and the collection of layers $\cM_{i}$ as defined in~\eqref{eq:cMdef} is isomorphic to
\begin{equation}
\cM_i \leftrightarrow \bd_i^n = \begin{bmatrix}
\bc_{i,0}^n\\
\bc_{i-1,0}^n\\
\vdots\\ 
\bc_{i-W,0}^n\\
\bc_{i-W-1,1}^n\\
\vdots\\
\bc_{i-W-B,B}^n
\end{bmatrix}\label{eq:d_def}
\end{equation}

As shown in Fig.~\ref{fig:Ach}, the sequence $\bd_i^n$ is encoded at time $i$ and recovered outside the error propagation window.

\subsubsection{Prospicient Encoder/Decoder} 
It can be readily verified that $\bd_i^n$ in \eqref{eq:d_def} is a linear diagonally correlated semi-deterministic source as defined in Def.~\ref{def:Diagonal}. Hence
applying the prospecient coding scheme in section~\ref{sec:deterministic}, the achievable rate from Prop.~\ref{prop:LDD} is
\begin{align}
R&=R_{0}+\frac{1}{W+1}\sum_{k=1}^{B}R_{k}\\
&=\frac{1}{2}\log\frac{1}{d_{0}}+\frac{1}{2(W+1)}\sum_{k=1}^{B}\log\frac{1}{d_{W+j}}. \label{eq:R_gauss_layered_achiev}
\end{align} 

\subsubsection{Decoding of $\hat{\bt}_i^n$} Using $\bd_i^n$, which is isomorphic to $\cM_i$ defined in~\eqref{eq:cMdef}, the decoder is guaranteed to recover $(\rvs_i^n, \ldots, \rvs_{i-W}^n)$ with distortions $d_0$ and the sequences $\rvs_{i-W-1}^n, \ldots, \rvs_{i-W-B}^n$ with distortions $d_{W+1},\ldots, d_{W+B}$ respectively.

\subsection{Converse for Theorem~\ref{thm:gauss-rate}}
We need to show that for any sequence of codes that achieve a distortion tuple $(d_0,\ldots, d_{W+B})$ the rate is lower bounded by~\eqref{eq:R_gauss_layered_achiev}.

As in the proof of Theorem~\ref{thm:genUB_LB}, we consider a periodic erasure channel of period $p=B+W+1$ and assume that the first $B$ positions of each period are erased.
Consider,

\begin{align}
&(W+1)n(t+1)R \notag\\
&= H\left(\rvf_{B}^{p-1},\rvf_{p+B}^{2p-1},\ldots, \rvf_{(t-1)p+B}^{tp-1}, \rvf_{tp+B}^{(t+1)p-1}\right)\\
&= H(\rvf_{B}^{p-1}) + \sum_{k=1}^t H(\rvf_{kp+B}^{(k+1)p-1} | \rvf_{B}^{p-1},\rvf_{p+B}^{2p-1},\ldots, \rvf_{(k-1)p+B}^{kp-1})\notag\\
&\ge H(\rvf_{B}^{p-1}) +  \sum_{k=1}^t H\left(\rvf_{kp+B}^{(k+1)p-1} | \rvf_{0}^{kp-1}\right)\label{eq:Gauss-LB}
\end{align}
where the last step follows from the fact that conditioning reduces entropy. 

We next establish the following claim, whose proof is in Appendix~\ref{app:Gauss-Ent}.
\begin{claim}
For each $k \ge 1$ we have that
\begin{align}
&H\left(\rvf_{kp+B}^{(k+1)p-1} | \rvf_{0}^{kp-1}\right) \notag\\&\ge \frac{n}{2}\sum_{i=1}^{B} \log (\frac{1}{ d_{W+i}}) + \frac{n(W+1)}{2}\log(\frac{1}{d_{0}}) \label{eq:gauss-Ent-LB}.
\end{align}
\label{claim:Gauss-Ent}
\end{claim}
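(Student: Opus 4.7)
The naive approach---using only the reconstruction constraints at the single out-of-EPW instant $(k+1)p-1$---would give a mutual information bound of the form $\frac{n}{2}\sum_{j=0}^{K}\log(1/d_j)$, which is strictly weaker than the target since $d_0 \le d_j$ for $j\ge 1$. The plan is therefore to layer in a second set of constraints coming from the \emph{no-erasure} channel realization: since the encoder $\cF_i$ is a deterministic function of the source sequences and is oblivious to the channel, the code must also succeed when no burst occurs. In Scenario~B (no erasures), at every time $i$ the decoder reconstructs $\rvs_i^n$ from $\rvf_0^i$ with distortion $d_0$, which yields the information-theoretic constraint $h(\rvs_i^n \mid \rvf_0^i)\le \tfrac{n}{2}\log(2\pi e d_0)$ for every $i$ --- a constraint on the joint law that holds regardless of what the channel actually does.

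Let $\cE = (\rvs_{kp}^n,\ldots,\rvs_{kp+B-1}^n)$ denote the erased sources and $\cS=(\rvs_{kp+B}^n,\ldots,\rvs_{(k+1)p-1}^n)$ the non-erased sources of period~$k$, and write $\rvf_I=\rvf_{kp+B}^{(k+1)p-1}$. I would begin by augmenting the conditioning with the entire past source history $\rvs_0^{kp-1,n}$ (this only reduces entropy), and then apply the chain rule of mutual information:
\begin{align*}
H(\rvf_I\mid \rvf_0^{kp-1}) &\ge H(\rvf_I\mid \rvf_0^{kp-1},\rvs_0^{kp-1,n}) \\
&\ge I(\rvf_I;\cE\mid \rvf_0^{kp-1},\rvs_0^{kp-1,n}) + I(\rvf_I;\cS\mid \rvf_0^{kp-1},\rvs_0^{kp-1,n},\cE).
\end{align*}
For the first (post-erasure) term, I would use that at time $(k+1)p-1$ the decoder reconstructs $\rvs_{kp+B-i}^n$ with distortion $d_{W+i}$ from $\rvf_0^{kp-1}\cup\rvf_I$, combined with independence of $\cE$ from everything in the past, to obtain $\ge \tfrac{n}{2}\sum_{i=1}^{B}\log(1/d_{W+i})$. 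For the second (steady-state) term, the key observation is that $\rvs_0^{kp-1,n}\cup \cE = \rvs_0^{kp+B-1,n}$ together with the determinism of the encoder allows one to compute $\rvf_0^{kp+B-1}$; then adjoining the $\rvf_{kp+B}^{kp+B+j}$ portion of $\rvf_I$ recovers the full $\rvf_0^{kp+B+j}$ for every $j\in[0,W]$. Monotonicity of conditional entropy together with the no-erasure constraint $h(\rvs_{kp+B+j}^n \mid \rvf_0^{kp+B+j})\le \tfrac{n}{2}\log(2\pi e d_0)$ then upper-bounds each $h(\rvs_{kp+B+j}^n\mid \rvf_0^{kp-1},\rvs_0^{kp-1,n},\cE,\rvf_I)$ by $\tfrac{n}{2}\log(2\pi e d_0)$; summing over $j$ and using independence of $\cS$ gives $\ge \tfrac{n(W+1)}{2}\log(1/d_0)$.

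The main obstacle is recognizing that two logically distinct sources of ``rate'' must be added: the $B$ coarse terms $\log(1/d_{W+i})$ come from the post-erasure distortion requirement at time $(k+1)p-1$, while the $(W+1)$ copies of $\log(1/d_0)$ cannot be extracted from Scenario~A at all and must be imported from the no-burst scenario via the deterministic structure of the encoder. Conditioning on the full source history $\rvs_0^{kp-1,n}$ is the bridge that lets these two constraints be combined without double-counting: it trades a harmless entropy reduction for the ability to reconstruct the erased codewords $\rvf_{kp}^{kp+B-1}$ and thereby unlock the Scenario~B inequality inside the $\cS$-term.
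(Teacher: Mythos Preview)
Your proposal is correct and follows essentially the same logic as the paper's proof. Both arguments split the bound into a ``post-erasure'' piece worth $\tfrac{n}{2}\sum_{i=1}^{B}\log(1/d_{W+i})$ (from the distortion requirement on $\rvbt_{(k+1)p-1}^n$) and a ``steady-state'' piece worth $\tfrac{n(W+1)}{2}\log(1/d_0)$ (from the no-erasure constraint that each $\rvs_{kp+B+j}^n$ is recoverable to distortion $d_0$ from $\rvf_0^{kp+B+j}$); the only cosmetic difference is that you bridge to the no-erasure constraint by conditioning on the past \emph{sources} $\rvs_0^{kp+B-1,n}$ (which determine $\rvf_0^{kp+B-1}$ via the deterministic encoder), whereas the paper keeps the residual term $H(\rvf_I\mid \rvf_0^{kp-1},\rvt_{(k+1)p-1}^n)$, folds it back into $H(\rvf_I\mid \rvf_0^{kp-1},\cE)$, and then augments the conditioning with the missing \emph{codewords} $\rvf_{kp}^{kp+B-1}$ directly.
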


Substituting~\eqref{eq:gauss-Ent-LB} into~\eqref{eq:Gauss-LB} 
and taking $n\rightarrow \infty$ and then $t\rightarrow \infty$, we recover 
\begin{align}
R \ge \frac{1}{2}\log_{2}(\frac{1}{d_{0}}) + \frac{1}{2(W+1)}\sum_{j=1}^{B} \log_2 (\frac{1}{ d_{W+j}}).
\end{align} as required.

\subsection{Illustrative Sub-optimal Schemes}
\label{subsec:comparison}
As explained in Sec.\ref{sub:Gaussian} and Fig.~\ref{fig:comparison}, the optimal performance is compared with the following sub-optimal schemes. 
\subsubsection{Still-Image Compression} In this scheme, the encoder ignores the decoder's memory and at time $i\ge 0$ encodes the source $\rvbt_{i}$  in a memoryless manner and sends the codewords through the channel. The rate associated to this scheme is 
\begin{align}
R_{\textrm{SI}}(\bd)&=I(\rvbt_{i};\hat{\rvbt}_{i})=\sum_{k=0}^{K}\frac{1}{2}\log \bigg(\frac{1}{d_{k}}\bigg)
\end{align} In this scheme, the decoder is able to recover the source whenever its codeword is available, i.e. at all the times except when the erasure happens.   
\subsubsection{Wyner-Ziv Compression with Delayed Side Information} At time $i$ the encoders assumes that $\rvbt_{i-B-1}$ is already reconstructed at the receiver within distortion $\bd$. With this assumption, it compresses the source $\rvbt_{i}$ according to Wyner-Ziv scheme and transmits the codewords through the channel. The rate of this scheme is 
\begin{align}
R_{\textrm{WZ}}(B,\bd)&=I(\rvbt_{i};\hat{\rvbt}_{i}|\hat{\rvbt}_{i-B-1})=\sum_{k=0}^{B}\frac{1}{2}\log \bigg(\frac{1}{d_{k}}\bigg)
\end{align} Note that, if at time $i$, $\hat{\rvbt}_{i-B-1}$ is not available, $\hat{\rvbt}_{i-1}$ is available and the decoder can still use it as side-information to construct $\hat{\rvbt}_{i}$ since $I(\rvbt_{i};\hat{\rvbt}_{i}|\hat{\rvbt}_{i-B-1})\ge I(\rvbt_{i};\hat{\rvbt}_{i}|\hat{\rvbt}_{i-1})$. 

As in the case of Still-Image Compression, the Wyner-Ziv scheme also enables the recovery of each source sequence except those with erased codewords.

\subsubsection{Predictive Coding plus FEC} This scheme consists of predictive coding (DPC)~\cite{berger71} followed by a Forward Error Correction (FEC) code to compensate the effect of packet losses of the channel. As the contribution of $B$ erased codewords need to be recovered using $W+1$ available codewords, the rate of this scheme can be computed as follows. 
\begin{align}
R_{\textrm{FEC}}(B,W,\bd)&=\frac{B+W+1}{W+1}I(\rvbt_{i};\hat{\rvbt}_{i}|\hat{\rvbt}_{i-1})\\&=\frac{B+W+1}{2(W+1)}\log \bigg(\frac{1}{d_{0}}\bigg)
\end{align}

\section{Symmetric Sources: Proof of Theorem~\ref{thm:binning}}
\label{sec:Symmetric}
\begin{figure*}
\begin{center}
\begin{tikzpicture}

\def \a {0}{
\draw [white] (-1.5,-\a-.4) -- (-1.5,-\a-.4) node {$\color{black}\bs_{j}^{n}$};
\fill [fill=white, draw=gray!50!black] (0,-.8-\a) -- (0,-\a) -- (1,-\a) -- (1,-.8-\a) -- (0,-.8-\a);
\draw [white] (.5,-\a-0.25) -- (.5,-\a-0.25) node {$\color{black}\scriptstyle\textrm{Encoder}$};
\draw [white] (.5,-\a-0.55) -- (.5,-\a-0.55) node {$\color{black}\scriptstyle{j}$};
\draw [->] (-1.3,-.4) -- (0,-.4);
\draw [->] (1,-.4) -- (3,-.4);
\fill [fill=white, draw=gray!50!black] (3,-.8-\a) -- (3,-\a) -- (4,-\a) -- (4,-.8-\a) -- (3,-.8-\a);
\draw [white] (3.5,-\a-0.25) -- (3.5,-\a-0.25) node {$\color{black}\scriptstyle\textrm{Decoder}$};
\draw [white] (3.5,-\a-0.55) -- (3.5,-\a-0.55) node {$\color{black}\scriptstyle{j}$};
\draw [->] (4,-.4) -- (5,-.4);
\draw [white] (5.5,-\a-0.4) -- (5.5,-\a-0.4) node {$\color{black}\hat{\bs}^n_{j}$};
\draw [white] (2,-\a-0.15) -- (2,-\a-0.15) node {$\color{black}\rvf_{j}$};
\draw [->] (3.5,-\a+1) -- (3.5,-\a);
\draw [white] (4,-\a+.5) -- (4,-\a+.5) node {$\color{black}\bs^n_{j-1}$};
}
\def \a {1.5}{
\draw [white] (-1.5,-\a-.4) -- (-1.5,-\a-.4) node {$\color{black}\bs_{j+1}^{n}$};
\fill [fill=white, draw=gray!50!black] (0,-.8-\a) -- (0,-\a) -- (1,-\a) -- (1,-.8-\a) -- (0,-.8-\a);
\draw [white] (.5,-\a-0.25) -- (.5,-\a-0.25) node {$\color{black}\scriptstyle\textrm{Encoder}$};
\draw [white] (.5,-\a-0.55) -- (.5,-\a-0.55) node {$\color{black}\scriptstyle{j+1}$};
\draw [->] (-1,-.4-\a) -- (0,-.4-\a);
\draw [->] (1,-.4-\a) -- (3,-.4-\a);
\fill [fill=white, draw=gray!50!black] (3,-.8-\a) -- (3,-\a) -- (4,-\a) -- (4,-.8-\a) -- (3,-.8-\a);
\draw [white] (3.5,-\a-0.25) -- (3.5,-\a-0.25) node {$\color{black}\scriptstyle\textrm{Decoder}$};
\draw [white] (3.5,-\a-0.55) -- (3.5,-\a-0.55) node {$\color{black}\scriptstyle{j+1}$};
\draw [->] (4,-.4-\a) -- (5,-.4-\a);
\draw [white] (5.5,-\a-0.4) -- (5.5,-\a-0.4) node {$\color{black}\hat{\bs}^n_{j+1}$};
\draw [white] (2,-\a-0.15) -- (2,-\a-0.15) node {$\color{black}\rvf_{j+1}$};
\draw [->] (3.5,-\a-1.8) -- (3.5,-\a-.8);
\draw [white] (4.2,-\a-1.3) -- (4.2,-\a-1.3) node {$\color{black}\bs^n_{j-B-1}$};
}

\draw [-] (2.5,-.4) -- (2.5,-1.7);
\draw [->] (2.5,-1.7) -- (3,-1.7);
\draw [white] (2,-4) -- (2,-4) node {$\color{black}\textrm{(a)}$};

\def \a {0}{
\draw [white] (6.5,-\a-.4) -- (6.5,-\a-.4) node {$\color{black}\bs_{j}^{n}$};
\fill [fill=white, draw=gray!50!black] (8,-.8-\a) -- (8,-\a) -- (9,-\a) -- (9,-.8-\a) -- (8,-.8-\a);
\draw [white] (8.5,-\a-0.25) -- (8.5,-\a-0.25) node {$\color{black}\scriptstyle\textrm{Encoder}$};
\draw [white] (8.5,-\a-0.55) -- (8.5,-\a-0.55) node {$\color{black}\scriptstyle{j}$};
\draw [->] (6.7,-.4) -- (8,-.4);
\draw [->] (9,-.4) -- (11,-.4);
\fill [fill=white, draw=gray!50!black] (11,-.8-\a) -- (11,-\a) -- (12,-\a) -- (12,-.8-\a) -- (11,-.8-\a);
\draw [white] (11.5,-\a-0.25) -- (11.5,-\a-0.25) node {$\color{black}\scriptstyle\textrm{Decoder}$};
\draw [white] (11.5,-\a-0.55) -- (11.5,-\a-0.55) node {$\color{black}\scriptstyle{j}$};
\draw [->] (12,-.4) -- (13,-.4);
\draw [white] (13.5,-\a-0.4) -- (13.5,-\a-0.4) node {$\color{black}\hat{\bs}^n_{j}$};
\draw [white] (10,-\a-0.15) -- (10,-\a-0.15) node {$\color{black}\rvf_{j}$};
\draw [->] (11.5,-\a+1) -- (11.5,-\a);
\draw [white] (12,-\a+.5) -- (12,-\a+.5) node {$\color{black}\bs^n_{j+1}$};
}
\def \a {1.5}{
\draw [white] (6.5,-\a-.4) -- (6.5,-\a-.4) node {$\color{black}\bs_{j+1}^{n}$};
\fill [fill=white, draw=gray!50!black] (8,-.8-\a) -- (8,-\a) -- (9,-\a) -- (9,-.8-\a) -- (8,-.8-\a);
\draw [white] (8.5,-\a-0.25) -- (8.5,-\a-0.25) node {$\color{black}\scriptstyle\textrm{Encoder}$};
\draw [white] (8.5,-\a-0.55) -- (8.5,-\a-0.55) node {$\color{black}\scriptstyle{j+1}$};
\draw [->] (7,-.4-\a) -- (8,-.4-\a);
\draw [->] (9,-.4-\a) -- (11,-.4-\a);
\fill [fill=white, draw=gray!50!black] (11,-.8-\a) -- (11,-\a) -- (12,-\a) -- (12,-.8-\a) -- (11,-.8-\a);
\draw [white] (11.5,-\a-0.25) -- (11.5,-\a-0.25) node {$\color{black}\scriptstyle\textrm{Decoder}$};
\draw [white] (11.5,-\a-0.55) -- (11.5,-\a-0.55) node {$\color{black}\scriptstyle{j+1}$};
\draw [->] (12,-.4-\a) -- (13,-.4-\a);
\draw [white] (13.5,-\a-0.4) -- (13.5,-\a-0.4) node {$\color{black}\hat{\bs}^n_{j+1}$};
\draw [white] (10,-\a-0.15) -- (10,-\a-0.15) node {$\color{black}\rvf_{j+1}$};
\draw [->] (11.5,-\a-1.8) -- (11.5,-\a-.8);
\draw [white] (12.2,-\a-1.3) -- (12.2,-\a-1.3) node {$\color{black}\bs^n_{j-B-1}$};
}

\draw [-] (10.5,-.4) -- (10.5,-1.7);
\draw [->] (10.5,-1.7) -- (11,-1.7);
\draw [white] (10,-4) -- (10,-4) node {$\color{black}\textrm{(b)}$};

\draw [dotted] (6,1.5) -- (6,-4);

\end{tikzpicture}
\end{center}
\caption{Connection between the streaming problem and Zig-Zag source coding problem. 
The setup on the right is identical to the setup on the left, except with the side information sequence $\rvs_{j-1}^n$ replaced with $\rvs_{j+1}^n$. However the rate
region for both problems turns out to be identical for symmetric Markov sources.}
\label{fig:zigZag}
\end{figure*}
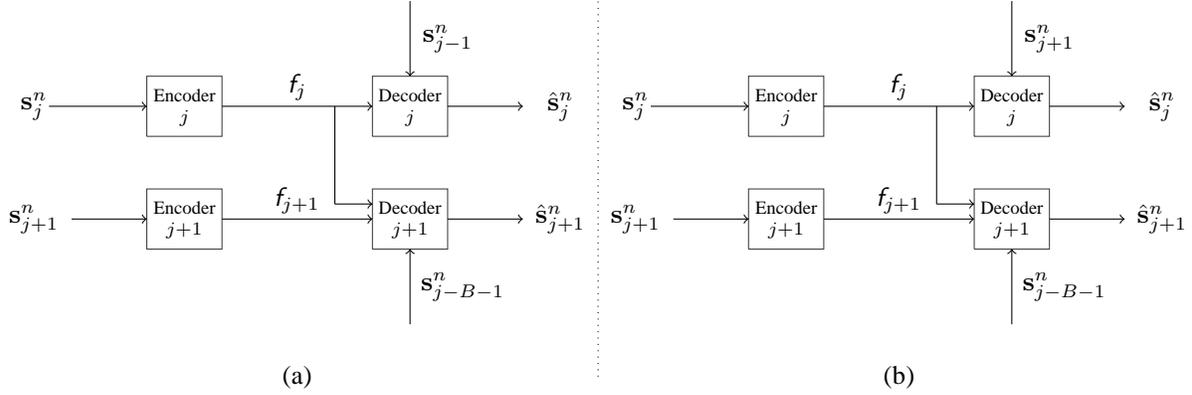

The special case when $W=0$  follows directly from~\eqref{eq:genLB}. We only need to consider the case when $W \ge 1$.
For simplicity in exposition we consider the case when $W=1$. Then we need to show that 
\begin{align}
R(B,W=1) \ge \frac{1}{2} H(\rvs_{B+1}, \rvs_{B+2}|\rvs_0)\label{eq:rate-rec-binning-2}
\end{align}
The proof for general $W \ge 1$ follows along similar lines and will be sketched briefly.

Assume that an erasure-burst spans time indices ${j-B, \ldots, j-1}$. The decoder must recover \begin{equation}\label{eq:decoder1}{\hat\rvs}_{j+1}^n = \cG_{j+1}\left(\rvf_0^{j-B-1}, \rvf_{j}, \rvf_{j+1}\right).\end{equation} 
From Fano's inequality, we have,
\begin{equation}
H\left(\rvs_{j+1}^n ~|~ \rvf_0^{j-B-1}, \rvf_{j}, \rvf_{j+1} \right) \le n\eps_n. 
\label{eq:Fano1}
\end{equation}

Furthermore if there is no erasure until time $j$ then \begin{align}\hat{\rvs}_j^n = \cG_j\left(\rvf_0^{j}\right)\label{eq:decoder0}\end{align} must hold. Hence from Fano's Inequalty,
\begin{equation}
H\left(\rvs_{j}^n ~|~ \rvf_0^{j} \right) \le n\eps_n. 
\label{eq:Fano3}
\end{equation}
Our aim is to combine~\eqref{eq:Fano1} and~\eqref{eq:Fano3} to establish the following lower bound on the sum-rate
\begin{equation}
\label{eq:Rj_lb}
R_j + R_{j+1} \ge H(\rvs_{j+1}|\rvs_j) + H(\rvs_{j}|\rvs_{j-B-1}).
\end{equation}
The lower bound then  follows since \begin{align}
R &\ge \max(R_j, R_{j+1})\\
&\ge \frac{1}{2}(R_j + R_{j+1})\\
&\ge \frac{1}{2}(H(\rvs_{j+1}|\rvs_j) + H(\rvs_{j}|\rvs_{j-B-1}))\\
&=\frac{1}{2}(H(\rvs_{j+1}|\rvs_j,\rvs_{j-B-1}) + H(\rvs_{j}|\rvs_{j-B-1}))\\
&=\frac{1}{2}H(\rvs_{j+1},\rvs_{j}|\rvs_{j-B-1}) =\frac{1}{2}H(\rvs_{B+1},\rvs_{B+2}|\rvs_0)
\end{align}
thus establishing~\eqref{eq:rate-rec-binning-2}.

To establish~\eqref{eq:Rj_lb} we make a connection to a multi-terminal source coding problem in Fig.~\ref{fig:zigZag}.
\subsection{Zig-Zag Source Coding}

Consider the source coding problem with side information illustrated in Fig.~\ref{fig:zigZag}(a). 
In this setup there are four source sequences drawn i.i.d.\ from a joint distribution $p(\rvs_{j+1},\rvs_j, \rvs_{j-1},\rvs_{j-B-1})$.
The two encoders $j$  and $j+1$ are revealed source sequences $\rvs_j^n$ and $\rvs_{j+1}^n$ and the two decoders $j$ and $j+1$ are revealed sources
$\rvs_{j-1}^n$ and $\rvs_{j-B-1}^n$. The encoders operate independently and compress the source sequences to $\rvf_j$ and $\rvf_{j+1}$ at rates $R_j$
and $R_{j+1}$ respectively. Decoder $j$ has access to $(\rvf_j, \rvs_{j-1}^n)$ while decoder ${j+1}$ has access to $(\rvf_j, \rvf_{j+1}, \rvs_{j-B-1}^n)$ and
 are interested in reproducing, 
\begin{align}
\hat{\rvs}_j^n &= \hat{\cG}_j(\rvf_j, \rvs_{j-1}^n) \label{eq:decoder0a}\\
\hat{\rvs}_{j+1}^n&=\hat{\cG}_{j+1}(\rvf_j^{j+1}, \rvs_{j-B-1}^n)\label{eq:decoder1a}\end{align} respectively such that $\Pr(\rvs_i^n \neq \hat{\rvs}_i^n) \le \eps_n$ for ${i= j, j+1}$.
 
When $\rvs_{j-B-1}^n$ is a constant sequence, the problem has been studied in~\cite{berger:77,oohama:gaussian}. A complete single letter characterization involving an auxiliary random variable is obtained. Fortunately in the present case of symmetric sources a simple lower bound can be obtained using the following observation.\begin{lemma}
The set of all achievable rate-pairs $(R_j, R_{j+1})$ for the problem in Fig.~\ref{fig:zigZag}(a) is identical to the set of all achievable rate-pairs for the problem in Fig.~\ref{fig:zigZag}(b)
where the side information sequence $\rvs_{j-1}^n$ at decoder 1 is replaced by the side information sequence $\rvs_{j+1}^n$.
\label{lem:SI}
\end{lemma}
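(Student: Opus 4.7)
The plan is to prove equality of the two rate regions, call them $\cR_a$ and $\cR_b$, by showing that any feasible code for (a) is automatically feasible for (b), and vice versa, reusing the \emph{same} encoders and decoders. The essential input is the symmetric-source hypothesis~\eqref{eq:symmetric} together with the spatial i.i.d.\ assumption: the length-$n$ pair $(\rvs_j^n,\rvs_{j+1}^n)$ has the same joint distribution as $(\rvs_j^n,\rvs_{j-1}^n)$, because each spatial coordinate is drawn independently from the joint law of $(\rvs_j,\rvs_{j+1})$ (resp.\ $(\rvs_j,\rvs_{j-1})$), and these two joint laws coincide by reversibility.

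Starting from any $\eps_n$-feasible scheme for (a) with encoders $\rvf_j=\cF_j(\rvs_j^n)$, $\rvf_{j+1}=\cF_{j+1}(\rvs_{j+1}^n)$ and decoders $\hat{\cG}_j,\hat{\cG}_{j+1}$, I would simply reuse both encoders and the decoder $\hat{\cG}_{j+1}$ in problem (b). Decoder $j{+}1$ in (b) receives the same inputs $(\rvf_j,\rvf_{j+1},\rvs_{j-B-1}^n)$ and aims at the same target $\rvs_{j+1}^n$, so the joint distribution of (inputs, target) is identical to (a) and its error probability is unchanged. For decoder $j$ in (b), I would apply the same function $\hat{\cG}_j$ but with $\rvs_{j+1}^n$ substituted for $\rvs_{j-1}^n$. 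Because $\rvf_j$ is a deterministic function of $\rvs_j^n$ alone, the triples $(\rvs_j^n,\rvf_j,\rvs_{j+1}^n)$ and $(\rvs_j^n,\rvf_j,\rvs_{j-1}^n)$ share the same joint distribution, so
\begin{equation}
\Pr\!\left(\rvs_j^n\neq\hat{\cG}_j(\rvf_j,\rvs_{j+1}^n)\right)=\Pr\!\left(\rvs_j^n\neq\hat{\cG}_j(\rvf_j,\rvs_{j-1}^n)\right)\le\eps_n.\nonumber
\end{equation}
This establishes $\cR_a\subseteq\cR_b$, and the reverse inclusion follows by an identical argument with the two side-information sequences swapped.

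The only subtlety worth emphasizing is that one might worry about coupling between the two decoders through the underlying joint law of $(\rvs_{j-1}^n,\rvs_{j+1}^n)$ changing when we swap side information; however each decoder's error event is a functional of its \emph{own} observations and its target only, so the relevant marginals are what matter, and those are preserved by the reversibility of the chain. There is no real obstacle; the lemma is essentially a one-line distributional argument, but it is important because it reduces the lower-bound problem for problem (a) to the symmetric zig-zag network of Fig.~\ref{fig:zigZag}(b), where a clean cut-set / entropy argument is available and yields the desired sum-rate bound~\eqref{eq:Rj_lb}.
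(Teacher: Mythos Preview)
Your proposal is correct and is essentially the operational version of the paper's own argument. The paper simply observes that the capacity region depends on the joint law $p(\rvs_j,\rvs_{j+1},\rvs_{j-1},\rvs_{j-B-1})$ only through the two marginals $p(\rvs_j,\rvs_{j-1})$ and $p(\rvs_{j+1},\rvs_j,\rvs_{j-B-1})$, and that reversibility makes $p(\rvs_j,\rvs_{j-1})=p(\rvs_j,\rvs_{j+1})$; you have spelled out exactly why this marginal dependence holds, by reusing the encoders and decoders and noting that each decoder's error probability is a functional of its own (inputs, target) marginal only.
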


The proof of Lemma~\ref{lem:SI} follows by observing that the capacity region for the problem in Fig.~\ref{fig:zigZag}(a) depends on the joint distribution $p(\rvs_j, \rvs_{j+1}, \rvs_{j-1},\rvs_{j-B-1})$ only via the {\em marginal}
distributions $p(\rvs_j, \rvs_{j-1})$ and $p(\rvs_{j+1}, \rvs_j, \rvs_{j-B-1})$.  When the source is symmetric the distributions $p(\rvs_j, \rvs_{j-1})$ and $p(\rvs_j, \rvs_{j+1})$ are identical.
The formal proof will be omitted.

Thus it suffices to lower bound the achievable sum rate for the problem in Fig.~\ref{fig:zigZag}(b). First upon applying the Slepian-Wolf lower bound to encoder $j+1$
\begin{align}
nR_{j+1}&\ge H(\rvs_{j+1}^n|\rvs_{j-B-1}^n, \rvf_{j}) - n\eps_n\label{eq:sum1}
\end{align}and to bound $R_j$
\begin{align}
&nR_{j}\ge H(\rvf_{j})\ge I(\rvf_{j};\rvs^n_{j}|\rvs_{j-B-1}^n)\nonumber \\
&\ge H(\rvs_{j}^n|\rvs_{j-B-1}^n)\notag - H(\rvs_{j}^n|\rvs_{j-B-1}^n, \rvf_{j})\nonumber \\
&\ge nH(\rvs_{B+1}|\rvs_{0})- H(\rvs_{j}^n|\rvs_{j-B-1}^n, \rvf_{j})\notag\\ &\qquad +H(\rvs_{j}^n|\rvs_{j-B-1}^n,\rvs_{j+1}^n, \rvf_{j})-n\eps_n\label{eq:FanoApp_Sj}\\
&=nH(\rvs_{j}|\rvs_{j-B-1})  - I(\rvs_{j}^n; \rvs_{j+1}^n |\rvs_{j-B-1}^n, \rvf_{j})-n\eps_n\nonumber\\
&=nH(\rvs_{j}|\rvs_{j-B-1})-H(\rvs_{j+1}^n |\rvs_{j-B-1}^n, \rvf_{j}) \notag\\
&\hspace{2cm}+ H( \rvs_{j+1}^n |\rvs_{j-B-1}^n,\rvs_{j}^n, \rvf_{j})-n\eps_n\nonumber\\
&=nH(\rvs_{j}|\rvs_{j-B-1})-H(\rvs_{j+1}^n |\rvs_{j-B-1}^n, \rvf_{j})\notag\\
&\hspace{2cm} + nH(\rvs_{j+1}|\rvs_j) -n\eps_n\label{eq:sum2}
\end{align}
where~\eqref{eq:FanoApp_Sj} follows by applying Fano's inequality since $\rvs_j^n$ can be recovered from $(\rvs_{j+1}^n, \rvf_j)$ and hence
$H(\rvs_{j}^n|\rvs_{j-B-1}^n,\rvs_{j+1}^n, \rvf_{j}) \le n\eps_n$ holds and~\eqref{eq:sum2} follows form the Markov relation ${\rvs_{j+1}^n \rightarrow \rvs_j^n \rightarrow (\rvf_j, \rvs_{j-B-1}^n)}$. 
Observe that~\eqref{eq:Rj_lb} follows by summing~\eqref{eq:sum1} and~\eqref{eq:sum2}.

\subsection{Connection between Streaming and Zig-Zag Coding Problems}
It remains to show that the lower bound on the Zig-Zag coding problem also constitutes a lower bound on the original problem.
\begin{lemma}
Suppose that the encoding function $\rvf_j = \cF_j(\rvs_j^n)$ is memoryless. Suppose that there exist decoding functions $\hat{\rvs}_j^n = {\cG}_j(\rvf_0^j)$
and $\hat{\rvs}_{j+1}^n = {\cG}_{j+1}(f_0^{j-B-1}, \rvf_j, \rvf_{j+1})$ such that  ${\Pr(\hat{\rvs_j}^n \neq \rvs_j^n)}$ and ${\Pr(\hat{\rvs}_{j+1}^n \neq \rvs_{j+1}^n)}$ both vanish to zero as $n\rightarrow\infty$.
Then
\begin{align}
&H(\rvs_{j}^n | \rvs_{j-1}^n, \rvf_j) \le n\eps_n \label{eq:Fano2a}\\
&H(\rvs_{j+1}^n | \rvs_{j-B-1}^n, \rvf_j, \rvf_{j+1}) \le n\eps_n\label{eq:Fano1a}
\end{align} also hold. 
\label{lem:memoryless}
\end{lemma}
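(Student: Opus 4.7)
The plan is to reduce both inequalities to Fano's inequality applied to the assumed decoders, by showing that under the memoryless encoder assumption the codeword side-information $\rvf_0^{j-1}$ (resp.\ $\rvf_0^{j-B-1}$) is conditionally redundant once the appropriate pivot source sequence $\rvs_{j-1}^n$ (resp.\ $\rvs_{j-B-1}^n$) is supplied. Equivalently, knowing the pivot sequence lets a genie simulate every past source without altering the conditional law of the future source being reconstructed.

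For the first inequality, my key step is to establish the Markov chain
\[
\rvf_0^{j-1} \to \rvs_{j-1}^n \to \rvs_j^n \to \rvf_j.
\]
Memorylessness gives $\rvf_i = \cF_i(\rvs_i^n)$, so $\rvf_0^{j-1}$ is a deterministic function of $(\rvs_0^n,\ldots,\rvs_{j-1}^n)$; the spatially i.i.d., temporally first-order Markov source then yields $(\rvs_0^n,\ldots,\rvs_{j-2}^n) \to \rvs_{j-1}^n \to \rvs_j^n$, and appending $\rvs_j^n \to \rvf_j$ is immediate since $\rvf_j$ is a function of $\rvs_j^n$ alone. Factoring the induced joint distribution gives the conditional independence $\rvs_j^n \perp \rvf_0^{j-1} \mid (\rvs_{j-1}^n,\rvf_j)$, from which
\begin{align*}
H(\rvs_j^n\mid \rvs_{j-1}^n,\rvf_j) &= H(\rvs_j^n\mid \rvs_{j-1}^n,\rvf_j,\rvf_0^{j-1}) \\
&\le H(\rvs_j^n\mid \rvf_0^{j}) \le n\eps_n,
\end{align*}
where the middle step is ``conditioning reduces entropy'' and the last step is Fano applied to $\hat{\rvs}_j^n=\cG_j(\rvf_0^j)$.

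For the second inequality, I would run the identical argument with $\rvs_{j-B-1}^n$ as the pivot, establishing
\[
\rvf_0^{j-B-1} \to \rvs_{j-B-1}^n \to (\rvs_j^n,\rvs_{j+1}^n) \to (\rvf_j,\rvf_{j+1}),
\]
where the middle link uses that $\rvs_{j-B-1}^n$ separates $\rvs_0^n,\ldots,\rvs_{j-B-2}^n$ from $\rvs_{j-B}^n,\rvs_{j-B+1}^n,\ldots$ in the source Markov chain, and the rightmost link uses memorylessness of encoders $\cF_j$ and $\cF_{j+1}$. The resulting conditional independence $\rvs_{j+1}^n \perp \rvf_0^{j-B-1}\mid(\rvs_{j-B-1}^n,\rvf_j,\rvf_{j+1})$ lets me insert $\rvf_0^{j-B-1}$ into the conditioning without changing the entropy, and Fano applied to $\hat{\rvs}_{j+1}^n=\cG_{j+1}(\rvf_0^{j-B-1},\rvf_j,\rvf_{j+1})$ closes the argument.

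I do not anticipate a serious obstacle: the content is simply that a memoryless encoder cannot introduce any dependence between past codeword outputs and a future source beyond what the pivot source sequence already carries. The only thing to verify carefully is that the Markov structure lifts from single symbols to the length-$n$ vector sequences, which is immediate from spatial i.i.d.-ness at each time index.
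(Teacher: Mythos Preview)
Your proposal is correct and follows essentially the same approach as the paper: establish the Markov chain $\rvf_0^{j-1}\to\rvs_{j-1}^n\to(\rvs_j^n,\rvf_j)$ (and its analogue with pivot $\rvs_{j-B-1}^n$) from the memoryless encoder plus the source Markov property, then insert the past codewords into the conditioning for free and invoke Fano. The only cosmetic difference is that the paper writes the three-node chain directly rather than your four-node version, and proves the second inequality at a shifted index before relabeling.
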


\begin{proof}To establish~\eqref{eq:Fano2a} we note that for the memoryless encoder the following Markov chain holds:
\begin{equation}
\label{eq:ZigZag_Markov1}
\rvf_{0}^{j-1}\rightarrow \rvs^n_{j-1} \rightarrow (\rvf_j, \rvs^n_{j}).
\end{equation}
Hence we have that
\begin{align}
n\eps_n \ge H(\rvs_{j}^n|\rvf_{0}^{j})&\ge H(\rvs^n_{j}|\rvf_{0}^{j-1}, \rvs^n_{j-1}, \rvf_{j})\\
&=H(\rvs^n_{j}|\rvs^n_{j-1}, \rvf_{j}) ,
\end{align}where the last step follows via~\eqref{eq:ZigZag_Markov1}.
Similarly using ${\rvf_{0}^{j-B-2}\rightarrow \rvs^n_{j-B-2} \rightarrow (\rvf_{j-1},\rvs^n_{j},\rvf_j)}$, we have 
\begin{align}
n\eps_n &\ge H(\rvs^n_{j}|\rvf_{0}^{j-B-2}, \rvf_{j-1}, \rvf_{j})\notag\\
&\ge H(\rvs^n_{j}|\rvf_{0}^{j-B-2}, \rvs^n_{j-B-2}, \rvf_{j-1}, \rvf_{j})\nonumber \\
&=H(\rvs^n_{j}|\rvs^n_{j-B-2}, \rvf_{j-1}, \rvf_{j}).
\end{align}
\end{proof}
 
The conditions in~\eqref{eq:Fano2a} and~\eqref{eq:Fano1a} show that any rate that is achievable in the original problem is also achieved in the zig-zag source network. Hence a lower bound to the source network also constitutes a lower bound to the original problem.
\subsection{Extension to Arbitrary $W>1$}

Finally we  comment of the extension of the above approach to $W = 2$. We now consider three encoders $t \in \{j, j+1, j+2\}$. Encoder $t$ observes a source sources $\rvs_t^n$ and compresses it into an index $\rvf_j \in [1,2^{nR_j}]$.  The corresponding decoders are revealed $\rvs_{t-1}^n$ for $t \in \{j, j+1\}$ and the decoder $j+2$ is revealed $\rvs_{j-B-1}^n$.  By an argument analogous to Lemma~\ref{lem:SI} the rate region is equivalent to the case when decoders $j$ and $j+1$ are instead revealed $\rvs_{j+1}^n$ and $\rvs_{j+2}^n$ respectively. For this new setup it is easy to show that decoder $j+2$ must reconstruct $(\rvs_j^n, \rvs_{j+1}^n, \rvs_{j+2}^n)$ given $(\rvs_{j-B-1}^n, \rvf_j, \rvf_{j+1},\rvf_{j+2})$. The sum rate must therefore satisfy
$R_j+R_{j+1} + R_{j+2} \ge \frac{1}{3}H(\rvs_j, \rvs_{j+1},\rvs_{j+2}|\rvs_{j-B-1})$. Using an extension of Lemma~\ref{lem:memoryless} we can show that the  proposed lower bound also continues to hold for the original streaming problem. This completes the proof. The extension to any arbitrary $W > 1$ is completely analogous.

\section{Delay Constrained Decoders: Proof of Theorem~\ref{thm:no_W}}
\label{sec:no_W}

\subsection{Achievability}
The achievability of the rate expression~\eqref{eq:R0_Cap1} is established through a Slepian-Wolf coding scheme. A Slepian-Wolf codebook is constructed by partitioning the space of all typical sequences $\rvs_i^n$
into $2^{nR}$ bins and the bin index $\rvf_i$ is transmitted at time $i$. The decoder is required to output $\hat{\rvs}_i^n$ in one of two ways. If it has access to $\rvs_{i-1}^n$ then it finds a sequence jointly typical with $\hat{\rvs}_{i-1}^n$ in the bin index of $\rvf_i$. This succeeds with high probability if $R \ge H(\rvs_1 | \rvs_0)$ which is clearly satisfied in~\eqref{eq:R0_Cap1}.

If the receiver needs to recover from an erasure burst spanning $t \in \{j-B, \ldots, j-1\}$ it has access to $\rvs_{j-B-1}^n$ and needs to use $\rvf_j^{j+T}$ to recover $\rvs_j^n$. It simultaneously attempts to decode
all of $\rvs_j^n, \ldots, \rvs_{j+T}^n$ using $\rvf_j, \ldots, \rvf_{j+T}$ and $\rvs_{j-B-1}^n$. This succeeds if $(T+1)R \ge H(\rvs_j, \ldots, \rvs_{j+T}| \rvs_{j-B-1}) $
 which in turn holds via~\eqref{eq:R0_Cap}.
 
 \subsection{Converse} 

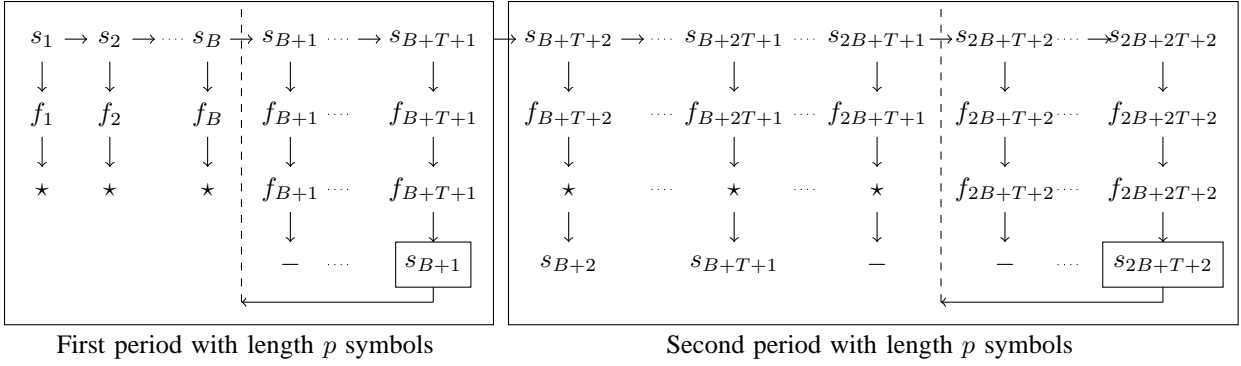
\begin{figure*}[!htb]
\begin{tikzpicture}
\draw [white] (0,0) -- (0,0) node {$\color{black}s_{1}$};
\draw [->](0,-.3) -- (0,-.7);
\draw [white] (0,-1) -- (0,-1) node {$\color{black}f_{1}$};
\draw [->](0,-1.3) -- (0,-1.7);
\draw [white] (0,-2) -- (0,-2) node {$\color{black}\star$};
\draw [->](.3,0) -- (.6,0);

\draw [white] (.9,0) -- (.9,0) node {$\color{black}s_{2}$};
\draw [->](.9,-.3) -- (.9,-.7);
\draw [white] (.9,-1) -- (.9,-1) node {$\color{black}f_{2}$};
\draw [->](.9,-1.3) -- (.9,-1.7);
\draw [white] (.9,-2) -- (.9,-2) node {$\color{black}\star$};
\draw [->](1.2,0) -- (1.5,0);

\draw [dotted](1.6,0) -- (1.9,0);

\draw [white] (2.2,0) -- (2.2,0) node {$\color{black}s_{B}$};
\draw [->](2.2,-.3) -- (2.2,-.7);
\draw [white] (2.2,-1) -- (2.2,-1) node {$\color{black}f_{B}$};
\draw [->](2.2,-1.3) -- (2.2,-1.7);
\draw [white] (2.2,-2) -- (2.2,-2) node {$\color{black}\star$};
\draw [->](2.5,0) -- (2.8,0);
\draw [dashed](2.65,-3.5) -- (2.65,.5);

\draw [white] (3.3,0) -- (3.3,0) node {$\color{black}s_{B+1}$};
\draw [->](3.3,-.3) -- (3.3,-.7);
\draw [white] (3.3,-1) -- (3.3,-1) node {$\color{black}f_{B+1}$};
\draw [->](3.3,-1.3) -- (3.3,-1.7);
\draw [white] (3.3,-2) -- (3.3,-2) node {$\color{black}f_{B+1}$};
\draw [->](3.3,-2.3) -- (3.3,-2.7);
\draw [white] (3.3,-3) -- (3.3,-3) node {$\color{black}-$};

\draw [dotted](3.8,0) -- (4.1,0);
\draw [dotted](3.8,-1) -- (4.1,-1);
\draw [dotted](3.8,-2) -- (4.1,-2);
\draw [dotted](3.8,-3) -- (4.1,-3);
\draw [->](4.2,0) -- (4.5,0);

\draw [white] (5.2,0) -- (5.2,0) node {$\color{black}s_{B+T+1}$};
\draw [->](5.2,-.3) -- (5.2,-.7);
\draw [white] (5.2,-1) -- (5.2,-1) node {$\color{black}f_{B+T+1}$};
\draw [->](5.2,-1.3) -- (5.2,-1.7);
\draw [white] (5.2,-2) -- (5.2,-2) node {$\color{black}f_{B+T+1}$};
\draw [->](5.2,-2.3) -- (5.2,-2.7);
\draw [white] (5.2,-3) -- (5.2,-3) node {$\color{black}s_{B+1}$};
\draw [->](6,0) -- (6.3,0);

\draw (4.7,-3.3) rectangle (5.7,-2.7);
\draw (5.2,-3.3) -- (5.2,-3.5);
\draw [->](5.2,-3.5) -- (2.65,-3.5);

\draw [white] (7,0) -- (7,0) node {$\color{black}s_{B+T+2}$};
\draw [->](7,-.3) -- (7,-.7);
\draw [white] (7,-1) -- (7,-1) node {$\color{black}f_{B+T+2}$};
\draw [->](7,-1.3) -- (7,-1.7);
\draw [white] (7,-2) -- (7,-2) node {$\color{black}\star$};
\draw [->](7,-2.3) -- (7,-2.7);
\draw [white] (7,-3) -- (7,-3) node {$\color{black}s_{B+2}$};
\draw [->](7.7,0) -- (8.,0);

\draw [dotted](8.1,0) -- (8.4,0);
\draw [dotted](8.1,-1) -- (8.4,-1);
\draw [dotted](8.1,-2) -- (8.4,-2);
\draw [white] (9.2,0) -- (9.2,0) node {$\color{black}s_{B+2T+1}$};
\draw [->](9.2,-.3) -- (9.2,-.7);
\draw [white] (9.2,-1) -- (9.2,-1) node {$\color{black}f_{B+2T+1}$};
\draw [->](9.2,-1.3) -- (9.2,-1.7);
\draw [white] (9.2,-2) -- (9.2,-2) node {$\color{black}\star$};
\draw [->](9.2,-2.3) -- (9.2,-2.7);
\draw [white] (9.2,-3) -- (9.2,-3) node {$\color{black}s_{B+T+1}$};

\draw [dotted](10,0) -- (10.3,0);
\draw [dotted](10,-1) -- (10.3,-1);
\draw [dotted](10,-2) -- (10.3,-2);
\draw [white] (11.1,0) -- (11.1,0) node {$\color{black}s_{2B+T+1}$};
\draw [->](11.1,-.3) -- (11.1,-.7);
\draw [white] (11.1,-1) -- (11.1,-1) node {$\color{black}f_{2B+T+1}$};
\draw [->](11.1,-1.3) -- (11.1,-1.7);
\draw [white] (11.1,-2) -- (11.1,-2) node {$\color{black}\star$};
\draw [->](11.1,-2.3) -- (11.1,-2.7);
\draw [white] (11.1,-3) -- (11.1,-3) node {$\color{black}-$};
\draw [->](11.8,0) -- (12.1,0);

\draw [white] (12.8,0) -- (12.8,0) node {$\color{black}s_{2B+T+2}$};
\draw [->](12.8,-.3) -- (12.8,-.7);
\draw [white] (12.8,-1) -- (12.8,-1) node {$\color{black}f_{2B+T+2}$};
\draw [->](12.8,-1.3) -- (12.8,-1.7);
\draw [white] (12.8,-2) -- (12.8,-2) node {$\color{black}f_{2B+T+2}$};
\draw [->](12.8,-2.3) -- (12.8,-2.7);
\draw [white] (12.8,-3) -- (12.8,-3) node {$\color{black}-$};

\draw [dotted](13.5,0) -- (13.8,0);
\draw [dotted](13.5,-1) -- (13.8,-1);
\draw [dotted](13.5,-2) -- (13.8,-2);
\draw [dotted](13.5,-3) -- (13.8,-3);
\draw [->](13.9,0) -- (14.2,0);

\draw [white] (14.9,0) -- (14.9,0) node {$\color{black}s_{2B+2T+2}$};
\draw [->](14.9,-.3) -- (14.9,-.7);
\draw [white] (14.9,-1) -- (14.9,-1) node {$\color{black}f_{2B+2T+2}$};
\draw [->](14.9,-1.3) -- (14.9,-1.7);
\draw [white] (14.9,-2) -- (14.9,-2) node {$\color{black}f_{2B+2T+2}$};
\draw [->](14.9,-2.3) -- (14.9,-2.7);
\draw [white] (14.9,-3) -- (14.9,-3) node {$\color{black}s_{2B+T+2}$};

\draw (14.1,-3.3) rectangle (15.7,-2.7);
\draw (14.9,-3.3) -- (14.9,-3.5);
\draw [->](14.9,-3.5) -- (11.95,-3.5);

\draw [dashed](11.95,-3.5) -- (11.95,.5);

\draw (-.5,-3.8) rectangle (6,.5);
\draw (6.2,-3.8) rectangle (15.9,.5);
\draw [white] (2.7	,-4.1) -- (2.7,-4.1) node {$\color{black}\textrm{First period with length $p$ symbols}$};
\draw [white] (11	,-4.1) -- (11,-4.1) node {$\color{black}\textrm{Second period with length $p$ symbols}$};
\end{tikzpicture}
\caption{ Periodic erasure channel considered in proof of converse.}
\label{fig:setup-converse}
\end{figure*}

The basic idea behind the converse is illustrated in Fig.~\ref{fig:setup-converse}. We consider a periodic erasure channel with period $p = B+T+1$. 
The $k-$th period, for $k \ge 1$, spans the interval $[(k-1)(B+T+1)+1, k(B+T+1)]$.  
In each period the first $B$ packets are erased,
whereas the remaining $T+1$ packets $f_{kB + k + (k-1)T},\ldots, f_{k(B+T+1)}$ are not erased. For sake of convenience we denote
the lower and upper end-points of the $k-$th interval by $l_k = (k-1)(B+T+1)+1$ and $u_k = k(B+T+1)$.  The beginning of the un-erased 
symbols in the $k-$th interval is $e_k = kB+k+(k-1)T$. Furthermore for sake of compactness we denote the $n-$ letter sequence $\rvs^n$ by $\rvbs$ i.e., using the bold-face font.

We provide a heuristic argument that is then formalized below. Consider the first period spanning time $[1,B+T+1]$. Recall that the first $B$ channel packets are erased. The source sequence $\rvbs_{B+1}$ corresponding to the first un-erasred channel packet is recovered at the end of the period i.e., by time $t = B+T+1$. As soon as this is recovered the decoding of the remaining source sequences in $[B+2, B+T+1]$ is transparent to any previous erasures due to the Markov nature of the source. 

Thus for the recovery of sources $\rvbs_{B+2},\ldots, \rvbs_{B+T+1}$,  the relevant erasure burst of length $B$ spans the interval $[B+T+2, 2B+T+1]$. All these source sequences are recovered by their deadline and in particular before the end of the second period.

Thus continuing this argument, if we consider a total of $N$ periods then we have a total of $N(T+1)$  channel packets and recover $\{\rvbs_{e_k},\ldots, \rvbs_{u_k}\}_{1\le k \le {N-1}}$. Thus we have
\begin{align}
&~N(T+1) n H(\rvf) \ge  H\left(\{\rvbs_{e_k},\ldots,\rvbs_{u_k}\}_{k=1}^{N-1}|\rvbs_0\right)\\
&= (N-1)n H(\rvs_{B+1},\ldots, \rvs_{B+T+1}|\rvs_0)
\end{align}
which reduces to~\eqref{eq:R0_Cap} as we take $N\rightarrow\infty$.

For the formal converse first observe that,
\begin{align}
& H(f_{e_1}^{u_1}, f_{e_2}^{u_2},\ldots, f_{e_N}^{u_N}|\rvs_0^n)\notag \\
&\ge H(\rvbs_{e_1}, f_{e_1}^{u_1},f_{e_2}^{u_2},\ldots, f_{e_N}^{u_N}|\rvbs_0) - H(\rvbs_{e_1} | f_{e_1}^{u_1},\rvbs_0)\\
&=H(\rvbs_{e_1}, f_{e_1}^{u_1},f_{e_2}^{u_2},\ldots, f_{e_N}^{u_N}|\rvbs_0) - n\eps_n \label{eq:Fano_SB}\\
&=H(\rvbs_{e_1}, f_{e_1}|\rvbs_0) \notag\\ &\qquad + H(f_{e_1+1}^{u_1}, f_{e_2}^{u_2}, \ldots, f_{e_N}^{u_N}|\rvbs_{e_1}, f_{e_1}, \rvbs_0) - n\eps_n \\
&\ge nH(\rvs_{B+1}|\rvs_0) \notag\\&\qquad + H(f_{e_1+1}^{u_1},f_{e_2}^{u_2},\ldots, f_{e_N}^{u_N}|\rvbs_{e_1}, f_0^{e_1}, \rvbs_0) - n\eps_n \label{eq:Cond_SB}
\end{align}
where we use the Fano's inequality in~\eqref{eq:Fano_SB} since $\rvbs_{B+1}$ 
can be recovered from $(\rvbs_0, f_{B+1}^{B+T+1})$ due to the delay 
constraint of $T$ symbols while~\eqref{eq:Cond_SB} follows from the fact that conditioning in the second term only reduces entropy.

We further simplify the second term in~\eqref{eq:Cond_SB} as follows:
\begin{align}
&H(f_{e_1+1}^{u_1},f_{e_2}^{u_2},\ldots, f_{e_N}^{u_N}|\rvbs_{e_1}, f_0^{e_1}, \rvbs_0) \notag\\
&\ge H(\rvbs_{e_1+1}^{u_1},f_{e_1+1}^{u_1},f_{e_2}^{u_2},\ldots, f_{e_N}^{u_N}|\rvbs_{e_1}, f_0^{e_1}, \rvbs_0)\notag\\&\qquad -H(\rvbs_{e_1+1}^{u_1}| f_{0}^{u_1},f_{e_2}^{u_2})\\
&\ge H(\rvbs_{e_1+1}^{u_1},f_{e_1+1}^{u_1},f_{e_2}^{u_2},\ldots, f_{e_N}^{u_N}|\rvbs_{e_1}, f_0^{e_1}, \rvbs_0)-n\eps_n\label{eq:Fano_SB2}\\
&=H(\rvbs_{e_1+1}^{u_1},f_{e_1+1}^{u_1}|\rvbs_{e_1}, \rvbs_0, f_0^{e_1}) \notag\\&\qquad + H(f_{e_2}^{u_2},\ldots, f_{e_N}^{u_N}|\rvbs_0,\rvbs_{e_1}^{u_1}, f_0^{u_1})-n\eps_n\\
&\ge H(\rvbs_{e_1+1}^{u_1}|\rvbs_{e_1}) + H(f_{e_2}^{u_2},\ldots, f_{e_N}^{u_N}|\rvbs_0,\rvbs_{e_1}^{u_1}, f_0^{u_1})-n\eps_n\label{eq:Markov_SB}\\
&\ge nTH(\rvs_1|\rvs_0) + H(f_{e_2}^{u_2},\ldots, f_{e_N}^{u_N}|\rvbs_0,\rvbs_{e_1}^{u_1}, f_0^{u_1})-n\eps_n
\label{eq:Mem_SB}\end{align}
where~\eqref{eq:Fano_SB2} follows from the application of Fano's inequality since all the sequences $\rvs_{B+2}^n,\ldots, \rvs_{B+T+1}^n$ are recovered by time $u_2 = 2B+2T+2$ when the $B$ packets in the interval $[B+T+2, 2B+T+1]$ are erased,~\eqref{eq:Markov_SB} follows from the fact that $(\rvs_0^n, f_0^{B+1}) \rightarrow \rvs_{B+1}^n \rightarrow (\rvs_{B+2}^n,\ldots, \rvs_{B+T+1}^n)$ and~\eqref{eq:Mem_SB} follows because the source sequences are memoryless and form a Markov chain. 

Following the same steps as~\eqref{eq:Cond_SB} and~\eqref{eq:Mem_SB}, we have
\begin{align}
&H(f_{e_2}^{u_2},\ldots, f_{e_N}^{u_N}|\rvbs_0,\rvbs_{e_1}^{u_1}, f_0^{u_1}) \\ &= nH(\rvs_{B+1}|\rvs_0)\notag\\
&\hspace{.3cm} + H(f_{e_2+1}^{u_2}, f_{e_3}^{u_3},\ldots, f_{e_N}^{u_N}| \rvbs_0, \rvbs_{e_1}^{u_1},\rvbs_{e_2}, f_0^{e_2})\\
&\ge nH(\rvs_{B+1}|\rvs_0) + nTH(\rvs_1|\rvs_0) \notag\\ &\qquad + H(f_{e_3}^{u_3},\ldots, f_{e_N}^{u_N}| \rvbs_0, \rvbs_{e_1}^{u_1},\rvbs_{e_2}^{u_2}, f_0^{u_2})
\end{align}

Continuing these steps, we have that
\begin{align}
N(T+1)nR &\ge H(f_{e_1}^{u_1}, f_{e_2}^{u_2},\ldots, f_{e_N}^{u_N}|\rvs_0^n) \\
&\ge nNH(\rvs_{B+1}|\rvs_0) + nT(N-1) H(\rvs_1|\rvs_0)  \notag \\ &\qquad + H(\rvf_{e_N+1}^{u_N}| \rvbs_0^{e_N},f_0^{e_N})-nN\eps_n
\end{align}

Dividing by $N(T+1)n$ and taking $n\rightarrow\infty$ and thereafter $N\rightarrow\infty$  we recover~\eqref{eq:R0_Cap}.

\section{Conclusions}
\label{sec:concl}
We introduce an information theoretic framework to characterize the fundamental tradeoff between compression efficiency and error propagation in video streaming systems. We introduce the  {\em rate-recovery function} and develop upper and lower bounds on this function. The lower bound is established by drawing connection to a periodic erasure channel and a multi-terminal source coding problem. We show that for the first-order Markov sources the rate-recovery function equals the sum of the ideal predictive coding rate and another term that decreases as $\frac{1}{W+1}$. For the class of linear deterministic Markov sources and i.i.d.\ Gaussian sources with a sliding-window recovery constraint we propose a new coding technique --- prospicient coding ---- that achieve the rate-recovery function. Numerical results indicate significant gains over traditional techniques such as the FEC based schemes. For the class of symmetric sources and memoryless encoding the optimality of a random binning based scheme is established by drawing connection to the Zig-Zag source network problem. The optimality of binning is also established when the error recovery window is of length zero. 

Several open problems remain in our proposed framework. A complete characterization of the rate-recovery function remains to be obtained. Better lower bounds can potentially be obtained by considering more elaborate schemes rather than the binning based technique. Finally extension of this framework to lossy reconstructions beyond what has been considered in this paper is also a very fruitful area of research.

\appendices
\section{Proof of Corollary~\ref{corol:genUB}}
\label{app:Cor1}
According to the chain rule of entropies, the term in \eqref{eq:genUB_Slepian-Wolf} can be written as 
\begin{align}
&H(\rvs_{B+1}, \rvs_{B+2},\ldots, \rvs_{B+W+1}|\rvs_{0})\\
&=H(\rvs_{B+1}|\rvs_{0})+ \sum_{k=1}^{W}H(\rvs_{B+k+1}|\rvs_{0}, \rvs_{B+1},\ldots,\rvs_{B+k})\notag\\
&=H(\rvs_{B+1}|\rvs_{0})+ W H(\rvs_{1}|\rvs_{0})\label{eq:cr1}\\
&=H(\rvs_{B+1}|\rvs_{0})-H(\rvs_{B+1}|\rvs_{B}, \rvs_{0})\notag\\
&\hspace{2cm}+H(\rvs_{B+1}|\rvs_{B}, \rvs_{0})+ W H(\rvs_{1}|\rvs_{0})\label{eq:cr2}\\
&=H(\rvs_{B+1}|\rvs_{0})-H(\rvs_{B+1}|\rvs_{B}, \rvs_{0})\notag\\
&\hspace{2cm}+H(\rvs_{B+1}|\rvs_{B})+ W H(\rvs_{1}|\rvs_{0})\label{eq:cr3}\\
&=I(\rvs_{B+1};\rvs_{B}|\rvs_{0})+(W+1)H(\rvs_{1}|\rvs_{0})\label{eq:app-cor}\\
&=(W+1)R^{+}(B,W)
\end{align} where \eqref{eq:cr1} follows from the Markov property
\begin{align}
\rvs_{0}, \rvs_{B+1}, \ldots,\rvs_{B+k-1}\rightarrow \rvs_{B+k} \rightarrow  \rvs_{B+k+1} \label{eq:Mar}
\end{align} for any $k$ and from the temporally independency and stationarity of the sources which for each $k$ implies that
\begin{align}
H(\rvs_{B+k+1}|\rvs_{B+k})=H(\rvs_{1}|\rvs_{0}).
\end{align} Note that in \eqref{eq:cr2} we add and subtract the same term and \eqref{eq:cr3} also follows from the Markov property of \eqref{eq:Mar} for $k=0$.

\section{Proof of Lemma~\ref{lem:full-rank}}
\label{app:full-rank}
First let us define the following notations. 
\begin{itemize}
\item For a vector $\bx$ of size $x$, define $\bx^{(u,a)}$ and $\bx^{(d,a)}$ such that 
\begin{align}
\bx=\kbordermatrix{\\
a&\bx^{(u,a)}\\
(x-a)&\bx^{(d,a)}},
\label{lem-lem-1}
\end{align}
\item For a matrix $\bX$ of size $x\times y$, define $\bX^{(l,a)}$, $\bX^{(r,a)}$, $\bX^{(u,b)}$ and $\bX^{(d,b)}$ as 
\begin{align}
\bX=\kbordermatrix{&a&(y-a)\\
&\bX^{(l,a)}&\bX^{(r,a)}},
\label{lem-lem-2-1}
\end{align}
and 
\begin{align}
\bX=\kbordermatrix{\\
b&\bX^{(u,b)}\\
(x-b)&\bX^{(d,b)}},
\label{lem-lem-2-2}
\end{align}
\item For a square matrix $\bX$ of size $x$, define matrices $\bX^{(ul,a)}$, $\bX^{(ur,a)}$, $\bX^{(dl,a)}$ and $\bX^{(dr,a)}$ such that
\begin{align}
\bX=\kbordermatrix{&a&(x-a)\\
a&\bX^{(ul,a)}&\bX^{(ur,a)}\\
(x-a)&\bX^{(dl,a)}&\bX^{(dr,a)}}.
\label{lem-lem-4}
\end{align}
\end{itemize}
We introduce an iterative method to define the transformation $\mathcal{L}_f$.\\
{\bf Step $0$: } If $\bA=\mathrm{0}$ or $N_1=N_d$, the source is in the form of~\eqref{full-rank-rep}. Thus $\mathcal{L}_f(\bs_{i})=\bs_{i}$. Otherwise, continue to next step.\\ 
{\bf Step $1$: } Without loss of generality we assume that the first $N_1$ rows of matrix $\bA$ are independent.\footnote{By rearranging the rows of matrices $\bA$ and $\bB$, this assumption can always be satisfied.}
Let $\bR_{1,0}$ denotes the first $N_1$ rows of $\bA$ and
\begin{align}
{\bA}^{(d,N_1)}=\bV_1\bR_{1,0}
\end{align} where $\bV_1$ is an $(N_d-N_1)\times N_1$ matrix relating dependent rows of $\bA$ to $\bR_{1,0}$.
Also define invertible square matrix $\bM_1$ as  
\begin{align}
\bM_1 \triangleq \kbordermatrix{&{N}_1&(N_d-{N}_1)\\
{N}_1&\mathrm{I}&\mathrm{0}\\
(N_d-{N}_1)&-\bV_1&\mathrm{I}}.
\end{align}Note that 
\begin{align}
\bM_1^{-1}= \kbordermatrix{\\
&\mathrm{I}&\mathrm{0}\\
&\bV_1&\mathrm{I}}.
\end{align} Define 
\begin{align}
\begin{pmatrix}
\bs_{i,1}\\
\bar{\bs}_{i,1}
\end{pmatrix} \triangleq \begin{pmatrix}
(\bM_1 \bs_{i,d})^{(u,N_1)}\\
(\bM_1 \bs_{i,d})^{(d,N_1)}
\end{pmatrix}=\bM_1 \bs_{i,d}.
\end{align} We have 
\begin{align}
&\begin{pmatrix}
\bs_{i,1}\\
\bar{\bs}_{i,1}
\end{pmatrix}
=\begin{pmatrix}
\bM_1\bA& \bM_1\bB\bM_1^{-1} 
\end{pmatrix}\begin{pmatrix}
\bs_{i-1,0}\\
\bM_1\bs_{i-1,d}
\end{pmatrix}\\
&=\begin{pmatrix}
\bR_{1,0}& (\bM_1\bB\bM_1^{-1})^{(ul,N_1)} & (\bM_1\bB\bM_1^{-1})^{(ur,N_1)} \\ 
\mathrm{0}& (\bM_1\bB\bM_1^{-1})^{(dl,N_1)} & (\bM_1\bB\bM_1^{-1})^{(dr,N_1)}
\end{pmatrix}\notag\\
&\hspace{1in}\times \begin{pmatrix}
\bs_{i-1,0}\\
(\bM_1\bs_{i-1,d})^{(u,N_1)}\\
(\bM_1\bs_{i-1,d})^{(d,N_1)}
\end{pmatrix}\\
&=\kbordermatrix{&N_0 &N_1&N_d-N_1\\
N_1&\bR_{1,0}& \bR_{1,1}& \bR'_{1,2} \\ 
N_d-N_1&\mathrm{0}& \bA^{(1)}& \bB^{(1)}} \begin{pmatrix}
\bs_{i-1,0}\\
\bs_{i-1,1}\\
\bar{\bs}_{i-1,1}\label{eq:Appen-1}
\end{pmatrix}
\end{align} where $\bA^{(1)}=(\bM_1\bB\bM_1^{-1})^{(dl,N_1)}$ and $\bB^{(1)}=(\bM_1\bB\bM_1^{-1})^{(dr,N_1)}$ and the other matrices are defined similarly.  Till now $\bs_{i,1}$ is defined. \\
{\bf Step $2$: } Define $N_2\triangleq \textrm{Rank}(\bA^{(1)})$. Generally 
\begin{align}
N_2\le \min\{N_1,N_d-N_1\}\label{N2-min}
\end{align} If $N_2=N_d-N_1$ or if $\bA^{(1)}$ is zero matrix, set $\bs_{i,2}=\bar{\bs}_{i,1}$ and 
\begin{align}
\mathcal{L}_f(\bs_{i})=\begin{pmatrix}
\bs_{i,0}\\
\bs_{i,1}\\
\bs_{i,2}
\end{pmatrix}
\end{align}
If $\bA^{(1)}\neq 0$ and $N_2< N_d-N_1$, again we assume that the first $N_2$ rows of $\bA^{(1)}$ denoted by $\bR_{2,1}$ contains independent rows and
\begin{align}
{\bA}^{(1)(d,N_2)}=\bV_2\bR_{2,1}.
\end{align} Also define invertible matrix $\bM_2$ as
\begin{align}
\bM_2\triangleq \kbordermatrix{&{N}_2&(N_d-{N}_1-N_2)\\
{N}_2&\mathrm{I}&\mathrm{0}\\
(N_d-{N}_1-N_2)&-\bV_2&\mathrm{I}}.
\end{align} and 
\begin{align}
\begin{pmatrix}
{\bs}_{i,2}\\
\bar{\bs}_{i,2}
\end{pmatrix}\triangleq \begin{pmatrix}
(\bM_2\bar{\bs}_{i,1})^{u,N_2}\\
(\bM_2\bar{\bs}_{i,1})^{d,N_2}
\end{pmatrix}=\bM_2\bar{\bs}_{i,1}
\end{align}
We have 
\begin{align}
&\begin{pmatrix}
\bs_{i,1}\\
{\bs}_{i,2}\\
\bar{\bs}_{i,2}
\end{pmatrix}=\notag\\
&\begin{pmatrix}
\bR_{1,0}&\bR_{1,1}&\bR'_{1,2}\bM_2^{-1}\\
\mathrm{0}&\bM_2\bA^{(1)}& {\bM_2}\bB^{(1)}\bM_2^{-1}
\end{pmatrix}\begin{pmatrix}
\bs_{i-1,0}\\
\bs_{i-1,1}\\
\bM_2\bar{\bs}_{i-1,1}
\end{pmatrix}\label{step-2-1}
\end{align}
and \eqref{step-2-1} is equivalent to \eqref{step-2-2} which can be written as 
\begin{figure*}
\begin{align}\label{step-2-2}
\begin{pmatrix}
\bs_{i,1}\\
{\bs}_{i,2}\\
\bar{\bs}_{i,2}
\end{pmatrix}=\begin{pmatrix}
\bR_{1,0}&\bR_{1,1}&(\bR'_{1,2}\bM_2^{-1})^{l,N_2}& (\bR'_{1,2}\bM_2^{-1})^{r,N_2}\\
\mathrm{0}&(\bM_2\bA^{(1)})^{u,N_2}& ({\bM_2}\bB^{(1)}\bM_2^{-1})^{ul,N_2} &({\bM_2}\bB^{(1)}\bM_2^{-1})^{ur,N_2}\\
\mathrm{0}&(\bM_2\bA^{(1)})^{d,N_2}& ({\bM_2}\bB^{(1)}\bM_2^{-1})^{dl,N_2} &({\bM_2}\bB^{(1)}\bM_2^{-1})^{dr,N_2}\\ 
\end{pmatrix}\begin{pmatrix}
\bs_{i-1,0}\\
\bs_{i-1,1}\\
(\bM_2\bar{\bs}_{i-1,1})^{(u,N_2)}\\
(\bM_2\bar{\bs}_{i-1,1})^{(l,N_2)}
\end{pmatrix}
\end{align}
\end{figure*}
\begin{align}
\begin{pmatrix}
\bs_{i,1}\\
{\bs}_{i,2}\\
\bar{\bs}_{i,2}
\end{pmatrix}=\begin{pmatrix}
\bR_{1,0}&\bR_{1,1}&\bR_{1,2}& \bR'_{1,3}\\
\mathrm{0}&\bR_{2,1}& \bR_{2,2}&\bR'_{2,3}\\
\mathrm{0}&\mathrm{0}& {\bA}^{(2)}&\bB^{(2)}\\ 
\end{pmatrix}\begin{pmatrix}
\bs_{i-1,0}\\
\bs_{i-1,1}\\
\bs_{i-1,2}\\
\bar{\bs}_{i-1,1}
\end{pmatrix}
\end{align}
Note that $\bs_{i,2}$ is defined in this step.\\
This procedure can be repeated through next steps until $(K-1)$th step where $\bA^{(K-1)}$ is either full-rank of rank $N_K$ or zero matrix. In this step define $\bR_{K,K-1}=\bA^{(K-1)}$ and $\bs_{i,K}=\bar{\bs}_{i,K-1}$. The result is
\begin{align}
\hat{\bs}_{i}=\mathcal{L}_{f}(\bs_{i})=\begin{pmatrix}
\bs_{i,0}\\
\vdots\\
\bs_{i,K}
\end{pmatrix}.
\end{align}
Similar to \eqref{N1-min} and \eqref{N2-min}, \eqref{ranks} can be verified for all the steps. Note that all the steps are invertible. This completes the proof of lemma~\ref{lem:full-rank}.

\section{Proof of Lemma~\ref{lem:bf}}
\label{app:bf}
Consider a source $\hat{\bs}_{i}$ consisting of $N_0$ innovation bits and $K$ deterministic sub-symbols $\hat{\bs}_{i,d}$ defined in \eqref{full-rank-rep}. The following iterative method characterizes the transformation $\mathcal{L}_{b}$. \\
{\bf Step $0$: } If $\bR_{K,K-1}=\mathrm{0}$, we have 
\begin{align}
\bs_{i,K}&=\bR_{K,K} \bs_{i-1,K}\\
&=\bR_{K,K}^{i+1}\bs_{-1,K}
\end{align} Note that $\bs_{-1}$, and thus $\bs_{-1,K}$, is known at the decoder. Therefore, we can eliminate sub-symbol $\bs_{.,K}$ and consider the source $\hat{\hat{\bs}}_{i}$ with $N_0$ innovation bits and deterministic bits characterized by \eqref{full-rank-rep-2}, at the top of this page, and continue to  next step with $K-1$. Note that knowing $\hat{\hat{\bs}}_{i}$, $\hat{\bs}_{i}$ can be constructed. 
\begin{figure*}
\begin{align}
&\hat{\hat{\bs}}_{i,d}=\begin{pmatrix}
\hat{\hat{\bs}}_{i,1}\\
\hat{\hat{\bs}}_{i,2}\\
\vdots\\
\hat{\hat{\bs}}_{i,K-1}
\end{pmatrix}=\begin{pmatrix}
\bR_{1,0} & \bR_{1,1} & \cdots & \bR_{1,K-2} &\bR_{1,K-1} \\
\mathrm{0} & \bR_{2,1} & \cdots & \bR_{2,K-2} & \bR_{2,K-1} \\
\vdots & \vdots & \ddots & \vdots &\vdots \\
\mathrm{0} & \mathrm{0} & \cdots & \bR_{K-1,K-2} &\bR_{K-1,K-1} 
\end{pmatrix}\begin{pmatrix}
\hat{\hat{\bs}}_{i-1,0}\\
\hat{\hat{\bs}}_{i-1,1}\\
\vdots\\
\hat{\hat{\bs}}_{i-1,K-2}\\
\hat{\hat{\bs}}_{i-1,K-1}
\end{pmatrix}, \label{full-rank-rep-2}
\end{align}    
\end{figure*}

If $\bR_{K,K-1}$ is full-rank of rank $K$, continue to next step.\\
{\bf Step $1$: } Define 
\begin{align}
\begin{pmatrix}
\tilde{\bs}_{i,K-1}\\
\tilde{\bs}_{i,K}
\end{pmatrix}\triangleq \begin{pmatrix}
\bI_{N_{K-1}}& \bX_1\\
\mathrm{0}& \bI_{N_k}
\end{pmatrix}\begin{pmatrix}
\bs_{i,K-1}\\
\bs_{i,K}
\end{pmatrix}
\end{align} and 
\begin{align}
\bD_1 \triangleq \kbordermatrix{&{\sum_{j=0}^{K-2}N_j}&N_{K-1}&N_K\\
{\sum_{j=0}^{K-2}N_j}&\bI& \mathrm{0}&\mathrm{0}\\
N_{K-1}&\mathrm{0}& \bI& \bX_1\\
N_{K}&\mathrm{0}& \mathrm{0}& \bI}
\end{align}  and note that
\begin{align}
\bD^{-1}_1 = \begin{pmatrix}
\bI& \mathrm{0}&\mathrm{0}\\
\mathrm{0}& \bI & -\bX_1\\
\mathrm{0}& \mathrm{0}& \bI
\end{pmatrix}
\end{align} Also $\bX_1$ can be defined such that 
\begin{align}
{\bR}_{K,K}-\bR_{K,K-1}\bX_1=\mathrm{0}.
\end{align}
By these definitions, \eqref{full-rank-rep} can be reformulated to get \eqref{full-rank-rep-3}.
\begin{figure*}
\begin{align}
\begin{pmatrix}
{\bs}_{i,1}\\
{\bs}_{i,2}\\
\vdots\\
\tilde{\bs}_{i,K-1}\\
\tilde{\bs}_{i,K}
\end{pmatrix}&=
\bD_1^{(rd,N_0)}\begin{pmatrix}
\bR_{1,0} & \bR_{1,1} & \cdots & \bR_{1,K-2} &\bR_{1,K-1} &\bR_{1,K}\\
\mathrm{0} & \bR_{2,1} & \cdots & \bR_{2,K-2} & \bR_{2,K-1} &\bR_{2,K}\\
\vdots & \vdots & \ddots & \vdots &\vdots &\vdots\\
\mathrm{0} & \mathrm{0} & \cdots & \bR_{K-1,K-2} &\bR_{K-1,K-1}& \bR_{K-1,K}\\
\mathrm{0} & \mathrm{0} & \cdots & \mathrm{0} & {\bR}_{K,K-1} &{\bR}_{K,K}\\
\end{pmatrix}\bD^{-1}_1\begin{pmatrix}
{\bs}_{i-1,0}\\
{\bs}_{i-1,1}\\
\vdots\\
{\bs}_{i-1,K-2}\\
\tilde{\bs}_{i-1,K-1}\\
\tilde{\bs}_{i-1,K}
\end{pmatrix}\notag\\
&=\begin{pmatrix}
\bR_{1,0} & \bR_{1,1} & \cdots & \bR_{1,K-2} &\tilde{\bR}^{(1)}_{1,K-1} &\tilde{\bR}^{(1)}_{1,K}\\
\mathrm{0} & \bR_{2,1} & \cdots & \bR_{2,K-2} & \tilde{\bR}^{(1)}_{2,K-1} &\tilde{\bR}^{(1)}_{2,K}\\
\vdots & \vdots & \ddots & \vdots &\vdots &\vdots\\
\mathrm{0} & \mathrm{0} & \cdots & \bR_{K-1,K-2} &\tilde{\bR}^{(1)}_{K-1,K-1}& \tilde{\bR}^{(1)}_{K-1,K}\\
\mathrm{0} & \mathrm{0} & \cdots & \mathrm{0} & {\bR}_{K,K-1} &\mathrm{0}\\
\end{pmatrix}\begin{pmatrix}
{\bs}_{i-1,0}\\
{\bs}_{i-1,1}\\
\vdots\\
{\bs}_{i-1,K-2}\\
\tilde{\bs}_{i-1,K-1}\\
\tilde{\bs}_{i-1,K}
\end{pmatrix}, \label{full-rank-rep-3}
\end{align}    
\end{figure*}
Matrices $\tilde{\bR}^{(1)}_{(.,.)}$ can be defined accordingly. \\
{\bf Step $j\in [2:K]$: } Define $l=K-j$. At step $j$, the source is transformed into the form of \eqref{full-rank-rep-4}.
\begin{figure*}
\begin{align}
\begin{pmatrix}
{\bs}_{i,1}\\
\vdots\\
{\bs}_{i,l}\\
\tilde{\bs}_{i,l+1}\\
\tilde{\bs}_{i,l+2}\\
\vdots\\
\tilde{\bs}_{i,K}
\end{pmatrix}&
=\underbrace{\begin{pmatrix}
\bR_{1,0} & \cdots & \bR_{1,l-1} &{\bR}_{1,l} &\tilde{\bR}^{(j-1)}_{1,l+1}&\tilde{\bR}^{(j-1)}_{1,l+2}&\cdots &\tilde{\bR}^{(j-1)}_{1,K-1}&\tilde{\bR}^{(j-1)}_{1,K}\\
\vdots & \ddots & \vdots & \vdots &\vdots &\vdots & \ddots & \vdots&\vdots\\
\mathrm{0}& \cdots & \bR_{l,l-1} &  \bR_{l,l} & \tilde{\bR}^{(j-1)}_{l,l+1} & \tilde{\bR}^{(j-1)}_{l,l+2} &\cdots&\tilde{\bR}^{(j-1)}_{l,K-1}&\tilde{\bR}^{(j-1)}_{l,K}\\
\mathrm{0}& \cdots & \mathrm{0}  &  \bR_{l+1,l} & \tilde{\bR}^{(j-1)}_{l+1,l+1} & \tilde{\bR}^{(j-1)}_{l+1,l+2} &\cdots&\tilde{\bR}^{(j-1)}_{l+1,K-1}&\tilde{\bR}^{(j-1)}_{l+1,K}\\
\mathrm{0}& \cdots & \mathrm{0}  &  \mathrm{0}   & {\bR}_{l+2,l+1} & \mathrm{0} &\cdots&\mathrm{0}&\mathrm{0}\\
\vdots & \ddots & \vdots & \vdots &\vdots &\vdots & \ddots & \vdots& \vdots\\
\mathrm{0} & \cdots & \mathrm{0}  & \mathrm{0} &  \mathrm{0}  &  \mathrm{0} &\cdots & \bR_{K,K-1}&\mathrm{0}\\
\end{pmatrix}}_{\Psi^{(j-1)}}\begin{pmatrix}
{\bs}_{i-1,0}\\
\vdots\\
{\bs}_{i-1,l-1}\\
{\bs}_{i-1,l}\\
\tilde{\bs}_{i-1,l+1}\\
\tilde{\bs}_{i-1,l+2}\\
\vdots\\
\tilde{\bs}_{i-1,K-1}\\
\tilde{\bs}_{i-1,K}
\end{pmatrix}, \label{full-rank-rep-4}
\end{align}    
\end{figure*} Now define 
\begin{align}
\bD_j \triangleq \kbordermatrix{&{\sum_{j=0}^{l-1}N_j}&N_{l}&N_{l+1}&\cdots & N_K\\
{\sum_{j=0}^{l-1}N_j}&\bI& \mathrm{0}&\mathrm{0}&\cdots&\mathrm{0}\\
N_{l}&\mathrm{0}& \bI& \bX_{1,j}&\cdots&\bX_{j,j}\\
N_{l+1}&\mathrm{0}& \mathrm{0}& \bI &\cdots &\mathrm{0}\\
\vdots&\vdots&\vdots&\vdots&\ddots&\vdots\\
N_{K}&\mathrm{0}& \mathrm{0}&\mathrm{0} &\cdots &\bI\\
}
\end{align} and note that 
\begin{align}
\bD_j^{-1}= \begin{pmatrix}
\bI& \mathrm{0}&\mathrm{0}&\cdots&\mathrm{0}\\
\mathrm{0}& \bI& -\bX_{1,j}&\cdots&-\bX_{j,j}\\
\mathrm{0}& \mathrm{0}& \bI &\cdots &\mathrm{0}\\
\vdots&\vdots&\vdots&\ddots&\vdots\\
\mathrm{0}& \mathrm{0}&\mathrm{0} &\cdots &\bI\\
\end{pmatrix}.
\end{align}
Also define 
\begin{align}
\tilde{\bs}_{i,l} \triangleq \begin{pmatrix}
\bI& \bX_{1,j}&\bX_{2,j}& \cdots&\bX_{j,j}\\
\end{pmatrix}\begin{pmatrix}
{\bs}_{i,l}\\
\tilde{\bs}_{i,l+1}\\
\tilde{\bs}_{i,l+2}\\
\vdots\\
\tilde{\bs}_{i,K}
\end{pmatrix}
\end{align} By these definitions, \eqref{full-rank-rep-4} reduces to
\begin{align}
\begin{pmatrix}
{\bs}_{i,1}\\
\vdots\\
\bs_{i,l-1}\\
\tilde{\bs}_{i,l}\\
\tilde{\bs}_{i,l+1}\\
\tilde{\bs}_{i,l+2}\\
\vdots\\
\tilde{\bs}_{i,K}
\end{pmatrix}&
=\bD_{j}^{(dr,N_0)}\Psi^{(j-1)}\bD_{j}^{-1}\begin{pmatrix}
{\bs}_{i-1,0}\\
\vdots\\
{\bs}_{i-1,l-1}\\
\tilde{\bs}_{i-1,l}\\
\tilde{\bs}_{i-1,l+1}\\
\tilde{\bs}_{i-1,l+2}\\
\vdots\\
\tilde{\bs}_{i,K}
\end{pmatrix}, \label{full-rank-rep-5}
\end{align} By defining $\bX_{k,j}$s such that for each $k\in\{1,2,\ldots,j\}$
\begin{align}
\tilde{\bR}^{(j-1)}_{l+1,l+k}-\bR_{l+1,l}\bX_{k,j}=\mathrm{0},
\end{align} it is not hard to see that \eqref{full-rank-rep-5} can be rewritten as \eqref{full-rank-rep-6} whose $(l+1)$th row is block-diagonalized. 
\begin{figure*}
\begin{align}
\begin{pmatrix}
{\bs}_{i,1}\\
\vdots\\
\tilde{\bs}_{i,l}\\
\tilde{\bs}_{i,l+1}\\
\tilde{\bs}_{i,l+2}\\
\vdots\\
\tilde{\bs}_{i,K}
\end{pmatrix}&
=\underbrace{\begin{pmatrix}
\bR_{1,0} & \cdots & \bR_{1,l-1} &\tilde{\bR}^{(j)}_{1,l} &\tilde{\bR}^{(j)}_{1,l+1}&\tilde{\bR}^{(j)}_{1,l+2}&\cdots &\tilde{\bR}^{(j)}_{1,K-1}&\tilde{\bR}^{(j)}_{1,K}\\
\vdots & \ddots & \vdots & \vdots &\vdots &\vdots & \ddots & \vdots&\vdots\\
\mathrm{0}& \cdots & \bR_{l,l-1} &  \tilde{\bR}^{(j)}_{l,l} & \tilde{\bR}^{(j)}_{l,l+1} & \tilde{\bR}^{(j)}_{l,l+2} &\cdots&\tilde{\bR}^{(j)}_{l,K-1}&\tilde{\bR}^{(j)}_{l,K}\\
\mathrm{0}& \cdots & \mathrm{0}  &  \bR_{l+1,l} &  \mathrm{0}  &  \mathrm{0}  &\cdots&  \mathrm{0}  &  \mathrm{0} \\
\mathrm{0}& \cdots & \mathrm{0}  &  \mathrm{0}   & {\bR}_{l+2,l+1} & \mathrm{0} &\cdots&\mathrm{0}&\mathrm{0}\\
\vdots & \ddots & \vdots & \vdots &\vdots &\vdots & \ddots & \vdots& \vdots\\
\mathrm{0} & \cdots & \mathrm{0}  & \mathrm{0} &  \mathrm{0}  &  \mathrm{0} &\cdots & \bR_{K,K-1}&\mathrm{0}\\
\end{pmatrix}}_{\Psi^{(j)}}\begin{pmatrix}
{\bs}_{i-1,0}\\
\vdots\\
{\bs}_{i-1,l-1}\\
\tilde{\bs}_{i-1,l}\\
\tilde{\bs}_{i-1,l+1}\\
\tilde{\bs}_{i-1,l+2}\\
\vdots\\
\tilde{\bs}_{i-1,K-1}\\
\tilde{\bs}_{i-1,K}
\end{pmatrix}, \label{full-rank-rep-6}
\end{align}    
\end{figure*}

After these steps, the source $\bs_{i}$ is changed into the diagonally correlated Markov source $\tilde{\bs}_{i}$, with $N_0$ innovation bits $\tilde{\bs}_{i,0}$ and deterministic bits as \eqref{block-diag-rep}.
\begin{figure*}
\begin{align}
&\tilde{\bs}_{i,d}=\begin{pmatrix}
\tilde{\bs}_{i,1}\\
\tilde{\bs}_{i,2}\\
\vdots\\
\tilde{\bs}_{i,K-1}\\
\tilde{\bs}_{i,K}
\end{pmatrix}=\begin{pmatrix}
\bR_{1,0} & \mathrm{0} & \cdots & \mathrm{0} &\mathrm{0} \\
\mathrm{0} & \bR_{2,1} & \cdots & \mathrm{0} & \mathrm{0} \\
\vdots & \vdots & \ddots & \vdots &\vdots \\
\mathrm{0} & \mathrm{0} & \cdots & \bR_{K-1,K-2} &\mathrm{0} \\
\mathrm{0} & \mathrm{0} & \cdots & \mathrm{0} & {\bR}_{K,K-1} 
\end{pmatrix}\begin{pmatrix}
\tilde{\bs}_{i-1,0}\\
\tilde{\bs}_{i-1,1}\\
\vdots\\
\tilde{\bs}_{i-1,K-2}\\
\tilde{\bs}_{i-1,K-1}
\end{pmatrix}, \label{block-diag-rep}
\end{align}    
\end{figure*}
All the steps are invertible and this completes the proof.

\section{Generalization to non-integer $R_j$}
\label{app:gen}
Assume that $\tilde{R}_{j}$ are rational\footnote{For irrational rates, consider a rational numbers in the $\epsilon$-neighborhood of $\tilde{R}_{j}$s as $\epsilon\to 0$.}. There exists an $0< \alpha <1$ such that $\tilde{R}'_{j}=\tilde{R}_{j}/\alpha$ is integer for each $j\in\{0,1,\dots,B\}$. Define $n'=\alpha n$. Each codeword $\rvb_{i,j}\in\{1,\ldots,2^{n\tilde{R}_{j}}\}=\{1,\ldots,2^{n'\tilde{R}'_{j}}\}$ can be represented by $n'$ i.i.d. $\tilde{R}'_{j}$-length bit-sequences $\bb'^{n'}_{i,j}$. Similarly, for $j\in\{1,2,\ldots,B\}$ define 
\begin{align}
&R'_{j}=\sum_{k=j}^{B}\tilde{R}'_{k}=\sum_{k=j}^{B}\frac{\tilde{R}_{k}}{\alpha}= \frac{1}{2\alpha}\log(\frac{1}{d_{W+j}})\\
&R'_{0}=\sum_{k=0}^{B}\tilde{R}'_{k}= \sum_{k=0}^{B}\frac{\tilde{R}_{k}}{\alpha}= \frac{1}{2\alpha}\log(\frac{1}{d_{0}}).   
\end{align} Now the same coding scheme can be applied to get the rate 
\begin{align}
nR&=n'R'_{0}+\frac{1}{W+1}\sum_{k=1}^{B}n'R'_{k}\\
&=\frac{n}{2}\log(\frac{1}{d_{0}})+\frac{n}{2(W+1)}\sum_{k=1}^{B}\log(\frac{1}{d_{W+j}}).
\end{align}

\section{Proof of Claim~\ref{claim:Gauss-Ent}}
\label{app:Gauss-Ent}
We need to lower bound $H(\rvf_{kp+B}^{(k+1)p-1}|\rvf_{0}^{kp-1})$. Consider

\begin{align}
&H(\rvf_{kp+B}^{(k+1)p-1}|\rvf_{0}^{kp-1}) \\
&=I(\rvf_{kp+B}^{(k+1)p-1};\rvt_{(k+1)p-1}^n|\rvf_{0}^{kp-1}) \notag \\&\qquad+ H(\rvf_{kp+B}^{(k+1)p-1}|\rvf_{0}^{kp-1}, \rvt_{(k+1)p-1}^n)\\
&=h(\rvt_{(k+1)p-1}^n|\rvf_{0}^{kp-1}) -h(\rvt_{(k+1)p-1}^n|\rvf_{0}^{kp-1}, \rvf_{kp+B}^{(k+1)p-1}) \notag\\ 
&\qquad +  H(\rvf_{kp+B}^{(k+1)p-1}|\rvf_{0}^{kp-1}, \rvt_{(k+1)p-1}^n)\notag\\
&=h(\rvt_{(k+1)p-1}^n) -h(\rvt_{(k+1)p-1}^n|\rvf_{0}^{kp-1}, \rvf_{kp+B}^{(k+1)p-1}) \notag\\ 
&\qquad +  H(\rvf_{kp+B}^{(k+1)p-1}|\rvf_{0}^{kp-1}, \rvt_{(k+1)p-1}^n) \label{eq:exp1}
\end{align}
where~\eqref{eq:exp1} follows since ${\rvt_{(k+1)p-1}^n = (\rvs_{kp}^n,\ldots, \rvs_{(k+1)p-1}^n)}$ is independent of $\rvf_0^{kp-1}$ as the source sequences $\rvs_i^n$ are generated i.i.d.\ . By expanding $\rvt_{(k+1)p-1}^n$ we have that
\begin{align}
h(\rvt_{(k+1)p-1}^n)  &= h(\rvs_{kp}^n, \ldots, \rvs_{kp+B-1}^n) \notag\\ &+ h(\rvs_{kp+B}^n, \ldots, \rvs_{(k+1)p-1}^n), \label{eq:rel1}
\end{align}and
\begin{align}
&h(\rvt_{(k+1)p-1}^n|\rvf_{0}^{kp-1}, \rvf_{kp+B}^{(k+1)p-1}) \notag\\&=h(\rvs_{kp}^n, \ldots, \rvs_{kp+B-1}^n|\rvf_{0}^{kp-1}, \rvf_{kp+B}^{(k+1)p-1})+\notag\\&
\!h\!(\!\rvs_{kp+B}^n\!,\! \ldots,\! \rvs_{(k+1)p-1}^n\!|\!\rvf_{0}^{kp-1}\!,\! \rvf_{kp+B}^{(k+1)p-1}\!,\!\rvs_{kp}^n, \ldots\!,\! \rvs_{kp+B-1}^n\!) \label{eq:rel2}
\end{align}

We next show the following
\begin{multline}
h(\rvs_{kp}^n, \ldots, \rvs_{kp+B-1}^n)-\\h(\rvs_{kp}^n, \ldots, \rvs_{kp+B-1}^n|\rvf_{0}^{kp-1}, \rvf_{kp+B}^{(k+1)p-1})
\\ \ge \sum_{i=1}^{B}\frac{n}{2}\log{(\frac{1}{d_{W+i}})}\label{eq:Gauss_LB_T1}
\end{multline}
and that
\begin{align}
&h(\rvs_{kp+B}^n, \ldots, \rvs_{(k+1)p-1}^n|\rvf_{0}^{kp-1},\rvs_{kp}^n, \ldots, \rvs_{kp+B-1}^n)-\notag\\ &\!\!h(\!\rvs_{kp+B}^n, \!\ldots,\!\! \rvs_{(k+1)p-1}^n\!|\! \rvf_{0}^{kp-1},\!\rvf_{kp+B}^{(k+1)p-1}\!,\rvs_{kp}^n, \!\ldots, \!\rvs_{kp+B-1}^n\!)\notag\\ &+ H(\rvf_{kp+B}^{(k+1)p-1}|\rvf_{0}^{kp-1}, \rvt_{(k+1)p-1}^n) \ge \frac{n(W+1)}{2}\log(\frac{1}{d_0})\label{eq:Gauss_LB_T2}
\end{align}
The proof of {Claim~\ref{claim:Gauss-Ent}} follows from~\eqref{eq:exp1},~\eqref{eq:rel1},~\eqref{eq:rel2},~\eqref{eq:Gauss_LB_T1} and~\eqref{eq:Gauss_LB_T2}.

To establish~\eqref{eq:Gauss_LB_T1} observe that from the fact that conditioning reduces the differential entropy,
\begin{align}
&h(\rvs_{kp}^n, \ldots, \rvs_{kp+B-1}^n)\notag\\
&\hspace{1cm}-h(\rvs_{kp}^n, \ldots, \rvs_{kp+B-1}^n|\rvf_{0}^{kp-1}, \rvf_{kp+B}^{(k+1)p-1})\notag\\
&\ge \sum_{i=0}^{B-1}\left(h(\rvs_{kp+i}^n)-h(\rvs_{kp+i}^n|\rvf_{0}^{kp-1}, \rvf_{kp+B}^{(k+1)p-1})\right)\label{eq:exp3}
\end{align}
We show that for each $i=0,1,\ldots, B-1$
\begin{align}
h(\rvs_{kp+i}^n)\!-\!h(\rvs_{kp+i}^n|\rvf_{0}^{kp-1}, \rvf_{kp+B}^{(k+1)p-1})\! \!\ge\!\! \frac{n}{2}\log \left(\frac{1}{d_{B+W-i}}\right),\label{eq:Gauss_RD_Bound}
\end{align}
which then establishes~\eqref{eq:Gauss_LB_T1}.

Recall that since there is an erasure burst between time $t \in [kp, kp+B-1]$ the receiver is required to reconstruct \begin{align}\hat{\bt}^n_{(k+1)p-1} = \left[\hat{\bs}_{kp + B+W}^n, \ldots, \hat{\bs}_{kp}^n\right]\end{align}
with a distortion vector $(d_0,\ldots, d_{B+W})$ i.e., a reconstruction of $\hat{\bs}_{kp+i}^n$ is desired with a distortion of $d_{B+W-i}$ for $i=0,1,\ldots, B+W$ when the decoder is revealed $(\rvf_{0}^{kp-1}, \rvf_{kp+B}^{(k+1)p-1})$. Hence
\begin{align}
&h(\rvs_{kp+i}^n) - h(\rvs_{kp+i}^n|\rvf_{0}^{kp-1}, \rvf_{kp+B}^{(k+1)p-1})\notag\\
&=h(\rvs_{kp+i}^n) - h(\rvs_{kp+i}^n|\rvf_{0}^{kp-1}, \rvf_{kp+B}^{(k+1)p-1}, \{\hat{\rvs}^n_{kp+i}\}_{d_{B+W-i}})\\
&\ge h(\rvs_{kp+i}^n) - h(\rvs_{kp+i}^n|\{\hat{\rvs}^n_{kp+i}\}_{d_{B+W-i}})\\
&\ge h(\rvs_{kp+i}^n) - h(\rvs_{kp+i}^n-\{\hat{\rvs}^n_{kp+i}\}_{d_{B+W-i}})\label{eq:Gauss_toSub}
\end{align}
Since we have that 
\begin{align}
E\left[ \frac{1}{n}\sum_{j=1}^n (\rvs_{kp+i,j},\hat{\rvs}_{kp+i,j})^2 \right] \le d_{B+W-i}
\end{align}
It follows from standard arguments that~\cite[Chapter 13]{coverThomas} that
\begin{equation}
h(\rvs_{kp+i}^n-\{\hat{\rvs}^n_{kp+i}\}_{d_{B+W-i}}) \le \frac{n}{2}\log {2\pi e}{(d_{B+W-i})}. \label{eq:Gauss_Jensen}
\end{equation}

Substituting~\eqref{eq:Gauss_Jensen} into~\eqref{eq:Gauss_toSub} and the fact that $h(\rvs_{kp+i}^n)= \frac{n}{2}\log 2\pi e$ establishes~\eqref{eq:Gauss_RD_Bound}.

It finally remains to establish~\eqref{eq:Gauss_LB_T2}.
\begin{align}
&h(\rvs_{kp+B}^n, \ldots, \rvs_{(k+1)p-1}^n|\rvf_{0}^{kp-1},\rvs_{kp}^n, \ldots, \rvs_{kp+B-1}^n)-\notag\\ &\!\!h(\!\rvs_{kp+B}^n, \!\ldots,\!\! \rvs_{(k+1)p-1}^n\!|\! \rvf_{0}^{kp-1},\!\rvf_{kp+B}^{(k+1)p-1},\!\rvs_{kp}^n, \!\ldots, \!\rvs_{kp+B-1}^n\!)\notag\\ &+ H(\rvf_{kp+B}^{(k+1)p-1}|\rvf_{0}^{kp-1}, \rvt_{(k+1)p-1}^n) \notag\\
&\!\!=\!\!I(\!\rvs_{kp+B}^n, \!\ldots, \!\rvs_{(k+1)p-1}^n\!;\!\rvf_{kp+B}^{(k+1)p-1}\!|\!\rvf_{0}^{kp-1},\rvs_{kp}^n, \!\ldots, \!\rvs_{kp+B-1}^n\!)\notag\\ &\quad + H(\rvf_{kp+B}^{(k+1)p-1}|\rvf_{0}^{kp-1}, \rvt_{(k+1)p-1}^n)\\
&=H(\rvf_{kp+B}^{(k+1)p-1}|\rvf_{0}^{kp-1},\rvs_{kp}^n, \ldots, \rvs_{kp+B-1}^n)\\
&\!\ge \!I(\!\rvf_{kp+B}^{(k+1)p-1}\!;\!\rvs_{kp+B}^n,\!\! \ldots,\!\! \rvs_{(k+1)p-1}^n\!|\!\rvf_{0}^{kp+B-1},\!\rvs_{kp}^n,\!\! \ldots, \!\!\rvs_{kp+B-1}^n\!)
\end{align}

The above mutual information term can be bounded as follows:
\begin{align}
\!&\!h(\!\rvs_{kp+B}^n, \!\ldots, \!\rvs_{(k+1)p-1}^n\!|\!\rvf_{0}^{kp+B-1},\!\rvs_{kp}^n,\! \ldots, \!\rvs_{kp+B-1}^n\!)\notag\\ 
&~-h(\rvs_{kp+B}^n, \ldots, \rvs_{(k+1)p-1}^n|\rvf_{0}^{(k+1)p-1},\rvs_{kp}^n, \ldots, \rvs_{kp+B-1}^n)\notag\\
\!&=\!h(\!\rvs_{kp+B}^n, \!\ldots, \!\rvs_{(k+1)p-1}^n\!)\notag\\ 
&~-h(\rvs_{kp+B}^n, \ldots, \rvs_{(k+1)p-1}^n|\rvf_{0}^{(k+1)p-1},\rvs_{kp}^n, \ldots, \rvs_{kp+B-1}^n)\label{eq:Gauss_Indep}\\
&\ge h(\rvs_{kp+B}^n, \ldots, \rvs_{(k+1)p-1}^n)-  \notag \\ 
&\qquad h(\rvs_{kp+B}^n, \ldots, \rvs_{(k+1)p-1}^n|\{\hat{\rvs}_{kp}^n\}_{d_0}, \ldots, \{\hat{\rvs}_{(k+1)p-1}^n\}_{d_0})\label{eq:Gauss_Reconstr_d0}\\
&\ge \sum_{i=0}^{W}\left(h(\rvs_{kp+B+i}^n)-h(\rvs_{kp+B+i}^n-\{\hat{\rvs}_{kp+B+i}^n\}_{d_0}\right)\notag\\
&\ge \sum_{i=0}^{W}\frac{n}{2}\log (\frac{1}{d_0})=\frac{n(W+1)}{2}\log(\frac{1}{d_0})\label{eq:exp9}
\end{align}
where~\eqref{eq:Gauss_Indep} follows from the independence of $(\rvs_{kp+B}^n, \!\ldots, \!\rvs_{(k+1)p-1}^n)$ from the past sequences, and~\eqref{eq:Gauss_Reconstr_d0} follows from the fact that given the entire past $f_0^{(k+1)p-1}$ each source sub-sequence needs to be reconstructed with a distortion of $d_0$ and the last step  follows from the standard approach in the proof of the rate-distortion theorem. This establishes~\eqref{eq:Gauss_LB_T2}.

\bibliographystyle{IEEEtran}	
\bibliography{sm}		

\begin{thebibliography}{10}
\providecommand{\url}[1]{#1}
\csname url@samestyle\endcsname
\providecommand{\newblock}{\relax}
\providecommand{\bibinfo}[2]{#2}
\providecommand{\BIBentrySTDinterwordspacing}{\spaceskip=0pt\relax}
\providecommand{\BIBentryALTinterwordstretchfactor}{4}
\providecommand{\BIBentryALTinterwordspacing}{\spaceskip=\fontdimen2\font plus
\BIBentryALTinterwordstretchfactor\fontdimen3\font minus
  \fontdimen4\font\relax}
\providecommand{\BIBforeignlanguage}[2]{{%
\expandafter\ifx\csname l@#1\endcsname\relax
\typeout{** WARNING: IEEEtran.bst: No hyphenation pattern has been}%
\typeout{** loaded for the language `#1'. Using the pattern for}%
\typeout{** the default language instead.}%
\else
\language=\csname l@#1\endcsname
\fi
#2}}
\providecommand{\BIBdecl}{\relax}
\BIBdecl

\bibitem{berger71}
T.~Berger, \emph{Rate Distortion Theory, A Mathematical Basis for Data
  Compression.}\hskip 1em plus 0.5em minus 0.4em\relax NJ: Prentice-Hall
  Englewood Cliffs, 1971.

\bibitem{tan}
W.~Tan and A.~Zakhor, ``Video multicast using layered {FEC} and scalable
  compression,'' \emph{IEEE Transactions on Circuits and Systems for Video
  Technology}, vol.~11, pp. 373--386, 2001.

\bibitem{li:10}
Z.~Li, A.~Khisti, and B.~Girod, ``Forward error correction for low-delay packet
  video,'' in \emph{{Packet Video Workshop}}, 2010.

\bibitem{Huang:08}
Y.~Huang, Y.~Kochman, and G.~Wornell, ``Causal transmission of colored source
  frames over a packet erasure channel,'' in \emph{{DCC}}, 2010, pp. 129--138.

\bibitem{Connor:73}
D.~Connor, ``Techniques for reducing the visibility of transmission errors in
  digitally encoded video signals,'' \emph{{IEEE Trans. Comm.}}, vol.~21, pp.
  695--706, 1973.

\bibitem{Sehgal}
A.~Sehgal, A.~Jagmohan, and N.~Ahuja, ``Wyner-ziv coding of video: An
  error-resilient compression framework,'' \emph{{IEEE Trans. on Multimedia}},
  vol.~6, pp. 249--258, 2004.

\bibitem{pradhanRamchandran:03}
S.~S. Pradhan and K.~Ramchandran, ``Distributed source coding using syndromes
  ({DISCUS}): Design and construction,'' \emph{IEEE Trans.\ Inform.\ Theory},
  vol.~49, pp. 626--643, Mar. 2003.

\bibitem{berger}
H.~Viswanathan and T.~Berger, ``Sequential coding of correlated sources,''
  \emph{Information Theory, IEEE Transactions on}, vol.~46, no.~1, pp. 236
  --246, jan 2000.

\bibitem{ishwar}
N.~Ma and P.~Ishwar, ``On delayed sequential coding of correlated sources,''
  \emph{IEEE Trans.\ Inform.\ Theory}, vol.~57, pp. 3763--3782, 2011.

\bibitem{yang}
E.~Yang, L.~Zheng, D.~He, and Z.~Zhang, ``Rate distortion theory for causal
  video coding: Characterization, computation algorithm, and comparison,''
  \emph{IEEE Trans.\ Inform.\ Theory}, vol.~57, pp. 5258--5280, 2011.

\bibitem{Wang:06}
J.~Wang, V.~Prabhakaran, and K.~Ramchandran, ``Syndrome-based robust video
  transmission over networks with bursty losses,'' in \emph{{ICIP}}, Atlanta,
  GA, 2006.

\bibitem{berger:77}
T.~Berger, ``Multiterminal source coding,'' in \emph{The Information Theory
  Approach to Communications}, G.~Longo, Ed.\hskip 1em plus 0.5em minus
  0.4em\relax Springer-Verlag, 1977, ch.~4.

\bibitem{slepianWolf:73}
D.~Slepian and J.~K. Wolf, ``Noiseless coding of correlated information
  sources,'' \emph{IEEE Trans.\ Inform.\ Theory}, vol.~19, pp. 471--480, July
  1973.

\bibitem{markov}
D.~Aldous, ``{Reversible Markov Chains and Random Walks on Graphs (Chapter
  3)},'' unpublished notes, available at
  {http://www.stat.berkeley.edu/~aldous/RWG/book.html}, Sep. 2002.

\bibitem{zigZag}
I.~Csiszar and J.~Korner, ``Towards a general theory of source networks,''
  \emph{Information Theory, IEEE Transactions on}, vol.~26, no.~2, pp. 155 --
  165, mar 1980.

\bibitem{oohama:gaussian}
Y.~Oohama, ``Gaussian zig-zag source coding problem.'' \emph{IEICE
  Transactions}, pp. 1889--1903, 2010.

\bibitem{csiszarKorner}
I.~Csisz{\'a}r and J.~K{\"o}rner, \emph{Information Theory, Coding Theorems for
  Discrete Memoryless Systems}.\hskip 1em plus 0.5em minus 0.4em\relax
  Akad{\'e}miai Kiad{\'o}, 1981.

\bibitem{equitzCover:91}
W.~H. Equitz and T.~M. Cover, ``Successive refinement of information,''
  \emph{IEEE Trans.\ Inform.\ Theory}, vol.~37, pp. 269--275, Mar. 1991.

\bibitem{rimoldi:94}
B.~Rimoldi, ``Successive refinement of information: Characterization of the
  achievable rates,'' \emph{IEEE Trans.\ Inform.\ Theory}, vol.~40, pp.
  253--259, Jan. 1994.

\bibitem{coverThomas}
T.~M. Cover and J.~A. Thomas, \emph{Elements of Information Theory}.\hskip 1em
  plus 0.5em minus 0.4em\relax John Wiley and Sons, 1991.

\end{thebibliography}

\end{document}